\newtheorem{theorem}{Theorem}
\newtheorem{corollary}[theorem]{Corollary}
\newtheorem{lemma}[theorem]{Lemma}
\newtheorem{proposition}[theorem]{Proposition}
\newtheorem{remark}{Remark}
\newtheorem{definition}{Definition}
\newtheorem{example}{Example}
\newcommand{\Logic}[1]{\ensuremath{\mathcal L_{#1}}}
\newcommand{\Liocos}{\ensuremath{\mathcal L_{\iocos}}}
\newcommand{\tLiocos}{\ensuremath{\widetilde{\mathcal L}_{\iocos}}}
\newcommand{\mLiocos}{\ensuremath{\mathcal L_{\iocos}^{\mu}}}
\newcommand{\nLiocos}{\ensuremath{\widetilde{\mathcal L}_{\iocos}^{\nu}}}
\newcommand{\eqLiocos}{\ensuremath{\mathcal L_{\iocos}^{\equiv}}}
\newcommand{\Lioco}{\ensuremath{\mathcal L_{\ioco}}}
\newcommand{\sem}[1]{\ensuremath{\llparenthesis #1 \rrparenthesis}}
\newcommand{\leqiocos}{\ensuremath{\leq_{\Liocos}}}
\newcommand{\tleqiocos}{\ensuremath{\leq_{\tLiocos}}}
\newcommand{\leqlogic}[1]{\ensuremath{\leq_{#1}}}
\newcommand{\false} {\ensuremath{\mathbf{ff}}\xspace}
\newcommand{\true}  {\ensuremath{\mathbf{tt}}\xspace}
\newcommand{\univ}[1] {\ensuremath{[\, #1\, ]}\xspace}
\newcommand{\uuniv}[1]{\ensuremath{\llbracket #1 \rrbracket}\xspace}
\newcommand{\exis}[1] {\ensuremath{\langle #1 \rangle}\xspace}
\newcommand{\eexis}[1]{\ensuremath{\langle\!| #1 |\!\rangle}\xspace}
\newcommand{\uioco}[1]{\ensuremath{\llfloor #1 \rrfloor}\xspace}
\newcommand{\Lcoco}{\ensuremath{\mathcal L_{CC}}}
\newcommand{\Lconf}{\ensuremath{\mathcal L_{CS}}}
\newcommand{\hml}{\ensuremath{\mbox{HML}}}
\newcommand{\mucal}{\ensuremath{\mu\mbox{-calculus}}}
\newcommand{\tra}{\mathcal{T}}
\newcommand{\Fiocos}{\mathcal{F}_{io}}
\newcommand{\coco}{\ensuremath{\lesssim_{CC}}\xspace}
\newcommand{\conf}{\ensuremath{\lesssim_{CS}}\xspace}
\newcommand{\mCRL}{\textbf{mCRL2}\xspace}
\newcommand{\LTS}{\ensuremath{\mathit{LTS}}\xspace}
\newcommand{\after}{\mathbin{\mathsf{after}}}
\newcommand{\traces}{\mathsf{traces}}
\newcommand{\outs}{\mathsf{outs}}
\newcommand{\ins}{\mathsf{ins}}
\newcommand{\Out}{\mathsf{Out}}
\algnewcommand\algorithmicchoice{\textbf{choice}}
\algnewcommand\algorithmicbranch{\textbf{case}}
\newbox\arriba
\newbox\abajo
\newbox\CaracterInterno
\newbox\CaracterDerecha
\newdimen\anchura
\def\MacrosTranGeneral#1#2#3#4#5#6{%
  \setbox\CaracterInterno=\hbox{\mathsurround=0pt$\mathord#4$}
  \setbox\CaracterDerecha=\hbox{\mathsurround=0pt$\mathord#3$}
  \setbox\arriba=\hbox{$#1#2$}
  \setbox\abajo=\hbox{\mathsurround=0pt%
                      \anchura=\wd\arriba%
                      \advance \anchura by 0.5em%
                      \divide \anchura by \wd\CaracterInterno%
                      \multiply \anchura by \wd\CaracterInterno%
                      \copy\CaracterInterno\kern\SeparacionInternaFlecha
                      \hbox to \anchura{%
                          $\cleaders%
                            \hbox{\kern\SeparacionInternaFlecha\copy\CaracterInterno}
                            \hfill$}%
                      \kern\SeparacionExternaFlecha\copy\CaracterDerecha}
  \mathrel{{\buildrel\vbox{\copy\arriba \kern\SeparacionFlechaArriba} %
    \over{\copy\abajo^{#6}}}_{#5}}
  }
\def\MacrosTranGeneralProp#1#2#3#4#5{\mathchoice%
  {\MacrosTranGeneral{\scriptstyle}{#1}{#2}{#3}{#4}{#5}}
  {\MacrosTranGeneral{\scriptstyle}{#1}{#2}{#3}{#4}{#5}}
  {\MacrosTranGeneral{\scriptscriptstyle}{#1}{#2}{#3}{#4}{#5}}
  {\MacrosTranGeneral{\scriptscriptstyle}{#1}{#2}{#3}{#4}{#5}}}
\def\MacrosTran#1{%
  \def\SeparacionInternaFlecha{-0.3em}
  \def\SeparacionExternaFlecha{-0.5em}
  \def\SeparacionFlechaArriba{-3pt}
  \MacrosTranGeneralProp{#1}{\rightarrow}{-}{}{}}
\def\MacrosNoTran#1{%
  \def\SeparacionInternaFlecha{-0.3em}
  \def\SeparacionExternaFlecha{-0.5em}
  \def\SeparacionFlechaArriba{-3pt}
  \MacrosTranGeneralProp{#1\kern 0.3em}{{\not\rightarrow}}{-}{}{}}
\def\tran#1{\ensuremath\mathbin{\MacrosTran{#1}}}
\def\notran#1{\ensuremath\mathbin{\MacrosNoTran{#1}}}
\newcommand{\comen}[1]{}
\newcommand{\iocos}{\ensuremath{\mathbin{\mathsf{ioco\underline{s} }}}\xspace}
\newcommand{\niocos}{\ensuremath{\mathbin{\kern 1em\hbox to 0pt{/\hss}\kern-1em \mathsf{ioco\underline{s} }}}\xspace}
\newcommand{\iocoseq}{\ensuremath{\mathbin{\mathsf{ioco\underline{s}_{\equiv} }}}\xspace}
\newcommand{\ioco}{\ensuremath{\mathbin{\mathsf{ioco}}}\xspace}
\newcommand{\nioco}{\ensuremath{\mathbin{\kern 1em\hbox to 0pt{/\hss}\kern-1em \mathsf{ioco}}}\xspace}
\newcommand{\fail}{\lower2pt\hbox{\fontsize{8}{10}\selectfont\XSolidBrush}}
\newcommand{\nopass}{\mathbin{\kern1em\hbox to 0pt{/\hss}\kern-1em
\mathsf{pass}}}
\newcommand{\noltest}{\ensuremath{\mathbin{\kern0.5em/\kern-1em
\sqsubseteq_T}}}
\newcommand{\arr}[1]{\vec{#1}}
\newcommand{\sosrule}[2]{\frac{\raisebox{.7ex}{\normalsize{$#1$}}}{\raisebox{-1.0ex}{\normalsize{$#2$}}}}
\newcommand{\nega}{\mathsf{neg}}
\def\lparal{\mathbin{\setbox0=\hbox{$\|$}%
		\dimen0=\dp0 \advance\dimen0 -1.5pt \dp0=\dimen0%
		\underline{\kern-1.5pt\box0\kern1.5pt}}}
\def\eop{
	\ifmmode {\hbox{\Bbox}} \else \Bbox \fi
}
\journal{JLAMP}
\begin{document}
\begin{frontmatter}

\title{Logical characterisations, rule formats and compositionality
  for input-output conformance simulation\tnoteref{label1}}
\tnotetext[label1]{Research partially supported by the projects
  {Nominal SOS} (project nr.~141558-051) and Open Problems in the
  Equational Logic of Processes (project nr.~196050-051) of the
  Icelandic Research Fund; the Spanish projects DArDOS
  (TIN2015-65845-C3-1-R), TRACES (TIN2015-67522-C3-3-R) and
  Comunidad de Madrid FORTE-CM (P2018/TCS-4314);
  the project \emph{MATHADOR} (COGS
  724.464) of the European Research Council, and the Spanish addition
  to \emph{MATHADOR} (TIN2016-81699-ERC).}

\author[l1,l2]{Luca Aceto}
\ead{luca@ru.is,luca.aceto@gssi.it}
\author[l3]{Ignacio Fábregas}
\ead{ignacio.fabregas@imdea.org}
\author[l4]{Carlos Gregorio-Rodríguez}
\ead{cgr@sip.ucm.es}
\author[l2]{Anna Ing\'olfsd\'ottir}
\ead{annai@ru.is}

\address[l1]{Gran Sasso Science Institute, L'Aquila, Italy}
\address[l2]{ICE-TCS, School of Computer Science, Reykjavik University, Iceland}
\address[l3]{IMDEA Software Institute, Spain}
\address[l4]{Departamento de Sistemas Inform\'aticos y Computaci\'on, Universidad Complutense de Madrid, Spain}

\begin{abstract}
Input-output conformance simulation (\iocos) has been proposed by
Gregorio-Rodr\'iguez, Llana and Mart\'inez-Torres as a
simulation-based behavioural preorder underlying model-based
testing. This relation is inspired by Tretmans' classic {\ioco}
relation, but has better worst-case complexity than {\ioco} and
supports stepwise refinement. The goal of this paper is to develop the
theory of {\iocos} by studying logical characterisations of this
relation, rule formats for it and its compositionality. More
specifically, this article presents characterisations of {\iocos} in
terms of modal logics and compares them with an existing logical
characterisation for {\ioco} proposed by Beohar and Mousavi. It also
offers a characteristic-formula construction for {\iocos} over finite
processes in an extension of the proposed modal logics with greatest
fixed points. A precongruence rule format for {\iocos} and a rule
format ensuring that operations take quiescence properly into account
are also given. Both rule formats are based on the GSOS format by
Bloom, Istrail and Meyer. The general {modal decomposition}
methodology of Fokkink and van Glabbeek is used to show how to check
the satisfaction of properties expressed in the logic for {\iocos} in
a compositional way for operations specified by rules in the
precongruence rule format for {\iocos}.
\end{abstract}
\begin{keyword}
Input-output conformance simulation \sep modal logic \sep rule formats \sep compositionality \sep modal decomposition 


\end{keyword}

\end{frontmatter}

\section{Introduction}\label{sec:intro}

Model-based testing (MBT) is an increasingly popular technique for
validation and verification of computing systems, and provides a
compromise between formal verification approaches, such as model
checking, and manual testing. MBT uses a model to describe the aspects
of system behaviour that are considered to be relevant at some
suitable level of abstraction. This model is employed to generate test
cases automatically, while guaranteeing that some coverage criterion
is met. Such test cases are then executed on the actual system in
order to check whether its behaviour complies with that described by
the model.

A formal notion of compliance relation between models (specifications)
and systems (implementations) provides a formal underpinning for
MBT. The de-facto standard compliance relation underlying MBT for
labelled transition systems with input and output actions is the
classic {\ioco} relation proposed by Tretmans, for which a whole MBT
framework and tools have been developed. (See, for
instance,~\cite{Tre08} and the references therein. Readers interested
in an older and very influential strand of MBT research based
on finite-state machines can find a wealth of information in the
excellent survey paper~\cite{LeeY}.)

An alternative conformance relation that can be used to underlie MBT is {\em input-output conformance simulation} 
(\iocos). This relation follows many of the ideas in the definition of {\ioco}. However, {\iocos} is a branching-time semantics 
based on simulation, whereas \ioco is a trace-based semantics. {\iocos} has been introduced, motivated and proved to be 
an adequate conformance relation for MBT in~\cite{GLM13,GLM14,GLM15}.

Since {\iocos} has been proposed as an alternative, branching-time
touchstone relation for MBT, it is natural to investigate its theory
in order to understand its properties.  The goal of this paper is to
contribute to this endeavour by studying the discriminating power of
{\iocos} and its compositionality. More precisely, in
Section~\ref{sec:basic}, we provide modal characterisations of
{\iocos} in the style of Hennessy and Milner~\cite{HM85}. We offer two
modal chacterisations of {\iocos}, which are based on the use of
either a `non-forcing diamond modality'
(Theorem~\ref{theo:logic_char}) or of a `forcing box modality'
(Theorem~\ref{theo:tlogic_char}), and compare them with existing
logical characterisations for various semantics
(Section~\ref{sec:related}) and a logic for {\ioco} proposed by Beohar
and Mousavi in~\cite{BoharMousavi14} (Section~\ref{subsec:rel_ioco}).
We also provide a characteristic formula construction for {\iocos}
(Proposition~\ref{pro:characteristic_formula} in
Section~\ref{sec:charac_formula}) for which we need an extension of
the logic for {\iocos} with fixed-points
(Section~\ref{sec:fix-points}), and show, by means of an example,
that, contrary to what is claimed in~\cite[Theorem~2]{LM13}, {\ioco}
and {\iocos} do {\em not} coincide even when implementations are input
enabled (Section~\ref{sec:relation_ioco}).

As argued in~\cite{BenesDHKN15,BijlRT03} amongst other references, MBT
can benefit from a compositional approach whose goal is to increase
the efficiency of the testing activity. The above-mentioned references
study compositionality of {\ioco} with respect to a small collection
of well-chosen operations. Here we take a general approach to the
study of compositionality of {\iocos}, which is based on the theory of
rule formats for structural operational semantics~\cite{AFV01}.  In
Section~\ref{ssec:congruence}, we present a congruence rule format for
{\iocos} based on the GSOS format proposed by Bloom, Istrail and
Meyer~\cite{BIM95} (Theorem~\ref{thm:congruence}). Some of the
conditions of the rule format for {\iocos}
(Definition~\ref{Def:ruleformat}) have similarities with those of the
rule format for $XY$-simulation given by Beohar and Mousavi
in~\cite{BeoharM15}.  However, our rule format for {\iocos} includes
  a rather involved `global' condition that stems from the fact that a
  specification need only simulate input transitions from an
  implementation that are labelled with actions that the specification
  affords. In Section~\ref{ssec:counterexamples}, we present some
  examples showing that the restrictions of the rule format from
  Definition~\ref{Def:ruleformat} cannot be relaxed easily.

  A bridge between modal characterisations of process semantics and
  rule formats for operational semantics is provided by the 
  so-called \emph{modal decomposition} method of~\cite{FokkinkGW06},
  whose roots can be traced back to the work by Larsen and
  Liu~\cite{LarsenX91}. Intuitively, this method allows one to
  determine whether some process of the form $f(p_1,\ldots, p_n)$
  satisfies a formula $\varphi$ by constructing, from $\varphi$ and
  the rules defining the operation $f$, a collection of properties
  $\varphi_1, \ldots, \varphi_n$ such that $f(p_1,\ldots,
  p_n)\models\varphi$ if, and only if, $p_i\models \varphi_i$ for
  $i\in\{1, \ldots, n\}$. This essentially amounts to providing a
  compositional model-checking procedure and proof system for the
  studied process logic with respect to operations specified by rules
  in a given format.

In Section~\ref{ssec:decomposition} we follow this approach for one of
the characterising logics for {\iocos} in order to obtain a
compositional proof system to decide whether a term built using
operations specified by rules in the {\iocos} rule format satisfies a
formula (Theorem~\ref{thm:decomposition}). The conditions of the
{\iocos} rule format play a crucial role in the proof of correctness
of the given decomposition method.

Since operations preserving {\iocos} need to take quiescence 
into account properly, we also propose a rule format guaranteeing that
operations preserve coherent quiescent behaviour
(Theorem~\ref{thm:quiescent} in Section~\ref{ssec:rule-quiescent}),
and show that it is not easy to combine the rule formats for congruence
and quiescence (Proposition~\ref{prop:merge}). Finally,
Section~\ref{sec:conclusion} concludes the paper and presents avenues
for future research.

Some of the results in Sections~\ref{sec:basic},
\ref{sec:relation_ioco} and \ref{Sect:ruleformat} were presented in
the conference paper~\cite{AFGI17}. In this extended work we offer
proofs (some in the appendices) of the results that were announced
without proof in~\cite{AFGI17} and complement the study in those
sections with new results, examples and counterexamples.  More in
detail, the present work contains the following new material.
\begin{itemize}
	\item In Section~\ref{sec:related}, we provide a comparison
          between the logics characterising {\iocos} and some
          well-known process logics.
	
	\item In Section~\ref{sec:fix-points}, we develop an extension with fixed-points of the logics for {\iocos}.
	
	\item In  Section~\ref{sec:charac_formula}, we define the characteristic formulae for \iocos-processes. 
	
	\item In Section~\ref{ssec:counterexamples}, we show by means of examples that the conditions for the congruence rule 
	format cannot be relaxed easily.
	
	\item In Section~\ref{ssec:decomposition}, we apply the modal-decomposition method for {\iocos}. 
	
	\item In Proposition~\ref{prop:merge}, we show that is not
          easy to combine the congruence and quiescence rule formats
          by considering the example of the binary merge operator.
\end{itemize}


\section{Preliminaries}
\label{sec:pre}

The input-output conformance simulation preorder presented
in~\cite{GLM13,GLM14,GLM18} (henceforth referred to as {\iocos})
is a semantic relation developed under the assumption that systems
have two kinds of actions: input actions, namely those that the
systems are willing to admit or respond to, and output actions, which
are those produced by the system and that can be seen as responses or
results.

We use $I$ to denote the alphabet of input actions, which are written
with a question mark ($a?,b?,c?\ldots$). We call $O$ the alphabet of
output actions, which are annotated with an exclamation mark
($a!,b!,\delta!$\ldots).  In many cases we want to name actions in a
general sense, inputs and outputs indistinctly.  We will consider the
set $L=I\cup O$ and we will omit the exclamation or question marks
when naming generic actions, $a,b,c \in L$.

A state with no output actions cannot proceed autonomously; such a
state is called \emph{quiescent}.
Following Tretmans
(see, for instance,~\cite{STM12,Tre08}),
we directly introduce the event of quiescence 
as a special output action denoted by $\delta!\in O$ in the definition of 
our models.

\begin{definition}\label{dfn:lts}
A \emph{labelled transition system with inputs and outputs}, LTS for short, is a quadruple $(S,I,O,\tran{})$ such that
\begin{itemize}
\item $S$ is a set of states, processes, or behaviours.
\item $I$ and $O$ are disjoint sets of input and output actions, 
  respectively. Output actions include the quiescence symbol $\delta!\in O$. 
  We define $L=I\cup O$. 
\item $\tran{} \subseteq S \times L \times S$ is the transition relation. 
  As usual, we write $p\tran{a}q$ instead of $(p,a,q)\in \tran{}$ and 
  $p\tran{a}$, for $a\in L$, if there exists some $q\in S$ such that
  $p\tran{a}q$. Analogously, we will write
  $p\notran{a}$, for $a\in L$, if there is no $q$ such that $p\tran{a} q$.

 In order to allow only for coherent quiescent systems, the set of
 transitions $\tran{}$ should also satisfy the following requirement:
 $p\tran{\delta!}p'$ iff $p=p'$ and $p\notran{a!}$ for each $a!\in
 O\setminus\{\delta!\}$.
\end{itemize}
The extension of the transition relation to sequences of actions is
defined as usual.
\end{definition}
Contrary to the classic {\ioco} testing theory, in the theory of
{\iocos} presented in \cite{GLM13,GLM14,GLM15,GLM18}, all actions are
assumed to be observable. In this paper, we follow those references
and consider only concrete actions.
 
In general we use $p,q,p',q'$\ldots\ for states or behaviours, but also $i,i',s$ and $s'$ when we want to emphasise the 
specific role of a behaviour as an implementation or a specification, respectively. We consider implementations and 
specifications, or, more generally, behaviours under study, as states of the same LTS.

The following functions over states of an LTS will be used in the remainder of the paper:
\begin{center}
  \begin{tabular}{rcl}
  $\outs(p)$&$=$&$ \{ a! \ |\ a! \in O,\ p\ \tran{a!}\}$, the set of initial outputs of a state $p$.\\
  $\ins(p)$ &$=$&$ \{ a? \ |\ a? \in I,\ p\ \tran{a?}  \} $, the set of initial
    inputs of a state $p$.
  \end{tabular}
\end{center}

The definition of input-output conformance simulation given below
stems from~\cite{GLM13,GLM14,GLM18}, to which we refer the interested
reader for motivation and discussion.
\begin{definition}\label{dfn:iocos}
  We say that a binary relation $R$ over states in an LTS is an \iocos-relation if, and only if, for each $(p,q)\in R$
  the following conditions hold:
  \begin{enumerate}
  \item \label{dfn:iocos:1} $\ins(q)\subseteq\ins(p)$. 
  \item \label{dfn:iocos:2} For all $a? \in \ins(q)$ and $p'\in S$, if $p\tran{a?} p'$ then there exists some $q'$ such that $q\tran{a?} q'$ and $(p', q')\in R$. 
  \item \label{dfn:iocos:3} For all $a! \in O$ and $p'\in S$, if $p\tran{a!} p'$ then there exists some $q'$ such that 
   $q\tran{a!} q'$ and $(p', q')\in R$. 
  \end{enumerate}
  We define the \emph{input-output conformance simulation} ($\iocos$)
  as the largest \iocos-relation, that is, the union of all \iocos-relations:
   $$\iocos =\bigcup\{ R\ |\ R\subseteq S\times S,\ R\ \text{is an \iocos-relation}\}. $$
We write $p\iocos q$ instead of $(p,q)\in \iocos$.
\end{definition}

\begin{example}\label{Ex:iocos-procs}
Consider the following processes:

{\centering
	\begin{tikzpicture}[->,>=stealth',shorten >=1pt,auto]
	\matrix [matrix of math nodes, column sep={2cm,between origins},row sep={1cm,between origins}]
	{
		\node (C0) {i}; & \node (D0) {s};\\
	};
	\begin{scope}[every node/.style={font=\small\itshape}]
	\path   
	(C0)   edge[loop right]	   node  {$a?$} (C0)
	(C0)   edge[loop left]	   node  {$\delta!$} (C0)
	(D0)   edge[loop right]     node  {$\delta!$} (D0);
	\end{scope}
	\end{tikzpicture}
	
}
It is easy to see that $i~{\iocos}~s$. Indeed, $\ins(s)=\emptyset$ and
therefore the specification $s$ does not prevent the implementation $i$
from offering the input transition $i\tran{a?} i$.
\end{example}

\begin{remark}
	In what follows, we will consider only \emph{image-finite}
        LTSs, that is, LTSs where for each $p$ and each $a\in I\cup O$
        there are only finitely many $p'$ such that
        $p\stackrel{a}{\rightarrow}p'$. Also, we will consider both
        $I$ and $O$ to be finite sets.
\end{remark}
Throughout the paper we make extensive use of modal logics. A logic
over processes is defined by a language to express the formulae in the
logic and a satisfaction relation that defines when a process (that
is, a state of an LTS) has the property described by some formula. A
classic example and a reference for the rest of the paper is
Hennessy-Milner Logic~\cite{HM85}.

\begin{definition}\label{Def:HML}
Hennessy-Milner Logic over the set of actions $L$ (abbreviated to
{\hml}) is the collection of formulae defined by the following BNF
grammar:
$$\phi ::= \true \mid \false \mid \phi \wedge \phi
\mid \phi \vee \phi \mid \univ{a}\phi\mid \exis{a}\phi, $$
where $a\in L$.  {\hml} is interpreted over an LTS by defining a
satisfaction relation $\models$ relating states to formulae. The
semantics of the boolean constants $\true$ and $\false$ and of the
boolean connectives $\wedge$ and $\vee$ is defined as usual. The
satisfaction relation for the modalities $\exis{a}$ and $\univ{a}$ is
as follows:
\begin{itemize}
\item $p\models\exis{a}\varphi$ iff there exists some $p'$ such that $p\tran{a}p'$ and $p'\models\varphi$.

\item $p\models\univ{a}\varphi$ iff for all $p'$, 
  $p\tran{a}p'$ implies $p'\models\varphi$.
\end{itemize}
As usual, we extend both boolean connectives $\wedge$ and $\vee$ to finite sets: given a finite index set $I$ and 
formulae $\varphi_i$ ($i\in I$), we define $\bigwedge_{i\in I}\varphi_i$ (respectively, $\bigvee_{i\in I} \varphi_i$) as the 
finite conjunction (respectively, finite disjunction) of the formulae $\varphi_i$. We follow the standard convention of 
considering $\bigwedge_{i\in\emptyset} \varphi_i$ equivalent to $\true$ and $\bigvee_{i\in\emptyset} \varphi_i$ 
equivalent to $\false$.
\end{definition}
Every subset of {\hml} naturally induces a preorder on a given set of
behaviours.

\begin{definition}
	\label{def:gen_logic_pre}
	Given a logic \Logic{} included in {\hml} and a set $S$ of
        states in an LTS, we define $\leqlogic{\Logic{}}$ as the
        binary relation over $S$ given by
	$$p\leqlogic{\Logic{}}q \quad \mbox{ iff }\quad \forall \phi \in \Logic{} \;\; (p\models \phi ~{\Rightarrow}~ q\models\phi).$$
\end{definition}

\begin{proposition}\label{prop:gen_logic_pre} 
For each logic \Logic{}, the binary relation $\leqlogic{\Logic{}}$ is a preorder.
\end{proposition}

\section{Logic for \iocos}
\label{sec:basic}

In this section we present logics that characterise the \iocos relations, both the preorder and its induced equivalence. 
These logics are subsets of Hennessy-Milner Logic (HML) and are convenient to characterise clearly the 
discriminating power of the \iocos relation. We will use these logics to compare \iocos\ with other behavioural 
simulation-based relations in the literature in Sections~\ref{sec:related} and~\ref{sec:relation_ioco}. In 
Section~\ref{sec:fix-points}, we will address the study of more expressive logics for \iocos that are more suitable for 
the description of system properties.

\begin{definition}
\label{def:logic}
The syntax of the logic for \iocos, denoted by \Liocos, is defined by the following BNF grammar:
$$\phi ::= \true \mid \false \mid \phi \wedge \phi
\mid \phi \vee \phi \mid \eexis{a?} \phi \mid
\exis{a!}\phi, $$
where $a?\in I$ and $a!\in O$. The semantics of the constants $\true$
and $\false$, of the boolean connectives $\wedge$ (conjunction) and $\vee$ (disjunction), and of
the modality $\exis{a!}$ (diamond) are defined as usual.  The satisfaction
relation for the modality $\eexis{a?}$ is given below:

$$p \models \eexis{a?} \phi \mbox{ iff } p\notran{a?} \mbox{ or } p' \models \phi \mbox{ for some  } p \tran{a?} p'.$$
\end{definition}

The new modal operator $\eexis{a?}$ can be read as a
\emph{non-forcing} diamond modality: if the input action labelling the
modality is not possible in a given state then the formula is
satisfied. This operator can be expressed with the classic modalities
in \hml; indeed, $\eexis{a?}\phi$ is equivalent to $\exis{a?}\phi \vee
\univ{a?}\false$. The need for this special modality arises because,
in order for $i\iocos s$ to hold, $s$ need only match the input
transitions of $i$ that are labelled with input actions that $s$
affords.
\begin{remark}
  \label{remark:equivaleceofuuniv}
	Note that the formula $\eexis{a?}\false$ is logically
        equivalent to the HML formula $\univ{a?}\false$. In other
        words, $p\models\eexis{a?}\false$ iff $p\notran{a?}$. 
\end{remark}

According to Definition~\ref{def:gen_logic_pre}, the logic \Liocos\ induces the preorder \leqiocos. Next we prove that 
this logical preorder coincides with the input-output conformance simulation preorder, \iocos, over an arbitrary 
(image-finite) LTS.

\begin{theorem}
\label{theo:logic_char}
For all states $i,s$ in some LTS, 
	$$i\iocos s \quad\mbox{iff}\quad i \leqiocos s. $$
\end{theorem}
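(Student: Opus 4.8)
The plan is to prove the two implications separately, following the standard pattern for Hennessy–Milner-style characterisation theorems, exploiting the image-finiteness assumption on the LTS.

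\textbf{From \iocos\ to the logical preorder.} First I would show that $i\iocos s$ implies $i\leqiocos s$. By Definition~\ref{def:gen_logic_pre}, this amounts to proving that for every formula $\phi\in\Liocos$, if $i\models\phi$ then $s\models\phi$. I would proceed by structural induction on $\phi$, assuming throughout that $i\iocos s$. The constant and boolean cases ($\true$, $\false$, $\wedge$, $\vee$) are immediate from the semantics. For the output diamond $\exis{a!}\phi$: if $i\models\exis{a!}\phi$ then $i\tran{a!}i'$ with $i'\models\phi$; since $i\iocos s$, condition~\ref{dfn:iocos:3} of Definition~\ref{dfn:iocos} gives some $s'$ with $s\tran{a!}s'$ and $i'\iocos s'$, so by the induction hypothesis $s'\models\phi$ and hence $s\models\exis{a!}\phi$. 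The subtle case is the non-forcing input modality $\eexis{a?}\phi$. Suppose $i\models\eexis{a?}\phi$ and we must show $s\models\eexis{a?}\phi$. If $s\notran{a?}$ we are done by the semantics of $\eexis{a?}$. Otherwise $a?\in\ins(s)$, and by condition~\ref{dfn:iocos:1} we get $a?\in\ins(i)$, so $i\tran{a?}$; combined with $i\models\eexis{a?}\phi$ this forces the existence of $i\tran{a?}i'$ with $i'\models\phi$. Now condition~\ref{dfn:iocos:2}, which applies precisely because $a?\in\ins(s)$, yields $s\tran{a?}s'$ with $i'\iocos s'$, and the induction hypothesis gives $s'\models\phi$, hence $s\models\eexis{a?}\phi$.

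\textbf{From the logical preorder to \iocos.} For the converse I would show that the relation ${\leqiocos}$ is itself an \iocos-relation, so that it is contained in the largest such relation $\iocos$. Assume $i\leqiocos s$; I must verify the three conditions of Definition~\ref{dfn:iocos}. For condition~\ref{dfn:iocos:1}, suppose toward a contradiction that some $a?\in\ins(s)\setminus\ins(i)$. Then $i\notran{a?}$, so by Remark~\ref{remark:equivaleceofuuniv} we have $i\models\eexis{a?}\false$; since $i\leqiocos s$ this gives $s\models\eexis{a?}\false$, i.e.\ $s\notran{a?}$, contradicting $a?\in\ins(s)$. Hence $\ins(s)\subseteq\ins(i)$. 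For the simulation conditions~\ref{dfn:iocos:2} and~\ref{dfn:iocos:3}, I would argue by contradiction using image-finiteness. Take $a\in L$ and a transition $i\tran{a}i'$ that needs to be matched (for an input $a?$ this is required only when $a?\in\ins(s)$, but note $\ins(s)\subseteq\ins(i)$ guarantees $i\tran{a?}$ is available on the matching side). Enumerate the finitely many $a$-successors $s_1,\dots,s_n$ of $s$; if $i'\not\leqiocos s_k$ for every $k$, then for each $k$ there is a formula $\psi_k$ with $i'\models\psi_k$ but $s_k\not\models\psi_k$. Setting $\psi=\bigwedge_{k}\psi_k$, we have $i'\models\psi$, so $i\models\exis{a}\psi$ (respectively $i\models\eexis{a?}\psi$, which holds because $i\tran{a?}i'$ witnesses it). The logical preorder then forces $s\models\exis{a}\psi$ (respectively $s\models\eexis{a?}\psi$; in the latter case $a?\in\ins(s)$ rules out the non-forcing disjunct), yielding some $a$-successor $s_k$ with $s_k\models\psi$ and in particular $s_k\models\psi_k$, a contradiction. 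Thus some $s_k$ satisfies $i'\leqiocos s_k$, establishing the matching clause.

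\textbf{Main obstacle.} The technically delicate point is the treatment of the non-forcing modality $\eexis{a?}$, which behaves differently from an ordinary diamond: its satisfaction can be vacuous when the action is unavailable. In the forward direction I must be careful to split on whether $s$ affords $a?$ and to invoke condition~\ref{dfn:iocos:1} to transfer availability back to $i$; in the backward direction I must ensure that when I derive $s\models\eexis{a?}\psi$ the non-forcing disjunct is excluded precisely because $a?\in\ins(s)$ — otherwise the formula could be satisfied vacuously and the argument would break. Keeping the bookkeeping of $\ins$ consistent between the two sides, together with the correct use of image-finiteness to form the finite conjunction $\psi$ distinguishing $i'$ from all non-matching successors, is where the care is needed; the remaining cases are routine.
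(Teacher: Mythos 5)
Your proposal is correct and follows essentially the same route as the paper: structural induction on formulae for the forward direction (with the same case split on whether $s$ affords $a?$ and the same use of condition~\ref{dfn:iocos:1} to transfer the transition back to $i$), and for the converse a proof that $\leqiocos$ is itself an \iocos-relation, arguing by contradiction and using image-finiteness to build a distinguishing formula as a finite conjunction over the $a$-successors of $s$, guarded by $\eexis{a?}$ or $\exis{a!}$ as appropriate. The point you flag about excluding the vacuous disjunct of $\eexis{a?}$ because $a?\in\ins(s)$ is exactly the observation the paper relies on implicitly.
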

\begin{proof}
	We prove the two implications separately. 
	\begin{itemize} 
        \item \emph{Only if implication}: Assume that $i \iocos s$ and $i\models
		\phi$ with $\phi\in\Liocos$. We show that $s\models \phi$ by
		structural induction over the formula $\phi$. We limit ourselves to
		presenting the case that $\phi=\eexis{a?}\varphi$, for some
		$\varphi$. 

        Since $s \models \eexis{a?}\varphi$ holds when
		$s\notran{a?}$, in what follows we may assume that $a?\in
		\ins(s)$. Since $i \iocos s$, by
		Definition~\ref{dfn:iocos}.\ref{dfn:iocos:1}, $\ins(s)\subseteq
		\ins(i)$. So, from $i\models \eexis{a?}\varphi$ we obtain that there
		exists some $i'$ such that $i\tran{a?}i'$ and $i'\models\varphi$.
		By Definition~\ref{dfn:iocos}.\ref{dfn:iocos:2}, we have that there
		exists some $s'$ such that $s\tran{a?} s'$ such that $i'\iocos
		s'$. Now, since $i'\models\varphi$, by the induction hypothesis, we
		have also that $s'\models \varphi$. That is,
		$s\models\eexis{a?}\varphi$, which was to be shown.
		
		\item \emph{If implication}: Consider the relation
		$R=\{(i,s)\mid i\leqiocos s\}$. We claim that $R$ is an
		$\iocos$-relation. We will prove that claim by contradiction. 

        Assume
		that $(i,s)\in R$ does not satisfy the requirements in
		Definition~\ref{dfn:iocos}. We will see that there exists a formula
		$\varphi$ in \Liocos\ such that $i\models\varphi$ and
		$s\not\models\varphi$, contradicting the assumption that $
		i\leqiocos s$.  We distinguish three cases according to the
		conditions in Definition~\ref{dfn:iocos}.
		\begin{itemize}
			\item Assume that
			$\ins(s)\not\subseteq\ins(i)$. Then, there
			exists some
			$a?\in\ins(s)\setminus\ins(i)$. In this
			case, using Remark~\ref{remark:equivaleceofuuniv},
                        $i\models\eexis{a?}\false$ but
			$s\not\models\eexis{a?}\false$.
			
			\item Assume that there exist $a?\in\ins(s)\cap \ins(i)$ and $i'$ such that $i\tran{a?} i'$ and $(i', s')\notin R$ for each $s'$ such that $s\tran{a?} s'$. Since for each  $s\tran{a?} s'$ we have $(i', s')\notin R$, there exist formulae $\varphi_{s'}$ such that $i'\models \varphi_{s'}$ and  $s'\not\models \varphi_{s'}$. Let $\varphi=\eexis{a?}\bigwedge_{s\stackrel{a?}{\rightarrow}s'}\varphi_{s'}$. By construction $i\models \varphi$ and  $s\not\models \varphi$.
			
			\item Assume that there exist $a!\in O$ and $i'$ such that $i\tran{a!} i'$ and $(i', s')\notin R$ for all $s\tran{a!} 
			s'$. As above, for each  $s\tran{a!} s'$ there exists some formula $\varphi_{s'}$ such that $i'\models 
			\varphi_{s'}$ and $s'\not\models \varphi_{s'}$. In this case, the formula 
			$\varphi=\exis{a!}\bigwedge_{s\stackrel{a!}{\rightarrow}s'}\varphi_{s'}$ is such that $i\models \varphi$ but 
			$s\not\models \varphi$. \qedhere
		\end{itemize}
	\end{itemize}
	
\end{proof}

\begin{remark}
  \label{remark:noa!}
An easy consequence of the above logical-characterisation theorem is
that there is no formula in {\Liocos} that is logically equivalent to
the HML formula $\exis{a?}\true$. Indeed, that formula is satisfied by
state $i$, but not by state $s$, in Example~\ref{Ex:iocos-procs}.
\end{remark}

The \iocos\ relation is a preorder and it induces the  equivalence relation,
\iocoseq. We can logically characterise this equivalence with the following result.

\begin{corollary}\label{cor:icoseq_Liocos}
	For all states $p,q$ in some LTS, 

	$$p\iocoseq q  \quad\mbox{iff}\quad (\forall\phi\in\Liocos\quad p\models\phi \mbox{ iff }  q\models\phi).$$
\end{corollary}

The logic for \iocos\ we have presented in Definition~\ref{def:logic}
follows a standard approach to the logical characterisation of
simulation semantics; see, for instance,~\cite{FGPR13,Gla01}.  However,
the \iocos relation originated in the model-based testing environment
where the natural reading for a logical characterisation would be
\emph{`every property satisfied by the specification should also hold 
  in the implementation.'} Next we define an
alternative logic that better matches this
specification/implementation view.

\begin{definition}\label{def:tlogic}
The syntax of the logic \tLiocos\ is defined by the following BNF grammar:

$$\phi ::= \true \mid \false \mid \phi \wedge \phi \mid \phi \vee \phi\mid \uuniv{a?} \phi \mid \univ{a!}\phi, $$ 
where $a?\in I$ and $a!\in O$. The semantics of the constants $\true$ and $\false$, of the boolean connectives 
$\wedge$ and $\vee$, and of the modality $\univ{a!}$ is defined as usual. The satisfaction relation for the
modalities $\uuniv{a?}$ is as follows:

$$p \models \uuniv{a?} \phi \mbox{ iff }p\tran{a?}\mbox{ and } p' \models \phi,\mbox{ for each  }p \tran{a?} p'.$$
\end{definition}
The new modal operator, denoted by $\uuniv{a?}$, can be read as a
\emph{forcing} box modality: the action specified in the modality must
be possible in order for a process to satisfy the formula. This
operator can be described with the classic modalities in \hml:
$\uuniv{a?}\phi$ is equivalent to $\exis{a?}\true \wedge
\univ{a?}\phi$.

\begin{remark}
Notice that the formula $\uuniv{a?}\true$ is logically equivalent to $\exis{a?}\true$. In other words, $p\models\uuniv{a?}\true$ iff $p\tran{a?}$. 	
\end{remark}

Now with this logic, we can define a preorder $\tleqiocos$ in terms of
the formulae that the specification satisfies: $s \tleqiocos i
\quad\mbox{iff}\quad \forall \phi \in \tLiocos\quad ({s\models
  \phi}\Rightarrow{i\models\phi})$.

We note that the logics $\Liocos$ and $\tLiocos$ are dual. In fact, there exist mutual transformations between both 
sets of formulae such that a behaviour satisfies one formula if, and only if, it does {\em not} satisfy the transformed 
formula. These statements are at the heart of the proof of the following result, as we will show below.

\begin{theorem}
\label{theo:tlogic_char}
	For all states $i, s$ in some LTS, $i\iocos s~\mbox{iff}~s \tleqiocos i.$

\end{theorem}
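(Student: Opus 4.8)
The plan is to reduce the statement to Theorem~\ref{theo:logic_char} by turning the informally announced duality between \Liocos\ and \tLiocos\ into a precise negation operator. First I would define a map $\nega$ that sends each \Liocos\ formula to a \tLiocos\ formula (and, symmetrically, each \tLiocos\ formula back to a \Liocos\ formula), by
\begin{align*}
\nega(\true) &= \false, & \nega(\false) &= \true,\\
\nega(\phi \wedge \psi) &= \nega(\phi) \vee \nega(\psi), & \nega(\phi \vee \psi) &= \nega(\phi) \wedge \nega(\psi),\\
\nega(\eexis{a?}\phi) &= \uuniv{a?}\nega(\phi), & \nega(\exis{a!}\phi) &= \univ{a!}\nega(\phi),
\end{align*}
and symmetrically on \tLiocos\ sending $\uuniv{a?}\phi$ to $\eexis{a?}\nega(\phi)$ and $\univ{a!}\phi$ to $\exis{a!}\nega(\phi)$. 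By construction these two maps are mutually inverse, so $\nega$ restricts to a bijection between \Liocos\ and \tLiocos; in particular every \tLiocos\ formula equals $\nega(\phi)$ for a unique $\phi\in\Liocos$. This bijectivity is what makes the final range argument go through, so I would record it explicitly.

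The key lemma I would then prove is the \emph{duality lemma}: for every state $p$ and every $\phi\in\Liocos$, $p\models\nega(\phi)$ if, and only if, $p\not\models\phi$ (and symmetrically over \tLiocos). This goes by structural induction on $\phi$. The constant and boolean cases are immediate from de Morgan's laws, and the output case is the standard modal duality between $\exis{a!}$ and $\univ{a!}$. The only genuinely new cases are the input modalities. For $\eexis{a?}\phi$: by definition $p\models\uuniv{a?}\nega(\phi)$ holds exactly when $p\tran{a?}$ and every $a?$-successor $p'$ satisfies $\nega(\phi)$; by the induction hypothesis this is equivalent to $p\tran{a?}$ together with every $a?$-successor failing $\phi$, which is precisely the negation of ``$p\notran{a?}$ or some $a?$-successor satisfies $\phi$'', i.e.\ of $p\models\eexis{a?}\phi$. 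The $\uuniv{a?}$ case is handled symmetrically. I expect this to be the main obstacle: one has to track the enabledness clause in the non-forcing diamond $\eexis{a?}$ and the forcing box $\uuniv{a?}$ carefully, since that is exactly where the two modalities differ from the ordinary HML diamond and box.

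With the duality lemma in hand the theorem follows by a contrapositive-plus-reindexing argument. By Theorem~\ref{theo:logic_char}, $i\iocos s$ holds iff $i\leqiocos s$, that is, iff $i\models\phi$ implies $s\models\phi$ for all $\phi\in\Liocos$. Taking contrapositives, each such implication is equivalent to $s\not\models\phi\Rightarrow i\not\models\phi$, and applying the duality lemma to both sides turns this into $s\models\nega(\phi)\Rightarrow i\models\nega(\phi)$. As $\phi$ ranges over all of \Liocos, the formula $\nega(\phi)$ ranges over all of \tLiocos\ by the bijection established above, so this holds for every $\phi\in\Liocos$ if, and only if, $s\models\psi\Rightarrow i\models\psi$ for every $\psi\in\tLiocos$, which is exactly $s\tleqiocos i$. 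Chaining these equivalences yields $i\iocos s$ iff $s\tleqiocos i$, as required.
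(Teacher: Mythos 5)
Your proposal is correct and follows essentially the same route as the paper: the map you call $\nega$ is the paper's transformation $\tra$, your duality lemma is the paper's Lemma~\ref{lem:logic2tlogic} (with Corollary~\ref{cor:tlogicANDlogic}), and the final contrapositive-plus-reindexing step via Theorem~\ref{theo:logic_char} matches the paper's argument exactly.
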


\begin{definition}
	We define the bijection $\tra:\Liocos\rightarrow\tLiocos$ by induction
	on the structure of formulae in the following way:
	\begin{itemize}
		\item $\tra(\true)=\false$.
		
		\item $\tra(\false)=\true$.
		
		\item $\tra(\phi_1 \wedge \phi_2)= \tra(\phi_1)\vee\tra(\phi_2)$.
		
		\item $\tra(\phi_1 \vee \phi_2)= \tra(\phi_1)\wedge\tra(\phi_2)$.
		
		\item $\tra(\eexis{a?} \phi)= \uuniv{a?}\tra(\phi)$. 
		
		\item $\tra(\exis{a!}\phi)= \univ{a!}\tra(\phi)$.
	\end{itemize}
	The inverse function $\tra^{-1}:\tLiocos\rightarrow\Liocos$ is defined
	in the obvious way.
\end{definition}
In order to prove Theorem~\ref{theo:tlogic_char}, we first prove the
following lemma to the effect that a behaviour satisfies some formula
$\phi$ if, and only if, it does not satisfy the transformed formula
$\tra(\phi)$.

\begin{lemma} \label{lem:logic2tlogic}
	For each state $p$ in some LTS, and formula
	$\phi\in\Liocos$, we have
	that \begin{enumerate}[(i)] \item\label{lem1} if
		$p\models\phi$ then $p\not\models\tra(\phi)$, and
		
		\item\label{lem2} if $p\not\models\phi$ then $p\models\tra(\phi)$.
	\end{enumerate}
\end{lemma}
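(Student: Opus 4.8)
The plan is to prove both parts simultaneously by structural induction on the formula $\phi\in\Liocos$, since the two claims are closely intertwined and the inductive cases for one part will naturally appeal to the other. I would set up the induction so that the hypothesis asserts both (\ref{lem1}) and (\ref{lem2}) for all proper subformulae of $\phi$. The base cases are immediate: if $\phi=\true$, then $p\models\phi$ always holds and $\tra(\true)=\false$ is never satisfied, giving (\ref{lem1}), while (\ref{lem2}) is vacuous; the case $\phi=\false$ is symmetric.

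The boolean cases follow by De Morgan reasoning using the definition of $\tra$. For instance, with $\phi=\phi_1\wedge\phi_2$ we have $\tra(\phi)=\tra(\phi_1)\vee\tra(\phi_2)$. For (\ref{lem1}), if $p\models\phi_1\wedge\phi_2$ then $p\models\phi_i$ for both $i$, so by the induction hypothesis $p\not\models\tra(\phi_i)$ for both $i$, hence $p\not\models\tra(\phi_1)\vee\tra(\phi_2)$. For (\ref{lem2}), if $p\not\models\phi_1\wedge\phi_2$ then $p\not\models\phi_i$ for some $i$, so $p\models\tra(\phi_i)$ by the hypothesis, whence $p\models\tra(\phi)$. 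The case $\phi=\phi_1\vee\phi_2$ is dual.

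The modal cases are where the specific semantics of the non-forcing diamond and forcing box come into play, and these are where I expect the real work to lie. Consider $\phi=\eexis{a?}\varphi$, so that $\tra(\phi)=\uuniv{a?}\tra(\varphi)$. Recall $p\models\eexis{a?}\varphi$ iff $p\notran{a?}$ or $p'\models\varphi$ for some $p\tran{a?}p'$, while $p\models\uuniv{a?}\tra(\varphi)$ iff $p\tran{a?}$ and $p'\models\tra(\varphi)$ for every $p\tran{a?}p'$. For (\ref{lem1}): if $p\notran{a?}$ then clearly $p\not\models\uuniv{a?}\tra(\varphi)$; if instead some $p\tran{a?}p'$ satisfies $p'\models\varphi$, then by the induction hypothesis $p'\not\models\tra(\varphi)$, so the universal requirement of the forcing box fails and again $p\not\models\tra(\phi)$. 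For (\ref{lem2}): if $p\not\models\eexis{a?}\varphi$ then $p\tran{a?}$ and $p'\not\models\varphi$ for every $p\tran{a?}p'$; the former gives the forcing requirement of $\uuniv{a?}$, and the latter yields $p'\models\tra(\varphi)$ for every such $p'$ by the induction hypothesis, so $p\models\uuniv{a?}\tra(\varphi)$. The case $\phi=\exis{a!}\varphi$ with $\tra(\phi)=\univ{a!}\tra(\varphi)$ is handled by the same dualisation between ordinary diamond and box, and is in fact simpler since no forcing/non-forcing subtlety intervenes.

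The main obstacle is keeping the bookkeeping of existential versus universal quantification aligned with the direction of each implication in the two modal cases; the asymmetry introduced by the non-forcing clause ``$p\notran{a?}$ or \ldots'' against the forcing clause ``$p\tran{a?}$ and \ldots'' must be tracked carefully so that the quiescent/non-transitioning subcase is correctly matched on both sides. Once this induction is complete, Theorem~\ref{theo:tlogic_char} follows by combining Lemma~\ref{lem:logic2tlogic} with Theorem~\ref{theo:logic_char}, reducing the new preorder $\tleqiocos$ to the already-characterised $\leqiocos$ via the bijection $\tra$.
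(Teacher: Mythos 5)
Your proof is correct and follows essentially the same route as the paper's: a simultaneous structural induction on $\phi$, with the key case $\eexis{a?}\varphi$ handled by exactly the same case split on whether $p\notran{a?}$ or some $a?$-successor satisfies $\varphi$. The paper only spells out that modal case, so your additional detail on the boolean and $\exis{a!}$ cases is consistent with, and slightly more explicit than, the published argument.
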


\begin{proof}
	We prove both statements by structural induction over the formula $\phi\in\Liocos$. We limit ourselves to giving the proof for the case $\phi=\eexis{a?}\varphi$. By definition $\tra(\eexis{a?} \varphi)= \uuniv{a?}\tra(\varphi)$.
        
	(\ref{lem1}): Consider some process $p$ such that $p\models\eexis{a?} \varphi$. Since $p\models 
	\eexis{a?}\varphi$ we have that either $p\notran{a?}$, or there exists $p\tran{a?}p'$ such that	
	$p'\models\varphi$. If $p\notran{a?}$, then $p\not\models\uuniv{a?}\tra(\varphi)$. Assume now that there exists 
	some $p'$ such that	$p\tran{a?}p'$ and $p'\models\varphi$. By the inductive	hypothesis, 
	$p'\not\models\tra(\varphi)$. Therefore	$p\not\models\uuniv{a?}\tra(\varphi)$, and we are done.
	 
	(\ref{lem2}): Consider, on the other hand, some process $p$ such that $p\not\models\phi$. This means that	
	$p\tran{a?}$ and $p'\not\models\varphi$ for all	$p\tran{a?}p'$. By the induction hypothesis,	
	$p'\models\tra(\varphi)$ for each $p'$ such that $p\tran{a?}p'$. Since $p\tran{a?}$, we obtain that	
	$p\models\uuniv{a?}\tra(\varphi)$, thus concluding the proof.
\end{proof}
The following corollary will be useful in the proof of
Theorem~\ref{theo:tlogic_char}. 

\begin{corollary}\label{cor:tlogicANDlogic}
For every state $p$ in some LTS, and all formulae $\phi\in\Liocos$ and $\psi\in\tLiocos$, the following	properties hold: 
\begin{enumerate}[(i)] 
	\item $p\models\phi$ iff $p\not\models\tra(\phi)$.  
	\item $p\models\psi$ iff $p\not\models\tra^{-1}(\psi)$.  
\end{enumerate}
	
\end{corollary}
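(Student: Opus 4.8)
The plan is to derive both statements directly from Lemma~\ref{lem:logic2tlogic}, which already supplies the two one-directional implications; the corollary is essentially the observation that these two implications, taken together with a contraposition, amount to the two biconditionals. No new induction is needed, since all the structural work has been absorbed into the lemma.

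For statement (i), I would argue as follows. The forward direction, $p\models\phi \Rightarrow p\not\models\tra(\phi)$, is precisely Lemma~\ref{lem:logic2tlogic}.(\ref{lem1}). For the converse, $p\not\models\tra(\phi)\Rightarrow p\models\phi$, I would proceed by contraposition: assuming $p\not\models\phi$, Lemma~\ref{lem:logic2tlogic}.(\ref{lem2}) gives $p\models\tra(\phi)$, so the contrapositive yields exactly the desired implication. Combining the two directions establishes $p\models\phi$ iff $p\not\models\tra(\phi)$ for every $\phi\in\Liocos$.

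For statement (ii), I would exploit that $\tra\colon\Liocos\to\tLiocos$ is a bijection, so $\tra^{-1}$ is well defined and $\tra(\tra^{-1}(\psi))=\psi$ for each $\psi\in\tLiocos$. Setting $\phi=\tra^{-1}(\psi)\in\Liocos$, statement (i) applied to this $\phi$ reads $p\models\phi$ iff $p\not\models\tra(\phi)$, that is, $p\models\tra^{-1}(\psi)$ iff $p\not\models\psi$. Negating both sides of this biconditional gives $p\not\models\tra^{-1}(\psi)$ iff $p\models\psi$, which is precisely statement (ii).

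The argument is entirely routine, so I do not expect any genuine obstacle; the only point that deserves explicit mention is the bijectivity of $\tra$, which guarantees that $\tra^{-1}$ exists and that $\tra\circ\tra^{-1}$ is the identity on $\tLiocos$, licensing the substitution $\psi=\tra(\phi)$ used to reduce (ii) to (i).
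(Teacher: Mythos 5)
Your proposal is correct and matches the paper's intent exactly: the corollary is stated without proof precisely because it follows from Lemma~\ref{lem:logic2tlogic} by combining its two implications (one of them via contraposition) for part (i), and by instantiating part (i) at $\phi=\tra^{-1}(\psi)$, using the bijectivity of $\tra$, for part (ii). Nothing is missing.
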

Finally, we have all the ingredients we need in order to prove
Theorem~\ref{theo:tlogic_char}, which is the equivalent result to
Theorem~\ref{theo:logic_char} for \tLiocos.

\begin{proof}[Proof of Theorem~\ref{theo:tlogic_char}]
	Assume that $i\iocos s$ and let us consider $\phi\in\tLiocos$ such that $s\models\phi$. By 
	Corollary~\ref{cor:tlogicANDlogic}.(ii), $s\not\models\tra^{-1}(\phi)$, with $\tra^{-1}(\phi)\in\Liocos$. Now, by 
	Theorem~\ref{theo:logic_char}, this implies $i\not\models\tra^{-1}(\phi)$ and by 
	Lemma~\ref{lem:logic2tlogic}.\ref{lem2}, $i\models\tra(\tra^{-1}(\phi))=\phi$.
	
	On the other hand, let us suppose that for all $\phi\in\tLiocos$, if $s\models\phi$ then $i\models\phi$. By 
	Theorem~\ref{theo:logic_char} to show that $i\iocos s$ it suffices to prove that $i\models\varphi$ implies 
	$s\models\varphi$ for each $\varphi\in\Liocos$. To this end, let us suppose that $i\models\varphi$. By 
	Corollary~\ref{cor:tlogicANDlogic}.(i), we have that $i\not\models \tra(\varphi)$. Now, by hypothesis, since 
	$\tra(\varphi)\in\tLiocos$ it must also be the case that $s\not\models \tra(\varphi)$. Applying again 
	Corollary~\ref{cor:tlogicANDlogic}.(i) we obtain that $s\models\varphi$. Thus, we conclude that $i\iocos s$, 
	finishing the proof.
\end{proof} 

As we did in Corollary~\ref{cor:icoseq_Liocos} for $\Liocos$, we can also easily characterise the equivalence relation $\iocoseq$ by using  $\tLiocos$.

\begin{corollary}\label{cor:icoseq_tLiocos}
 For all states $p,q$ in some $\LTS$,  
    $$p\iocoseq q \quad\mbox{iff}\quad (\forall\phi\in\tLiocos\quad p\models\phi \mbox{ iff } q\models\phi).$$
\end{corollary}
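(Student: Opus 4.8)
The plan is to derive this statement directly from Theorem~\ref{theo:tlogic_char}, in exactly the same way that Corollary~\ref{cor:icoseq_Liocos} follows from Theorem~\ref{theo:logic_char}. Since $\iocoseq$ is the equivalence induced by the preorder $\iocos$, I would start by unfolding $p\iocoseq q$ as the conjunction of $p\iocos q$ and $q\iocos p$.

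Next I would apply Theorem~\ref{theo:tlogic_char} to each conjunct, paying attention to the swap between the implementation and specification roles that appears in the statement of that theorem. Concretely, $p\iocos q$ is equivalent to $q\tleqiocos p$, and $q\iocos p$ is equivalent to $p\tleqiocos q$. Hence $p\iocoseq q$ holds if, and only if, both $q\tleqiocos p$ and $p\tleqiocos q$ hold.

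Finally I would unfold the definition of $\tleqiocos$ in both inclusions and combine them. The condition $q\tleqiocos p$ says that every $\tLiocos$-formula satisfied by $q$ is also satisfied by $p$, while $p\tleqiocos q$ says the converse; taken together they state precisely that $p$ and $q$ satisfy exactly the same formulae of $\tLiocos$, which is the right-hand side of the corollary. Because this final biconditional is symmetric in $p$ and $q$, the reversal of orientation introduced by Theorem~\ref{theo:tlogic_char} (recall that $\tleqiocos$ is oriented oppositely to $\iocos$) washes out once both directions are taken, and no separate argument is needed for the two implications of the stated `iff'. I do not anticipate any genuine obstacle here: the only point demanding a little care is tracking which side of $\tleqiocos$ each of $p$ and $q$ occupies when the theorem is invoked, precisely because that preorder runs in the opposite direction to $\iocos$.
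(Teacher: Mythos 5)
Your proposal is correct and is exactly the intended derivation: the paper states this as an immediate consequence of Theorem~\ref{theo:tlogic_char} without spelling out a proof, and your unfolding of $p\iocoseq q$ into $p\iocos q$ and $q\iocos p$, followed by the two applications of that theorem (with the roles of implementation and specification tracked correctly through the orientation reversal of $\tleqiocos$), is precisely the argument the authors have in mind. No gaps.
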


From Corollaries~\ref{cor:icoseq_Liocos} and~\ref{cor:icoseq_tLiocos}
we can also tell that \iocoseq can be logically characterised by means
of boolean combinations of the formulae in the logics \Liocos\ and
\tLiocos. This means that we could write formulae that are in
\Liocos\ or in \tLiocos\ and make conjunctions and disjunctions
between them, but not nest in the same formula modal operators from
\Liocos\ and \tLiocos. That is, $\eexis{a?}\true\wedge\univ{x!}\false$
would be acceptable, but $\eexis{a?}\univ{x!}\false$, would not. This is
reminiscent of the case of mutual simulation and its logical
characterisation as opposed to bisimulation equivalence and its
logical characterisation~\cite{FGPR13,Gla01}.

\begin{definition}
The logic  $\eqLiocos$ is
defined by the following BNF grammar:

$$\phi ::= \psi_{\Liocos} \mid \psi_{\tLiocos} \mid \phi\vee\phi \mid \phi\wedge\phi,$$
where $\psi_{\Liocos}\in\Liocos$ and $\psi_{\tLiocos}\in\tLiocos$.
\end{definition}

\begin{corollary}
For all states $p,q$ in some LTS, 
$$p\iocoseq q\mbox{ iff }(\forall\phi\in\eqLiocos\quad p\models\phi\mbox{ iff }q\models\phi).$$
\end{corollary}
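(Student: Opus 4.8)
The plan is to prove the two implications separately, reusing the logical characterisations of $\iocoseq$ already obtained in Corollaries~\ref{cor:icoseq_Liocos} and~\ref{cor:icoseq_tLiocos}, and adding only a thin boolean layer on top.

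For the implication from left to right, I would assume $p\iocoseq q$ and argue by structural induction on a formula $\phi\in\eqLiocos$ that $p\models\phi$ if, and only if, $q\models\phi$. The grammar for $\eqLiocos$ has two base cases, namely $\phi=\psi_{\Liocos}\in\Liocos$ and $\phi=\psi_{\tLiocos}\in\tLiocos$. For the first, Corollary~\ref{cor:icoseq_Liocos} yields directly that $p\models\psi_{\Liocos}$ iff $q\models\psi_{\Liocos}$; for the second, Corollary~\ref{cor:icoseq_tLiocos} gives the analogous statement for $\psi_{\tLiocos}$. The two inductive cases are the boolean connectives $\phi=\phi_1\wedge\phi_2$ and $\phi=\phi_1\vee\phi_2$; these are immediate, since the property `satisfied by $p$ exactly when satisfied by $q$' is preserved by conjunction and disjunction once it holds for the immediate subformulae by the induction hypothesis.

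For the converse implication, I would exploit the fact that $\Liocos\subseteq\eqLiocos$: every formula $\psi_{\Liocos}\in\Liocos$ is, by the first production of the grammar for $\eqLiocos$, already a formula of $\eqLiocos$. Hence, assuming that $p$ and $q$ satisfy exactly the same formulae of $\eqLiocos$, they in particular satisfy exactly the same formulae of $\Liocos$, and Corollary~\ref{cor:icoseq_Liocos} then delivers $p\iocoseq q$. Equivalently, one could use $\tLiocos\subseteq\eqLiocos$ together with Corollary~\ref{cor:icoseq_tLiocos}.

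I do not expect a substantial obstacle here: the whole content of the statement is already packaged in the two earlier corollaries, and the passage to $\eqLiocos$ only adds the outer propositional structure, which is transparent. The one point requiring a little care is simply to observe that no formula of $\eqLiocos$ nests an $\Liocos$-modality inside a $\tLiocos$-modality or vice versa, so that the induction never has to reason about any interaction of the two kinds of modalities at a state; the base cases are handled atomically by the respective corollaries, and only the boolean combinators are analysed inductively.
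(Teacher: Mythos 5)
Your proof is correct and matches the argument the paper intends: the paper states this corollary without proof, but its preceding discussion makes clear that it is meant to follow from Corollaries~\ref{cor:icoseq_Liocos} and~\ref{cor:icoseq_tLiocos} exactly as you do it, by handling the two base cases of the $\eqLiocos$ grammar with those corollaries, closing under the boolean connectives by induction, and obtaining the converse from the inclusion $\Liocos\subseteq\eqLiocos$.
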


\subsection{Related logics}
\label{sec:related}

Logics provide a systematic
way to compare behavioural relations. In this section
we relate the characterisations for \iocos\ presented in Section~\ref{sec:basic} with other logics for coinductively defined relations based on similarity, previously defined in the literature. 

First we start with HML (Definition~\ref{Def:HML}), characterising bisimulation equivalence~\cite{HM85}, 
that 
determines a clear upper bound on the expressiveness of the logics for \iocos. In what follows, (strict) inclusions and 
equalities between logics are up to logical equivalence. For example, ${\mathcal L}\subseteq {\mathcal L'}$ 
means 
that, for each $\varphi\in{\mathcal L}$, there is some $\varphi'\in{\mathcal L'}$ that is logically equivalent to 
$\varphi$. 

\begin{proposition}
$\Liocos \subset  \hml$ and  $\tLiocos \subset  \hml$.
\begin{proof}
  The only non-standard operator used in \Liocos\ is the
  $\eexis{\cdot}$ operator ($\uuniv{\cdot}$ in $\tLiocos$,
  respectively), but, as observed earlier, formulae
  $\eexis{a?}\varphi$ ($\uuniv{a?}\varphi$, respectively) can be
  expressed in \hml\ as $\exis{a?}\varphi \vee \univ{a?}\false$
  ($\exis{a?}\true \wedge \univ{a?}\varphi$, respectively).  We assume
  no distinction between input and output actions in \hml. As noted
  before in Remark~\ref{remark:noa!}, the HML formula $\exis{a?}\true$
  cannot be expressed in {\Liocos} and thus $\tLiocos$ cannot express
  its dual $\univ{a?}\false$.
\end{proof}
\end{proposition}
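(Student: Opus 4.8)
The plan is to prove each of the two strict inclusions by splitting it into an \emph{inclusion} claim --- every formula of the smaller logic has a logically equivalent {\hml} counterpart --- and a \emph{strictness} claim --- some {\hml} formula has no equivalent in the smaller logic. Both $\Liocos$ and $\tLiocos$ are built from the boolean constants, $\wedge$, $\vee$, a genuine {\hml} modality ($\exis{a!}$, respectively $\univ{a!}$), and a single non-standard modality, so the inclusions should fall out of a routine structural induction, whereas the strictness will be reduced to the already-available Remark~\ref{remark:noa!}.

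For $\Liocos \subseteq \hml$ I would induct on the structure of a formula $\varphi \in \Liocos$. The cases $\true$, $\false$, $\phi_1 \wedge \phi_2$, $\phi_1 \vee \phi_2$ and $\exis{a!}\phi$ are immediate, since each such construct already belongs to {\hml} and the induction hypothesis supplies {\hml}-equivalents for the proper subformulae. The only case needing attention is $\varphi = \eexis{a?}\phi$: here I would use the equivalence $\eexis{a?}\phi \equiv \exis{a?}\phi \vee \univ{a?}\false$ noted right after Definition~\ref{def:logic}, substitute for $\phi$ the {\hml}-equivalent given by the induction hypothesis, and observe that the result is an {\hml} formula. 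The argument for $\tLiocos \subseteq \hml$ is symmetric, the single non-trivial case being $\uuniv{a?}\phi \equiv \exis{a?}\true \wedge \univ{a?}\phi$.

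The strictness of $\Liocos \subset \hml$ is exactly Remark~\ref{remark:noa!}: the {\hml} formula $\exis{a?}\true$ holds at state $i$ but not at state $s$ of Example~\ref{Ex:iocos-procs}; since $i \iocos s$, Theorem~\ref{theo:logic_char} gives $i \leqiocos s$, so any putative $\Liocos$-equivalent of $\exis{a?}\true$ satisfied by $i$ would also have to be satisfied by $s$, a contradiction. To obtain the strictness of $\tLiocos \subset \hml$ I would transport this witness across the duality of the two logics rather than look for a new separating pair. Concretely, suppose for contradiction that some $\psi \in \tLiocos$ is logically equivalent to $\univ{a?}\false$. By Corollary~\ref{cor:tlogicANDlogic}.(ii) we have $p \models \psi$ iff $p \not\models \tra^{-1}(\psi)$ for every state $p$, with $\tra^{-1}(\psi) \in \Liocos$. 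Since $p \models \univ{a?}\false$ iff $p \notran{a?}$ iff $p \not\models \exis{a?}\true$, it would follow that $\tra^{-1}(\psi)$ is logically equivalent to $\exis{a?}\true$, contradicting the strictness just shown for $\Liocos$. Hence $\univ{a?}\false$ is an {\hml} formula with no $\tLiocos$-equivalent.

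Overall the argument is routine, and the one point deserving care is the strictness for $\tLiocos$: the cleanest route is the duality argument above, which recycles the single witness $\exis{a?}\true$ from Example~\ref{Ex:iocos-procs} for both logics and avoids constructing an ad hoc separating example for $\tLiocos$.
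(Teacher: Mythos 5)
Your proposal is correct and follows essentially the same route as the paper: the inclusions come from rewriting $\eexis{a?}\varphi$ as $\exis{a?}\varphi \vee \univ{a?}\false$ and $\uuniv{a?}\varphi$ as $\exis{a?}\true \wedge \univ{a?}\varphi$, and strictness comes from Remark~\ref{remark:noa!} together with the duality argument transferring the witness $\exis{a?}\true$ to its dual $\univ{a?}\false$ for $\tLiocos$. You merely spell out the structural induction and the duality step (via Corollary~\ref{cor:tlogicANDlogic}) in more detail than the paper does.
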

 
The simulation and ready simulation preorders~\cite{BIM95} have been logically characterised in the 
literature using subsets of \hml, see for instance~\cite{BM92,FGPR13,Gla01}.

\begin{definition}\label{def:sim_rs}
The logics for plain simulation and ready simulation, denoted by $\Logic{s}$ and $\Logic{rs}$ respectively, 
are defined by the following BNF grammars, where $a\in L$:
 
$$
 \begin{array}{lcl}
 \Logic{s}&:&\phi ::= \false \mid \true \mid \phi \wedge \phi \mid \phi \vee \phi \mid \exis{a}\phi. \\
 \Logic{rs}&:& \phi ::= \false \mid \true \mid \phi \wedge \phi \mid \phi \vee \phi \mid \exis{a}\phi \mid \univ{a}\false. 
\end{array}
$$
\end{definition}

\begin{remark}\label{rem:disjunction}
In the classic literature, see for instance~\cite{BIM95,BM92,Gla01},
the logics characterising simulation and ready simulation include
neither the disjunction operator $\vee$ nor the constant $\false$;
however, as was already proved in~\cite{FGPR13}, those constructs can
be safely added to those logics without altering their discriminating
power.
\end{remark}

\comen{%
\begin{remark} 
 In the classic definitions of $\Logic{s}$ and  $\Logic{rs}$ no distinction between input and output action is considered. In order to compare these logics with \Liocos, we have to consider these simulation semantic relations defined for processes over and input-output alphabet. A simple and convenient way of doing so is by considering the modal operators for both kind of actions. Besides, for the sake of comparison with $\Liocos$ we also add some derivative operators.  
\end{remark}

\begin{proposition}\label{pro:ls_lrs}
 The logics characterising  simulation and ready simulation defined over
an alphabet with input and output actions are:
$$
 \begin{array}{lcl}
 \Logic{s}:&:&
\phi ::= \true \mid \false \mid \phi \vee \phi \mid \phi \wedge \phi
\mid \exis{a?}\phi \mid \exis{x!}\phi. \\
\Logic{rs}&:&
\phi ::= \true \mid \false \mid \phi \vee \phi \mid \phi \wedge \phi
\mid \exis{a?}\phi \mid \exis{x!}\phi \mid \univ{a?}\false \mid \univ{a!}\false.\end{array}
$$
\begin{proof}
The disjoint operator $\vee$ and the constant $\false$ can be safely added without alter the discriminatory 
power of the logic, (see for instance~\cite{FGPR13}).

The distinction between input and output is purely formal because these classic semantics just consider actions in its definitions regardless the flavor. 
\end{proof}
\end{proposition}
 
For the sake of comparison with the  logics $\Logic{s}$ and $\Logic{rs}$ we propose yet another characterisation for \iocos.  
\begin{definition}\label{def:liocosprime} The syntax of the logic $\Logic{\iocos}'$ is defined by the following BNF grammar.
$$\phi ::= \true \mid \false\mid \phi \vee \phi \mid \phi \wedge \phi
\mid \exis{a?} \phi\vee \univ{a?}\false \mid \exis{x!}\phi.$$

As usual, according to Definition~\ref{def:gen_logic_pre}, we consider the induced preorder between processes $\leqlogic{\Logic{\iocos}'}$.
\end{definition}

Considering the logics in Proposition~\ref{pro:ls_lrs} and Definition~\ref{def:liocosprime}, the set of statements gathered in the following proposition can be easily proved.
}

\begin{proposition}\label{prop:sim_rs}  
For the logics $\Logic{s}, \Logic{rs}$ and $\Liocos$ the following properties hold:
  \begin{enumerate}
  \item $\Logic{rs} \supseteq \Logic{\iocos}$,
  \item $\Logic{rs} \not\subseteq \Logic{\iocos}$,
  \item $\Logic{s} \not\supseteq \Logic{\iocos}$ and
  \item $\Logic{s} \not\subseteq \Logic{\iocos}$.
  \end{enumerate}
\end{proposition}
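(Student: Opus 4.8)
The plan is to prove each of the four statements about the relationships between $\Logic{s}$, $\Logic{rs}$ and $\Liocos$ separately. The key observation driving everything is the syntactic correspondence established earlier: every $\eexis{a?}\varphi$ in $\Liocos$ is equivalent to the HML formula $\exis{a?}\varphi \vee \univ{a?}\false$, and every $\exis{a!}\varphi$ is simply a diamond modality. Since $\Logic{rs}$ contains both the diamonds $\exis{a}$ (for all $a \in L$, hence for both inputs and outputs) and the formulae $\univ{a}\false$, together with $\vee$ and $\wedge$, it can build exactly these compound formulae. This immediately handles the inclusion claim (1): given any $\varphi \in \Liocos$, I would argue by structural induction that it has a logically equivalent counterpart in $\Logic{rs}$, the only nonstandard case being $\eexis{a?}\psi$, which rewrites to $\exis{a?}(\text{IH on }\psi) \vee \univ{a?}\false$, both pieces available in $\Logic{rs}$.

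For the three separation results, the natural strategy is to exhibit, for each, a distinguishing formula together with a pair of processes (or to invoke an already-proven non-expressibility fact). For claim (2), $\Logic{rs} \not\subseteq \Liocos$, I would use the formula $\exis{a?}\true \in \Logic{rs}$, which by Remark~\ref{remark:noa!} is provably not expressible in $\Liocos$; this is the cleanest route since that remark already did the work via Example~\ref{Ex:iocos-procs}, where state $i$ satisfies $\exis{a?}\true$ but state $s$ does not, yet $i \iocos s$ forces them to agree on all of $\Liocos$. For claim (3), $\Logic{s} \not\supseteq \Liocos$, I need a formula in $\Liocos$ with no equivalent in $\Logic{s}$: the candidate is $\eexis{a?}\false$, equivalently $\univ{a?}\false$ by Remark~\ref{remark:equivaleceofuuniv}. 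Since $\Logic{s}$ is the purely positive (diamond-only) simulation logic, it is monotone in the sense that it cannot detect the \emph{absence} of a transition; I would make this precise by taking a process $p$ with $p \notran{a?}$ and a process $q$ with $q \tran{a?}$ but otherwise simulating $p$, so that $p$ satisfies $\eexis{a?}\false$ while $q$ does not, yet every $\Logic{s}$-formula satisfied by $p$ is also satisfied by $q$.

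For claim (4), $\Logic{s} \not\subseteq \Liocos$, I would again lean on $\exis{a?}\true$: this formula lies in $\Logic{s}$ (it is just a diamond) but, as in claim (2), is not expressible in $\Liocos$ by Remark~\ref{remark:noa!}. Thus the same witness serves both (2) and (4), which is economical. The main obstacle I anticipate is claim (3): unlike the others, it cannot be settled by quoting an existing remark, so I must construct an explicit pair of processes witnessing that the negative capability of $\univ{a?}\false$ escapes $\Logic{s}$. The subtlety is to ensure the two processes are genuinely $\Logic{s}$-indistinguishable in the required direction (i.e. $p \leqlogic{\Logic{s}} q$) while being separated by the target formula; the standard trick is to let $q$ simulate $p$ (so $q$ satisfies every diamond formula $p$ does) by adding to $q$ precisely the one input transition $\tran{a?}$ that $p$ lacks, leading to a terminating (or quiescent) state. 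I would verify that this extra transition cannot be witnessed by any positive $\Logic{s}$-formula satisfied by $p$, which follows because simulation-logic formulae true at $p$ only assert reachability of behaviours already present in $p$.
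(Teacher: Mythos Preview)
Your proposal is correct and follows essentially the same approach as the paper: a structural-induction argument for item~(1) and, for items~(2)--(4), exhibiting a distinguishing formula together with a pair of processes related by the appropriate preorder. The only difference is cosmetic: the paper introduces a single fresh pair of processes (one with self-loops on $a?,b?,\delta!$, the other on $a?,\delta!$) that serves uniformly for all three separation claims, whereas you economically reuse Example~\ref{Ex:iocos-procs} and Remark~\ref{remark:noa!} for items~(2) and~(4) and sketch a separate pair for item~(3); both routes are equally valid.
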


  \begin{proof}
	Let us consider a state $p$ with transitions $p\tran{a?}p$, $p\tran{b?}p$ and $p\tran{\delta!}p$, and state 
	$q$ with the transitions $q\tran{a?}q$ and $q\tran{\delta!}q$. Observe that $p \iocos q$ and therefore 
	$\Liocos(p)\subseteq \Liocos(q)$ (Theorem~\ref{theo:logic_char}). This will be used in the proofs of statements 2--4 below.
	We consider each statement separately. 
	\begin{enumerate}
		\item $\Logic{rs} \supseteq \Logic{\iocos}$. All formulas in $\Logic{\iocos}$ are also in $\Logic{rs}$.
		
		\item $\Logic{rs} \not\subseteq \Logic{\iocos}$. Considering states $p$ and $q$ above, we 
		have that  $p\iocos q$ and therefore $p\leqlogic{\Liocos} q$. However, the formula $\exis{b?}\true$ in 
		$\Logic{rs}$ is	satisfied by $p$ and not $q$. So there is no \Liocos\ formula equivalent to it.
		
		\item $\Logic{s} \not\supseteq \Logic{\iocos}$. Considering states $p$ and $q$ above, we have that 
		$q$ is simulated by $p$, that is $q\leqlogic{\Logic{s}}p$, but formula $\eexis{b?}\false$ is satisfied by 
		$q$ and not by $p$ and therefore $q\not \leqlogic{\Logic{\iocos}}p$. It follows that there is no formula 
		in $\Logic{s}$ that is logically equivalent to $\eexis{b?}\false$.
		
		\item $\Logic{s} \not\subseteq \Logic{\iocos}$. Considering again states $p$ and $q$. We have that 
		$p\iocos q$ and therefore $p\leqlogic{\Logic{\iocos}}q$, but $p\not\leqlogic{\Logic{s}}q$ because $p$ 
		satisfies $\exis{a?}\true$ but $q$ does not. This means that $\exis{a?}\true$ cannot be expressed in 
		$\Liocos$ up to logical equivalence. \qedhere
	\end{enumerate}
\end{proof}

As we already discussed in Remark~\ref{rem:disjunction}, disjunction
does not add any distinguishing power to $\Logic{rs}$. However,
disjunction is needed for the validity of statement $1$ in
Proposition~\ref{prop:sim_rs}, as the next lemma formalises.

\begin{lemma}
Let $\Logic{rs}'$ be defined as 
$$\Logic{rs}': \phi ::= \true \mid \phi \wedge \phi \mid \exis{a}\phi \mid \univ{a}\false , $$
where $a\in L$. Then, there is no formula in $\Logic{rs}'$ such that is logically equivalent to the formula
$\eexis{a?}\exis{b!}\true \in \Liocos$.
\end{lemma}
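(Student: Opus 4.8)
The plan is to isolate the genuinely disjunctive nature of $\eexis{a?}\exis{b!}\true$ by exhibiting three carefully chosen states and reducing the claim to a closure property of $\Logic{rs}'$ that the target formula provably violates. Concretely, I would define states $s,s',s''$ together with auxiliary states $t,t',u$ by setting $s\tran{a?}t$ and $t\tran{b!}u$; $s'\tran{a?}t'$ with $t'$ quiescent; and $s''$ with no inputs at all. Since $s,s',s'',t',u$ have no proper outputs, the coherence requirement of Definition~\ref{dfn:lts} forces the self-loops $s\tran{\delta!}s$, $s'\tran{\delta!}s'$, $s''\tran{\delta!}s''$, $t'\tran{\delta!}t'$ and $u\tran{\delta!}u$, while $t\notran{\delta!}$ because $t$ affords $b!$. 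A direct computation with the semantics of $\eexis{a?}$ then shows $s\models\eexis{a?}\exis{b!}\true$ (via the $a?$-successor $t$, which affords $b!$), $s''\models\eexis{a?}\exis{b!}\true$ (vacuously, as $s''\notran{a?}$), but $s'\not\models\eexis{a?}\exis{b!}\true$ (its only $a?$-successor $t'$ cannot perform $b!$). So the target formula holds at $s$ and $s''$ yet fails at $s'$.

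The heart of the argument is the following closure lemma, which I would prove by structural induction on $\psi\in\Logic{rs}'$: \emph{if $s\models\psi$ and $s''\models\psi$, then $s'\models\psi$.} The cases $\psi=\true$ and $\psi=\psi_1\wedge\psi_2$ are immediate, the latter by splitting the conjunction and applying the induction hypothesis to each conjunct. For $\psi=\univ{c}\false$, satisfaction at $s$ forces $c\notin\{a?,\delta!\}$, since those are the only actions enabled at $s$; as $s'$ enables exactly the same set $\{a?,\delta!\}$, we get $s'\notran{c}$ and hence $s'\models\univ{c}\false$.

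The only delicate case, and the main obstacle, is the diamond $\psi=\exis{c}\chi$. Here I would exploit that the only transition of $s''$ is the quiescence self-loop $s''\tran{\delta!}s''$: from $s''\models\exis{c}\chi$ it follows that $c=\delta!$ and $s''\models\chi$. The hypothesis $s\models\exis{\delta!}\chi$ likewise reduces to $s\models\chi$, because the unique $\delta!$-successor of $s$ is $s$ itself. The induction hypothesis applied to the subformula $\chi$, now satisfied by both $s$ and $s''$, then yields $s'\models\chi$, and since $s'\tran{\delta!}s'$ we conclude $s'\models\exis{\delta!}\chi=\psi$. This is precisely the step for which the construction is engineered: all three states share the $\delta!$-self-loop while $s''$ is otherwise inert, so that every diamond surviving at $s''$ is a $\delta!$-diamond whose truth value is forced to recur at $s'$.

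Finally I would close by contradiction. If some $\psi\in\Logic{rs}'$ were logically equivalent to $\eexis{a?}\exis{b!}\true$, then $s\models\psi$ and $s''\models\psi$, since both states satisfy the target formula; the closure lemma would then give $s'\models\psi$ and hence $s'\models\eexis{a?}\exis{b!}\true$, contradicting the computation above. The phenomenon thereby exposed is that $\eexis{a?}\exis{b!}\true$ is inherently disjunctive, being equivalent in $\Logic{rs}$ to $\exis{a?}\exis{b!}\true\vee\univ{a?}\false$, so that deleting $\vee$ and $\false$ as in $\Logic{rs}'$ destroys its expressibility even though, by Remark~\ref{rem:disjunction}, the overall discriminating power of the logic is unchanged.
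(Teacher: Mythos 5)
Your proof is correct, and it takes a genuinely different route from the paper's. The paper works on the syntactic side of the candidate formula: it first normalises an arbitrary $\phi\in\Logic{rs}'$, up to logical equivalence, into a conjunction $\bigwedge_{i\in I}\exis{a_i}\phi_i\wedge\bigwedge_{j\in J}\univ{b_j}\false$ and then runs a case analysis on whether $I$ and $J$ are empty, on whether the $b_j$'s equal $a?$, and so on, exhibiting a separating process for each case. You instead fix a single three-state configuration $s,s',s''$ (all coherent with respect to quiescence, which I checked) and prove a semantic closure lemma by structural induction on $\Logic{rs}'$: every formula true at both $s$ and $s''$ is true at $s'$, while $\eexis{a?}\exis{b!}\true$ holds at $s$ and $s''$ but fails at $s'$. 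The delicate diamond case goes through exactly as you say, because $s''$ is inert apart from its $\delta!$-self-loop and the $\delta!$-diamond acts as the identity on truth values at all three states. Your argument is more modular and, to my mind, more informative: it isolates precisely what the absence of $\vee$ and $\false$ buys --- a preservation property of $\Logic{rs}'$-definable sets that the inherently disjunctive formula $\exis{a?}\exis{b!}\true\vee\univ{a?}\false$ violates --- and it avoids both the normal-form step and the somewhat ad hoc case distinctions. What the paper's version buys in exchange is a directly constructive refutation: for each shape a candidate formula can take, it names a concrete process witnessing the failure of equivalence, without needing an auxiliary induction over the whole logic.
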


\begin{proof}
Let $\phi$ be a formula in $\Logic{rs}'$. We will show that $\phi$ is not equivalent to 
$\psi=\eexis{a?}\exis{b!}\true$. To this end, note first of all that, up to logical equivalence, $\phi$ can be 
written as $\phi=\bigwedge_{i\in I} \exis{a_i}\phi_i\wedge\bigwedge_{j\in J}\univ{b_j}\false$, for finite sets $I$ 
and $J$, actions $a_i$ and $b_j$ and formulae $\phi_i$. Notice that if $I = J = \emptyset$, then $\phi$ is a 
tautology which is not logically equivalent to $\psi=\eexis{a?}\exis{b!}\true$. Let us assume then that $I\neq 
\emptyset$ or $J\neq \emptyset$. We distinguish two cases.

First, assume that some $b_j$ is different from $a?$. Let us consider
the process $p$ such that $p\tran{b_j}p$ and $p\tran{\delta!}p$. We
have that $p\models \psi$ but $p\not\models\phi$, which implies that
$\psi$ and $\phi$ are not logically equivalent.
	
Next, assume that all the $b_j$'s are equal to $a?$. There are two subcases.
	\begin{itemize}
	\item $J\neq \emptyset$. In this case, if there exists some
          $i\in I$ such that $a_i=a?$, $\phi$ is unsatisfiable, and
          thus is not equivalent to $\psi$. Otherwise, let us consider
          two subcases. If $I=\emptyset$, $\phi$ is logically
          equivalent to $\univ{a?}\false$ which is not equivalent to
          $\psi$. Otherwise, if $I\neq\emptyset$, the process $p$ such
          that $p\tran{a?}p$ and $p\tran{b!}p$, clearly satisfies
          $\psi$ but $p\not\models \phi$.
	
	\item $J = \emptyset$. In this case, $I\neq \emptyset$. There are two cases: either there exists some 
	$i\in I$ such that $a_i=a?$, or there is not. In the former, the process $\text{nil} \tran{\delta!} \text{nil}$ satisfies $\psi$ but does not satisfy $\phi$. In 
	the latter, let us consider a process $q$ such that $q\tran{a?}q$ and $q\tran{\delta!}q$. Clearly $q\models\psi$ but 
	$q\not\models \phi$, which implies that those formulae are not equivalent.\qedhere  	
	\end{itemize}
\end{proof}

Besides the classic semantic relations (bisimulation, simulation and
ready simulation) there is a less standard relation in the literature,
the so-called covariant-contravariant simulation, that is quite close
to \iocos\ in its formulation. Let us recall here this relation and
its characterising logic.

\begin{definition}[Covariant-contravariant simulation~\cite{FabregasFP10}]
\label{co_contra_sim}
Consider an alphabet $A$, and let $\{A^r,A^l, A^{\mathit{bi}}\}$ be a partition of this alphabet. A binary 
relation $R$ over states in an LTS is a covariant-contravariant simulation relation if and only if for each 
$(p,q)\in R$ the following conditions hold:
\begin{itemize}
\item for all $a\in A^r\cup A^{\mathit{bi}}$ and all $p\tran{a}p'$ there exists $q\tran{a}q'$ with $(p',q')\in R$,
\item for all $b\in A^l\cup A^{\mathit{bi}}$, and all $q\tran{b}q'$ there exists $p\tran{b}p'$ with $(p',q')\in R$.
\end{itemize}
We will write $p\coco q$ if there exists a covariant-contravariant simulation $R$ such that $(p,q)\in R$.
\end{definition}

Let us note that the case of $A^l =  A^{\mathit{bi}} = \emptyset$ and $A^r= L$ would yield plain simulation. In 
addition, the case  $A^l =  A^r = \emptyset$ and $A^{\mathit{bi}}= L$ would yield bisimulation.  
Covariant-contravariant simulation are also called $XY$-simulation (for instance in \cite{AV10}). There the 
sets $X$ and $Y$ might not be disjoint, but the definitions are equivalent.

\begin{definition}[Covariant-contravariant logic~\cite{FabregasFP10}]
The syntax of the logic for covariant-contravariant, denoted by \Lcoco, is defined by the following BNF grammar:.

$$\phi ::= \true \mid \false \mid \phi \wedge \phi
\mid \phi \vee \phi \mid \exis{a} \phi \mid
\univ{b}\phi,$$
where $a\in A^r\cup A^{\mathit{bi}}$ and $b\in A^l\cup A^{\mathit{bi}}$.
\end{definition}

There are clear similarities between Definition~\ref{co_contra_sim}
and the definition of {\iocos} in Definition~\ref{dfn:iocos}.
Nevertheless, as it is not difficult to prove from the characterising
logics, the relations \coco\ and \iocos\ are not related.
 
\begin{proposition}
The covariant-contravariant simulation and \iocos\ are not comparable.
\end{proposition}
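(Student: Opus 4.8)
The plan is to fix the natural correspondence between the two settings---taking outputs to be covariant and inputs contravariant, i.e.\ $A^r=O$, $A^l=I$ and $A^{\mathit{bi}}=\emptyset$---and then to exhibit two small pairs of processes, one witnessing each of the two non-implications, arguing through the logical characterisations. Recall that $\iocos$ is characterised by $\Liocos$ (Theorem~\ref{theo:logic_char}) and that $\coco$ is characterised by $\Lcoco$ in the sense of Definition~\ref{def:gen_logic_pre} (from~\cite{FabregasFP10}); under the chosen partition $\Lcoco$ carries the output diamonds $\exis{a!}$ and the input boxes $\univ{a?}$, whereas $\Liocos$ carries the output diamonds $\exis{a!}$ and the non-forcing input diamonds $\eexis{a?}$. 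The essential difference I would exploit is that $\iocos$ forward-simulates the input transitions of the smaller process (its condition on inputs matches $p$'s moves by $q$), while $\coco$ backward-simulates the input transitions of the larger one; the mismatch in how several input-successors are treated is what drives both separations.

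For the direction $\iocos\not\Rightarrow\coco$, first I would take $p$ with $p\tran{a?}p_1$ and $p_1\tran{b!}p_2$ (together with the forced quiescent loops $p\tran{\delta!}p$ and $p_2\tran{\delta!}p_2$), and $q$ with two input branches $q\tran{a?}q_1$ and $q\tran{a?}q_1'$, where $q_1\tran{b!}q_2$ but $q_1'$ is quiescent ($q_1'\tran{\delta!}q_1'$), plus $q\tran{\delta!}q$. One checks that $\{(p,q),(p_1,q_1),(p_2,q_2)\}$ is an $\iocos$-relation, so $p\iocos q$. However, the formula $\univ{a?}\exis{b!}\true$, which lies in $\Lcoco$, holds at $p$ (its only $a?$-successor $p_1$ can emit $b!$) but fails at $q$ (the extra successor $q_1'$ cannot). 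By the logical characterisation of $\coco$ this shows that $p\coco q$ fails.

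For the converse direction $\coco\not\Rightarrow\iocos$, symmetrically I would take $p$ with two input branches $p\tran{a?}p_1$ and $p\tran{a?}p_1'$, where $p_1$ is quiescent and $p_1'\tran{b!}p_1''$, and $q$ with a single quiescent input branch $q\tran{a?}q_1$ (with the appropriate quiescent loops). Matching $q$'s only input move by $p_1$ shows that $\{(p,q),(p_1,q_1)\}$ is a covariant-contravariant simulation, so $p\coco q$; note that the branch $p_1'$ is simply never used, which is precisely what the contravariant clause permits. On the other hand the $\Liocos$ formula $\eexis{a?}\exis{b!}\true$ holds at $p$ (witnessed by the successor $p_1'$) but not at $q$, since $q$ does have an $a?$-transition and its only $a?$-successor $q_1$ cannot emit $b!$; by Theorem~\ref{theo:logic_char} this shows that $p\iocos q$ fails. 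The two witnesses together establish incomparability.

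The only points requiring genuine care---and where a careless argument would go astray---are keeping the two simulation directions straight together with the corresponding direction of logical entailment (since $\coco$ forces $q$'s inputs to be matched by $p$, whereas $\iocos$ forces $p$'s inputs to be matched by $q$), and checking that every state in the witnesses respects the coherence requirement for $\delta!$ in Definition~\ref{dfn:lts}, so that the quiescent self-loops are legitimate. I would also record explicitly that the separation is stated relative to the natural partition above; a different assignment of inputs and outputs to $A^r$, $A^l$ and $A^{\mathit{bi}}$ would call for re-examining the witnesses, but the same two constructions adapt.
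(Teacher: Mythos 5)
Your proposal is correct and follows essentially the same route as the paper: fix $A^{\mathit{bi}}=\emptyset$, exhibit one pair related by $\iocos$ but separated by a $\Lcoco$-formula (your $\univ{a?}\exis{b!}\true$ example is, up to inessential details, the paper's own witness) and one pair related by $\coco$ but separated by an $\Liocos$-formula, invoking Theorem~\ref{theo:logic_char} and the characterisation of $\coco$ to turn the distinguishing formulae into refutations of the relations. The only substantive difference is scope: the paper treats \emph{both} partitions, $A^r=O,\ A^l=I$ and $A^r=I,\ A^l=O$ (in the latter case comparing $\Lcoco$ with $\tLiocos$), whereas you fix the first and merely remark that the other would need re-examination; to establish the proposition in the generality the paper intends, that second case should be carried out as well, though your two constructions adapt to it without difficulty.
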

\begin{proof}
Given that for the \iocos\ relation the input and output sets are disjoint, we consider  
$A^{\mathit{bi}}=\emptyset$. Let us consider the following processes:
\begin{itemize}
\item $p\tran{a?}p'$, $p\tran{\delta!}p$ and $p'\tran{b!}p'$;

\item  $q\tran{a?}q_1$, $q\tran{\delta!}q$, $q\tran{a?}q_2$, $q_1\tran{b!}q_1$ and $q_2\tran{\delta!}q_2$;

\item $r\tran{a?}r$, $r\tran{b?}r$ and $r\tran{\delta!}r$; and

\item $s\tran{a?}s$ and $s\tran{\delta!}s$.
\end{itemize}

We discuss the different alternatives:
\begin{itemize}  
\item If  $A^r=I$, $A^l = O$, then $\Lcoco$ can be expressed as
 $$\phi ::= \true \mid \false \mid \phi \wedge \phi \mid \phi \vee \phi \mid \exis{a?} \phi \mid \univ{b!}\phi,$$

which is not comparable with $\tLiocos$
$$\phi ::= \true \mid \false \mid \phi \wedge \phi \mid \phi \vee \phi \mid \uuniv{a?} \phi \mid \univ{b!}\phi. $$

First, let us consider the processes $p$ and $q$. $p\coco q$ which implies that 
$\Lcoco(p)\subseteq\Lcoco(q)$, but $p\models \uuniv{a?}\univ{\delta!}\false$ whereas $q\not\models 
\uuniv{a?}\univ{\delta!}\false$. It follows that there is no formula in $\Lcoco$ that is logically equivalent to 
$\uuniv{a?}\univ{\delta!}\false$.

Analogously, consider $r$ and $s$. Is clear that $r \iocos s$ which implies that 
$\tLiocos(r)\subseteq\tLiocos(s)$, but $r\models\exis{b?}\true$ and $s\not\models\exis{b?}\true$, which 
implies that there is no formula in $\tLiocos$ that is logically equivalent to $\exis{b?}\true$.

\item If  $A^r=O$, $A^l = I$, then $\Lcoco$ can be expressed as
$$\phi ::= \true \mid \false \mid \phi \wedge \phi \mid \phi \vee \phi \mid \univ{a?}\phi \mid \exis{b!} \phi,$$

that is not comparable to $\Liocos$
$$\phi ::= \true \mid \false \mid \phi \wedge \phi \mid \phi \vee \phi \mid \eexis{a?} \phi \mid\exis{b!}\phi. $$

First, let us consider $p$ and $q$. Observe that $p\iocos q$, which implies 
$\Liocos(p)\subseteq \Liocos(q)$. Now, $p\models \univ{a?}\exis{b!}\true$ but $q\not\models 
\univ{a?}\exis{b!}\true$. It follows that there is no formula in $\Lioco$ that is logically equivalent to 
$\univ{a?}\exis{b!}\true$.

Analogously, consider $r$ and $s$. Is clear that $s \coco r$ which implies that 
$\Lcoco(s)\subseteq\Lcoco(r)$, but $s\models\eexis{b?}\univ{\delta!}\false$ and 
$r\not\models\eexis{b?}\univ{\delta!}\false$, which implies that there is no formula in $\Lcoco$ that is 
logically equivalent to $\eexis{b?}\univ{\delta!}\false$.\qedhere
\end{itemize}
\end{proof}

\comen{
Finally we conclude this section comparing \iocos\ with a conformance simulation

\begin{definition}[conformance simulation~\cite{FabregasFP10}]
We say that a binary relation $R$ of states in a labelled transition system is a conformance simulation relation if and only if for any $(p,q)\in R$ the following conditions hold:
\begin{itemize}
\item For all $a\in A$, if $p\tran{a}$, then $q\tran{a}$.

\item For all $a\in A$ such that $q\tran{a}q'$ and $p\tran{a}$, there exists some $p'$ with $p\tran{a}p'$ and $(p',q')\in R$.
\end{itemize}
We will write $p\conf q$ if there exists a conformance simulation $R$ such that $(p,q)\in R$.
\end{definition}

\begin{definition}[conformance logic~\cite{FabregasFP10}]
The syntax of the logic for conformance simulation, denoted by \Lconf, is defined by the following {BNF} grammar.

$$\phi ::= \true \mid \false \mid \phi \wedge \phi
\mid \phi \vee \phi \mid \uuniv{a}\phi. $$

Where $a\in A$.
\end{definition}

\begin{proposition}
  Compare $\iocos$ with conformance~\cite{FabregasFP10} 
\end{proposition}
}

The comparison between the logics characterising the relations
\iocos\ and Tretmans' \ioco is of particular interest. We will delay
this comparison till Section~\ref{sec:relation_ioco}, as we will make
use of some of the results in the following sections.

\section{Adding fixed points to the logics}\label{sec:fix-points}

Even if an LTS is image-finite and has a finite number of states, it can capture infinite behaviours, for example by 
means of loops.  As is well known, the logics presented in Section~\ref{sec:basic} can only describe properties of finite 
fragments of process computations and can be made more expressive by extending them with fixed-point operators.

Here we only explicitly show how to add the greatest fixed-point
operator to one of the characterising logics for \iocos, $\tLiocos$,
presented in Section~\ref{sec:basic}. We will make use of this
extended logic in Section~\ref{sec:charac_formula} to define the
characteristic formula for processes with respect to \iocos.

In this section we follow the approach of the so-called equational $\mu$-calculus (see, for example, 
\cite{AILS12,Larsen90} to which we refer the reader for further details). We assume a countably infinite set 
$\mathsf{Var}$ of formula variables. The meaning of formula variables is specified by means of a declaration. A 
declaration is a function $D:\mathsf{Var} \rightarrow \tLiocos$ that associates a formula $D(X)$ with each variable 
$X$. Intuitively, if $D(X) = \phi$, then $X$ stands for the largest solution of the equation $X = \phi$. In what follows, 
we are only interested in the restriction of a declaration to a finite collection of formula variables, and we express such 
a declaration as a system of equations $X_1=\phi_1,\ldots, X_n=\phi_n$, with $n\geq 1$.

We interpret the language over the set $S$ of processes in some
LTS. Since a formula $\varphi$ may contain formula variables, its
semantics---that is, the set of processes in $S$ that satisfy
$\varphi$---is defined relative to an environment
$\sigma:\mathsf{Var}\rightarrow \mathcal{P}(S)$. Intuitively, $\sigma$
assigns to each variable the set of processes in $S$ for which the
variable holds true.

As is well known, the set of all environments $[\mathsf{Var}\rightarrow
    \mathcal{P}(S)]$ is a complete lattice with respect to the partial
  order induced by pointwise set inclusion.

\begin{definition}\label{Def;fixed-point-logic}
	The syntax of the logic for \iocos with greatest fixed points,
        denoted by \nLiocos, is defined by the following BNF grammar
	$$\phi ::= X\mid \true \mid \false \mid \phi \wedge \phi
	\mid \phi \vee \phi \mid \uuniv{a?} \phi \mid
	\univ{a!}\phi, $$
	
\noindent where $a?\in I$, $a!\in O$ and $X\in\mathsf{Var}$.
\end{definition}

The semantics of \nLiocos\ is given by:

\begin{itemize}
	\item $(\sigma,p)\models X$ iff $p\in\sigma(X)$.
	
	\item $(\sigma,p)\models \true$ for all $p$.
	
	\item  $(\sigma,p)\models \varphi_1\wedge\varphi_2$ iff $(\sigma,p)\models \varphi_1$ and $(\sigma,p)\models \varphi_2$.
	
	\item  $(\sigma,p)\models \varphi_1\vee\varphi_2$ iff $(\sigma,p)\models \varphi_1$ or $(\sigma,p)\models \varphi_2$.
	
	\item $(\sigma,p)\models \univ{a!} \phi$ iff $(\sigma,p')\models \phi$ for each  $p \tran{a!} p'$.
	
	\item $(\sigma,p)\models \uuniv{a?} \phi$ iff $p\tran{a?}$ and $(\sigma,p')\models \phi$,
	for each  $p \tran{a?} p'$.	
\end{itemize}
The semantics of each formula $\varphi$ is therefore the function
$\sem{\varphi}: [\mathsf{Var}\rightarrow\mathcal{P}(S)]\rightarrow
\mathcal{P}(S)$ defined thus:
\[
\sem{\varphi}\sigma = \{ p \mid (\sigma,p)\models \varphi \}. 
\]
A declaration function $D$ induces an endofunction $\sem{D}$ over the complete lattice $[\mathsf{Var}\rightarrow\mathcal{P}(S)]$ defined by
\[
(\sem{D}\sigma)(X)= \sem{D(X)}\sigma . 
\]

\begin{proposition}
  The function $\sem{D}$ is monotone. Indeed, so is $\sem{\varphi}$
  for each formula $\varphi$.
\end{proposition}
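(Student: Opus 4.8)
The plan is to prove the stronger and more fundamental claim first---that $\sem{\varphi}$ is monotone for every formula $\varphi$---and then to derive the monotonicity of $\sem{D}$ as an immediate consequence. Throughout, I write $\sigma \sqsubseteq \sigma'$ for the pointwise order on environments, that is, $\sigma(X) \subseteq \sigma'(X)$ for every $X \in \mathsf{Var}$. Monotonicity of $\sem{\varphi}$ then amounts to showing that $\sigma \sqsubseteq \sigma'$ implies $\sem{\varphi}\sigma \subseteq \sem{\varphi}\sigma'$.

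First I would fix two environments with $\sigma \sqsubseteq \sigma'$ and proceed by structural induction on $\varphi$. The base cases are immediate: $\sem{\true}\sigma = S$ and $\sem{\false}\sigma = \emptyset$ independently of $\sigma$, while $\sem{X}\sigma = \sigma(X) \subseteq \sigma'(X) = \sem{X}\sigma'$ is exactly the hypothesis $\sigma \sqsubseteq \sigma'$. For the boolean connectives, the equalities $\sem{\varphi_1 \wedge \varphi_2}\sigma = \sem{\varphi_1}\sigma \cap \sem{\varphi_2}\sigma$ and $\sem{\varphi_1 \vee \varphi_2}\sigma = \sem{\varphi_1}\sigma \cup \sem{\varphi_2}\sigma$ reduce the claim to the induction hypotheses for $\varphi_1$ and $\varphi_2$, since intersection and union are monotone in both arguments.

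The modal cases are the crux. For $\univ{a!}\phi$, suppose $p \in \sem{\univ{a!}\phi}\sigma$, so that $p' \in \sem{\phi}\sigma$ for every $p \tran{a!} p'$. The induction hypothesis gives $\sem{\phi}\sigma \subseteq \sem{\phi}\sigma'$, hence $p' \in \sem{\phi}\sigma'$ for each such $p'$, and therefore $p \in \sem{\univ{a!}\phi}\sigma'$. The case $\uuniv{a?}\phi$ is handled identically, the only addition being the side condition $p \tran{a?}$; this condition is a property of the underlying LTS and does not mention $\sigma$ at all, so it is preserved verbatim when passing from $\sigma$ to $\sigma'$.

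The point that deserves care---and which makes the result go through---is precisely these box-like modalities $\univ{a!}$ and $\uuniv{a?}$: a priori one might fear that a universally quantified, forcing modality is antitone. Monotonicity is safe here because the quantification ranges over the fixed transitions of $p$, not over the environment, and because the logic \nLiocos\ contains no negation, so every occurrence of a subformula---in particular of $\phi$ under a modality---is positive. Finally, monotonicity of $\sem{D}$ follows at once: if $\sigma \sqsubseteq \sigma'$, then for each $X$ we have $(\sem{D}\sigma)(X) = \sem{D(X)}\sigma \subseteq \sem{D(X)}\sigma' = (\sem{D}\sigma')(X)$ by the monotonicity of $\sem{D(X)}$ just established, whence $\sem{D}\sigma \sqsubseteq \sem{D}\sigma'$.
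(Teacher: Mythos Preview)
Your proof is correct. You proceed by a direct structural induction on $\varphi$, verifying monotonicity for each constructor of \nLiocos\ and then deducing monotonicity of $\sem{D}$ pointwise; this is sound and complete, and the handling of the non-standard modality $\uuniv{a?}$ is fine since, as you note, the side condition $p\tran{a?}$ does not depend on the environment.

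The paper takes a different, shorter route: it observes that $\uuniv{a?}\varphi$ is logically equivalent to the HML formula $\exis{a?}\true \wedge \univ{a?}\varphi$, so every \nLiocos\ formula is (up to this unfolding) an HML formula, and then appeals to the classical fact that the semantic functions of HML operators are monotone (citing Larsen). Your argument is self-contained and makes the induction explicit, which has the benefit of not relying on an external reference; the paper's argument is more economical, leveraging the earlier remark that $\uuniv{a?}$ is a derived HML operator and thereby avoiding a case analysis that is essentially a repetition of well-known material.
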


\begin{proof}
Since $\uuniv{a?}\varphi$ is logically equivalent to the HML formula $\exis{a?}\true\wedge\univ{a?}\varphi$, the result 
follows straightforwardly from the monotonicity of the semantic counterparts of the operators in HML (see 
\cite{Larsen90}, for example).
\end{proof}

Since $\sem{D}$ is a monotone endofunction over a complete lattice, it
has a greatest (and a least) fixed point $\sem{D}_{\max}$ by the
Knaster-Tarski Theorem~\cite{Tarski55}.


\begin{theorem}
	For all states $i, s$ in some LTS,
	$$i\iocos s \quad\mbox{iff}\quad \forall\phi\in\nLiocos\quad {s\models\phi}\Rightarrow{i\models\phi}.$$
\end{theorem}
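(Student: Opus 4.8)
The plan is to prove the two implications separately, reducing the harder direction to a satisfaction-preservation property established by combining structural induction on formulae with fixed-point induction on the greatest-fixed-point environment $\sem{D}_{\max}$ used to interpret closed formulae (so that ``$s\models\phi$'' abbreviates $s\in\sem{\phi}\sem{D}_{\max}$). The \emph{if} implication is immediate: every variable-free formula of \tLiocos{} is also a formula of \nLiocos{}, i.e. $\tLiocos\subseteq\nLiocos$, so the hypothesis that $s\models\phi$ implies $i\models\phi$ for all $\phi\in\nLiocos$ yields the same property restricted to all $\phi\in\tLiocos$, and Theorem~\ref{theo:tlogic_char} then gives $i\iocos s$.

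For the \emph{only if} implication, the heart of the argument is a key lemma. Call an environment $\sigma:\mathsf{Var}\rightarrow\mathcal{P}(S)$ \emph{$\iocos$-consistent} if, for every variable $X$ and all states with $i\iocos s$, one has $s\in\sigma(X)\Rightarrow i\in\sigma(X)$. I would prove, by structural induction on $\phi\in\nLiocos$, that whenever $\sigma$ is $\iocos$-consistent and $i\iocos s$, then $s\in\sem{\phi}\sigma$ implies $i\in\sem{\phi}\sigma$. The variable case is exactly the $\iocos$-consistency of $\sigma$, while the $\true$, $\false$, conjunction and disjunction cases follow directly from the induction hypothesis. The two modal cases are where the conditions of Definition~\ref{dfn:iocos} are used. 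For $\univ{a!}\phi$, each $a!$-successor $i'$ of $i$ is matched, by Definition~\ref{dfn:iocos}.\ref{dfn:iocos:3}, by some $s\tran{a!}s'$ with $i'\iocos s'$; since $s\in\sem{\univ{a!}\phi}\sigma$ forces $s'\in\sem{\phi}\sigma$, the induction hypothesis gives $i'\in\sem{\phi}\sigma$, whence $i\in\sem{\univ{a!}\phi}\sigma$. For the forcing box $\uuniv{a?}\phi$, I would first use $s\tran{a?}$, that is $a?\in\ins(s)$, together with Definition~\ref{dfn:iocos}.\ref{dfn:iocos:1} ($\ins(s)\subseteq\ins(i)$) to conclude $i\tran{a?}$; then, for each $i\tran{a?}i'$, Definition~\ref{dfn:iocos}.\ref{dfn:iocos:2} supplies some $s\tran{a?}s'$ with $i'\iocos s'$, and as before the induction hypothesis yields $i'\in\sem{\phi}\sigma$. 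Hence $i\in\sem{\uuniv{a?}\phi}\sigma$.

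It then remains to transfer this lemma from an arbitrary $\iocos$-consistent environment to $\sem{D}_{\max}$, which I would do by fixed-point induction. The class of $\iocos$-consistent environments contains the top environment assigning $S$ to every variable, and is closed under arbitrary meets: if every $\sigma_\kappa$ is $\iocos$-consistent and $s\in\bigcap_\kappa\sigma_\kappa(X)$, then $s\in\sigma_\kappa(X)$ for every $\kappa$, whence $i\in\sigma_\kappa(X)$ for every $\kappa$, i.e. $i\in\bigcap_\kappa\sigma_\kappa(X)$. Moreover it is closed under $\sem{D}$: applying the key lemma to the formula $D(X)\in\nLiocos$ shows that $(\sem{D}\sigma)(X)=\sem{D(X)}\sigma$ inherits $\iocos$-consistency from $\sigma$. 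Since $\sem{D}_{\max}$ is the limit of the decreasing (transfinite) sequence of approximants $\sem{D}^{\kappa}$ obtained by iterating the monotone map $\sem{D}$ downwards from the top environment, a straightforward induction using these two closure properties shows that $\sem{D}_{\max}$ is itself $\iocos$-consistent. Applying the key lemma one last time with $\sigma=\sem{D}_{\max}$ then delivers exactly the \emph{only if} implication.

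I expect the main obstacle to be this fixed-point transfer step rather than the modal induction: one must check that the invariant of $\iocos$-consistency survives the approximation of the greatest fixed point, which is why I isolate closure under meets and under $\sem{D}$ as the properties making the induction go through. Because the modalities of \nLiocos{} are box-like and the LTS is image-finite, the downward iteration in fact stabilises at $\omega$, so the transfinite induction may be replaced by an ordinary one over the approximants $\sem{D}^{n}$ if preferred. Within the modal induction itself, the only genuinely delicate point is the forcing-box case, where Definition~\ref{dfn:iocos}.\ref{dfn:iocos:1} is indispensable to guarantee that $i$ can actually perform the input $a?$ that $s$ forces.
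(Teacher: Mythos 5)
Your proof is correct and follows essentially the route the paper intends: the \emph{if} direction via $\tLiocos\subseteq\nLiocos$ and Theorem~\ref{theo:tlogic_char} is exactly the paper's argument, and for the \emph{only if} direction the paper merely appeals to ``standard techniques for the $\mu$-calculus and bisimulation equivalence'', of which your development (the satisfaction-preservation lemma for $\iocos$-consistent environments by structural induction, followed by the transfinite fixed-point transfer to $\sem{D}_{\max}$ using closure under $\sem{D}$ and under meets) is a correct and complete instantiation.
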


\begin{proof}
Since $\tLiocos$ is a sublogic of $\nLiocos$, the \emph{if} 
implication of the proof follows by
Theorem~\ref{theo:tlogic_char}. For the \emph{only if} 
implication,
the proof follows by applying the standard techniques for \mucal\ and
bisimulation equivalence (see, for example, \cite[Section~$5.4$]{Stirling01}).
\end{proof}

\begin{remark}\label{rem:numu}
  We showed in Section~\ref{sec:basic} how $\Liocos$ and $\tLiocos$ are dual logics (Corollary~\ref{cor:tlogicANDlogic}). Analogously to the construction of $\nLiocos$ from $\tLiocos$, we could add $\mu$, the least fixed point operator, to the logic $\Liocos$, for defining $\mLiocos$.
\end{remark}

Logics like \nLiocos\ can be used in model checking in order to
evaluate if an implementation satisfies some desired properties. There
is already a tool support for \iocos\ implemented in the \mCRL tool
set \cite{mCRL2,GLM15,GrooteMousavi}, including a minimisation algorithm. This
allows one to model check \nLiocos\ using the \mCRL tool. In fact, the
built-in logic used in \mCRL is the \emph{first-order modal
  $\mu$-calculus}~\cite{Keiren13}, which is an extension of the
$\mu$-calculus where the actions are allowed to include data
parameters. We omit that extension here, but we show some examples of
properties that can be expressed with our logic.

\begin{example} 
Let us illustrate some useful properties that can be expressed using the logic $\nLiocos$.

\begin{itemize}
	\item In the theory of \ioco, implementations are supposed to
          be input enabled, that is, each of their reachable states
          should be able to perform every input action. This
          requirement can be expressed as follows:
	
	\[X= \bigwedge_{a?\in I}\uuniv{a?}X \wedge \bigwedge_{b!\in O}\univ{b!}X.
	\]
	
	\item The next formula formalises that a property $\varphi$
          must hold invariantly over input-enabled systems:
	
	\[X=\varphi \wedge (\bigwedge_{a?\in I} \uuniv{a?}X \wedge \bigwedge_{o!\in O}\univ{o!}X).
	\]
    
     That is, $\varphi$ must be true, and must remain true after performing any input or output action.
\end{itemize}

\end{example}

\section{Characteristic formulae}
\label{sec:charac_formula}

Once we have logically characterised {\iocos}, it is natural to search
for a single logical formula that characterises completely the
behaviour of a process up to that preorder; this is what is called a
\emph{characteristic formula}~\cite{GrafS84,SteffenI94}. In this
section we follow~\cite{AILS12} in order to define such a formula for
processes with respect to {\iocos}. We will consider the logic from
Section~\ref{sec:fix-points} with its greatest-fixed-point
interpretation given therein.

First, let us define the notion of characteristic formula for the particular case of the \iocos-semantics.

\begin{definition}\label{df:characteristic_formula}
	A formula $\chi_s$ is \emph{characteristic} for $s$ (with
        respect to {\iocos}) iff for all $i$ it holds that $i\models
        \chi_s$ if, and only if, $i \iocos s$. (Note that this also
        implies that $s\models \chi_s$.)
\end{definition}

The next proposition states the explicit definition of characteristic formulae in the {\iocos} framework.

\begin{proposition}\label{pro:characteristic_formula}
	The characteristic formula for a process $q$ in a finite LTS can be obtained recursively as:
	
	\begin{equation}\label{eq:char_formula}
	\chi_q=\bigwedge_{a?\in \ins(q)} \uuniv{a?}\bigvee_{q\tran{a?}q'}\chi_{q'} \wedge \bigwedge_{a!\in O} 
	 \univ{a!}\bigvee_{q\tran{a!}q'}\chi_{q'},
	\end{equation}
	that is, letting $D$ be the declaration defined by the equations of the form~(\ref{eq:char_formula}), we 
	have that: $(\sem{D}_{\max},p)\models \chi_q$ iff $p\iocos q$.
\end{proposition}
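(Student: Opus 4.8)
The plan is to prove the biconditional $(\sem{D}_{\max},p)\models \chi_q$ iff $p\iocos q$ by establishing the two implications separately, exploiting the fact that greatest fixed points give us both a coinductive proof principle and a fixed-point equation to unfold.

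\textbf{The ``if'' direction.} First I would show that $p\iocos q$ implies $(\sem{D}_{\max},p)\models\chi_q$. The natural approach is to define an environment $\sigma_0$ by $\sigma_0(X_q)=\{p\mid p\iocos q\}$ for every process $q$ in the (finite) LTS, and then prove that $\sigma_0$ is a \emph{post-fixed point} of $\sem{D}$, i.e. $\sigma_0(X_q)\subseteq\sem{D(X_q)}\sigma_0=\sem{\chi_q}\sigma_0$. Unwinding this amounts to showing: whenever $p\iocos q$, then $(\sigma_0,p)\models\chi_q$, where the right-hand side is read against $\sigma_0$ rather than the fixed point. Concretely, I would take $p\iocos q$ and verify the two big conjuncts of~(\ref{eq:char_formula}). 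For the first conjunct, fix $a?\in\ins(q)$; by Definition~\ref{dfn:iocos}.\ref{dfn:iocos:1} we have $\ins(q)\subseteq\ins(p)$, so $p\tran{a?}$, and by Definition~\ref{dfn:iocos}.\ref{dfn:iocos:2} every $p\tran{a?}p'$ is matched by some $q\tran{a?}q'$ with $p'\iocos q'$, hence $p'\in\sigma_0(X_{q'})$ and $(\sigma_0,p')\models\bigvee_{q\tran{a?}q'}\chi_{q'}$; this is exactly $(\sigma_0,p)\models\uuniv{a?}\bigvee_{q\tran{a?}q'}\chi_{q'}$. For the second conjunct, fix $a!\in O$ and use Definition~\ref{dfn:iocos}.\ref{dfn:iocos:3} analogously, noting that the universal modality $\univ{a!}$ is vacuously satisfied when $p\notran{a!}$. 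Once $\sigma_0$ is shown to be a post-fixed point, the Knaster--Tarski theorem gives $\sigma_0(X_q)\subseteq\sem{D}_{\max}(X_q)$, so $p\iocos q$ yields $p\in\sem{D}_{\max}(X_q)$, i.e. $(\sem{D}_{\max},p)\models\chi_q$.

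\textbf{The ``only if'' direction.} For the converse, I would show that the relation $R=\{(p,q)\mid (\sem{D}_{\max},p)\models\chi_q\}$ is an $\iocos$-relation; since $\iocos$ is the largest such relation, this gives $R\subseteq{\iocos}$, which is what we want. Take $(p,q)\in R$, so $p\in\sem{D}_{\max}(X_q)$. Because $\sem{D}_{\max}$ is a fixed point, $\sem{D}_{\max}(X_q)=\sem{\chi_q}\sem{D}_{\max}$, so $(\sem{D}_{\max},p)\models\chi_q$ as an \emph{unfolded} formula. I then read off the three conditions of Definition~\ref{dfn:iocos}. Condition~\ref{dfn:iocos:1}: for each $a?\in\ins(q)$ the conjunct $\uuniv{a?}(\cdots)$ forces $p\tran{a?}$, whence $a?\in\ins(p)$, giving $\ins(q)\subseteq\ins(p)$. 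Condition~\ref{dfn:iocos:2}: for $a?\in\ins(q)$ and any $p\tran{a?}p'$, the semantics of $\uuniv{a?}$ guarantees $(\sem{D}_{\max},p')\models\bigvee_{q\tran{a?}q'}\chi_{q'}$, so there is some $q\tran{a?}q'$ with $(\sem{D}_{\max},p')\models\chi_{q'}$, i.e. $(p',q')\in R$. Condition~\ref{dfn:iocos:3}: identical reasoning with the $\univ{a!}$ conjunct.

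\textbf{Main obstacle and subtleties.} The routine modal bookkeeping is easy; the delicate points are the fixed-point manipulations and two edge cases in~(\ref{eq:char_formula}). The first subtlety is that the inner disjunction $\bigvee_{q\tran{a?}q'}\chi_{q'}$ is nonempty precisely because $a?\in\ins(q)$ ensures at least one successor $q'$ exists, so the disjunction is not the empty (false) one; this is what makes the $\uuniv{a?}$ conjuncts satisfiable and is why the definition restricts the first big conjunction to $a?\in\ins(q)$ rather than all of $I$. The second is the treatment of the output conjuncts when $q\notran{a!}$: there the inner disjunction is empty, hence $\false$, so $\univ{a!}\false$ correctly demands $p\notran{a!}$, which is exactly the matching requirement imposed by Definition~\ref{dfn:iocos}.\ref{dfn:iocos:3} reading in the contrapositive. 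Finiteness of the LTS is what keeps the declaration $D$ (a finite system of equations) and all the conjunctions and disjunctions well-defined and finite. I expect the hardest part to be stating the post-fixed-point argument cleanly so that the coinductive reading of $\sem{D}_{\max}$ in the ``if'' direction and the fixed-point \emph{unfolding} in the ``only if'' direction are applied with the correct variance; both follow the standard characteristic-formula pattern for simulation-like relations, so I would cite~\cite{AILS12} for the general machinery and supply only the $\iocos$-specific verifications above.
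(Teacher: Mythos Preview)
Your proof is correct and the modal/coinductive verifications you carry out are exactly the ones the paper performs. The packaging differs slightly: the paper does not split into two implications, but instead proves a single bidirectional lemma (Lemma~\ref{lem:monotone}) stating that for \emph{any} relation $R$ and the associated environment $\sigma_R(X_q)=\{p\mid (p,q)\in R\}$, one has $(p,q)\in\Fiocos(R)$ iff $(\sigma_R,p)$ satisfies the right-hand side of~(\ref{eq:char_formula}); this says that the relational functional $\Fiocos$ and the declaration $D$ are ``the same map'' under the bijection $R\leftrightarrow\sigma_R$, and the conclusion then drops out of the general machinery of~\cite{AILS12} (their Corollary~3.4) without separately invoking Knaster--Tarski for each direction. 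Your two halves are precisely the two instantiations of that lemma at $R={\iocos}$ and at $R=\{(p,q)\mid p\in\sem{D}_{\max}(X_q)\}$, so the content is identical; your version is more self-contained, while the paper's buys reusability (the same lemma-plus-corollary pattern works for any simulation-like preorder).
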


\begin{remark}
Note that $\chi_q$ constrains the behaviour of a process that
satisfies it only for those input actions that $q$ can perform,
exactly in the same way as the \iocos\ relation does.
\end{remark}

In order to prove Proposition~\ref{pro:characteristic_formula}, we
follow the general theory described in \cite{AILS12}. The concrete
goal is to apply Corollary 3.4 in \cite{AILS12}. As a first step, we
define \iocos as the greatest fixed-point of a monotone function,
denoted by $\Fiocos$, over the set of binary relations over $S$, the
set of states of an LTS.

\begin{definition}
For each $R\subseteq S\times S$, we have that $(p,q)\in\Fiocos(R)$ iff
\begin{enumerate}
	\item $\ins(q)\subseteq\ins(p)$, 
	\item for all $a? \in \ins(q)$,	if $p\tran{a?} p'$ then there exists some $q'$ such that $q\tran{a?} q'$ and $(p', q')\in R$, and
	\item for all $o!\in O$, if $p\tran{o!} p'$ then there exists some $q'$ such that $q\tran{o!} q'$ and $(p', q')\in R$.
\end{enumerate}
\end{definition}

Next, we show a result connecting $\Fiocos$ and $\chi$ defined in 
Proposition~\ref{pro:characteristic_formula}. 

\begin{lemma}\label{lem:monotone}
Let $R \subseteq S \times S$ and $p,q\in S$. Consider the set of variables $\mathsf{Var}=\{X_p\mid p\in S\}$ and let $\sigma_R$ be the interpretation 
defined as $\sigma_R(X_q)=\{p\mid (p,q)\in R\}$. Then,  $(p,q)\in\Fiocos(R)$ iff
	\[
	(\sigma_R,p)\models \bigwedge_{a?\in \ins(q)} \uuniv{a?}\bigvee_{q\tran{a?}q'} X_{q'} \wedge \bigwedge_{a!\in O} \univ{a!}\bigvee_{q\tran{a!}q'} X_{q'},
	\]
\end{lemma}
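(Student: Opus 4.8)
The plan is to prove Lemma~\ref{lem:monotone} by unfolding the semantics of the right-hand-side formula under the interpretation $\sigma_R$ and checking that each conjunct corresponds exactly to one clause in the definition of $\Fiocos(R)$. Since the formula is a finite conjunction of two blocks—one ranging over $a?\in\ins(q)$ and one over $a!\in O$—the satisfaction relation $(\sigma_R,p)\models\cdots$ holds if, and only if, $p$ satisfies each conjunct. So the whole proof reduces to matching up three things: the $\uuniv{a?}$ conjuncts with conditions~1 and~2 of $\Fiocos$, and the $\univ{a!}$ conjuncts with condition~3.

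First I would handle the input block $\bigwedge_{a?\in\ins(q)}\uuniv{a?}\bigvee_{q\tran{a?}q'}X_{q'}$. By the semantics of the forcing box, $(\sigma_R,p)\models\uuniv{a?}\psi$ means $p\tran{a?}$ together with $(\sigma_R,p')\models\psi$ for every $p'$ with $p\tran{a?}p'$. Requiring this for every $a?\in\ins(q)$ forces $p\tran{a?}$ whenever $q\tran{a?}$, which is precisely $\ins(q)\subseteq\ins(p)$, i.e.\ condition~1. For the inner disjunction, $(\sigma_R,p')\models\bigvee_{q\tran{a?}q'}X_{q'}$ unfolds, via the clauses $(\sigma_R,p')\models X_{q'}$ iff $p'\in\sigma_R(X_{q'})$ iff $(p',q')\in R$, to the statement that there exists some $q'$ with $q\tran{a?}q'$ and $(p',q')\in R$. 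Quantifying over all $a?\in\ins(q)$ and all $p'$ with $p\tran{a?}p'$ then yields exactly condition~2. The output block is entirely analogous but simpler: $(\sigma_R,p)\models\univ{a!}\bigvee_{q\tran{a!}q'}X_{q'}$ means that for every $p\tran{a!}p'$ there is some $q'$ with $q\tran{a!}q'$ and $(p',q')\in R$, which is condition~3, and here the plain box $\univ{a!}$ imposes no existence requirement on the outputs of $p$, matching the fact that $\Fiocos$ places no analogue of condition~1 on outputs.

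The argument is a direct double implication, so I would phrase it as a chain of iffs: $(\sigma_R,p)\models\varphi$ iff each conjunct holds iff the corresponding numbered condition holds iff $(p,q)\in\Fiocos(R)$. The one point deserving care—and the mild obstacle—is the treatment of the empty disjunction and the interplay between the forcing semantics of $\uuniv{a?}$ and condition~1. When $q\tran{a?}q'$ for no $q'$, the convention $\bigvee_{\emptyset}=\false$ would make the conjunct unsatisfiable; but this case does not arise, since $a?$ ranges over $\ins(q)$, so at least one $q'$ with $q\tran{a?}q'$ exists. Conversely, I must be careful that the \emph{forcing} nature of $\uuniv{a?}$ (demanding $p\tran{a?}$) is exactly what produces the inclusion $\ins(q)\subseteq\ins(p)$; using the ordinary box $\univ{a?}$ here would fail to capture condition~1. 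Once this correspondence is spelled out, the lemma follows immediately, and no induction or fixed-point machinery is needed at this stage—that enters only later when the lemma is combined with Corollary~3.4 of~\cite{AILS12} to establish Proposition~\ref{pro:characteristic_formula}.
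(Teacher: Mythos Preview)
Your proposal is correct and follows essentially the same approach as the paper: both arguments unfold the semantics of the formula under $\sigma_R$ and match the $\uuniv{a?}$ conjuncts with conditions~1 and~2 of $\Fiocos$ and the $\univ{a!}$ conjuncts with condition~3. The paper presents the two implications separately rather than as a single chain of iffs, but the content is identical; your explicit remarks about the forcing semantics of $\uuniv{a?}$ yielding condition~1 and about the empty disjunction not arising are helpful clarifications that the paper leaves implicit.
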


\begin{proof}
First, suppose that $(p,q)\in\Fiocos(R)$. We will show that 

\[(\sigma_R,p)\models \bigwedge_{a?\in \ins(q)} \uuniv{a?}\bigvee_{q\tran{a?}q'} X_{q'} \wedge \bigwedge_{a!\in O} \univ{a!}\bigvee_{q\tran{a!}q'} X_{q'}.
\]

To this end, let $a?\in \ins(q)$. We show first that $(\sigma_R,p)\models 
\uuniv{a?}\bigvee_{q\tran{a?}q'} X_{q'}$. Indeed, since $(p,q)\in\Fiocos(R)$, $\ins(q)\subseteq\ins(p)$. Hence, 
$a?\in \ins(p)$ and therefore $p\tran{a?}$. Assume now that $p\tran{a?}p'$, for some $p'$, we claim that 
$(\sigma_R,p')\models \bigvee_{q\tran{a?}q'} X_{q'}$. Indeed, since $(p,q)\in\Fiocos(R)$, $a?\in \ins(q)$ and 
$p\tran{a?}p'$, there exists some $q'$ such that $q\tran{a?}q'$ with $(p',q')\in R$, that is, such that $p'\in 
\sigma_R(X_{q'})$. Thus, we have that $(\sigma_R,p)\models \uuniv{a?}\bigvee_{q\tran{a?}q'} X_{q'}$.

Next, let us see that $(\sigma_R,p)\models \bigwedge_{a!\in O} \univ{a!}\bigvee_{q\tran{a!}q'} X_{q'}$. 
Analogously to the previous case let us assume that $p\tran{a!}p'$, for some $p'$. We claim that 
$(\sigma_R,p')\models \bigvee_{q\tran{a?}q'} X_{q'}$. Indeed, since $(p,q)\in\Fiocos(R)$ and $p\tran{a!}p'$, 
there exists some $q'$ such that $q\tran{a!}q'$ with $(p',q')\in R$, that is, such that $p'\in \sigma_R(X_{q'})$. 
Thus, we have that $(\sigma_R,p)\models \univ{a!}\bigvee_{q\tran{a?}q'} X_{q'}$. This completes the proof of 
the first implication of the theorem.

Now, let us suppose that 
\[(\sigma_R,p)\models \bigwedge_{a?\in \ins(q)} \uuniv{a?}\bigvee_{q\tran{a?}q'} X_{q'} \wedge \bigwedge_{a!\in O} \univ{a!}\bigvee_{q\tran{a!}q'} X_{q'} . \] 
We prove that $(p,q)\in\Fiocos(R)$. First, since for each $a?\in\ins(q)$, $(\sigma_R,p)\models \uuniv{a?}\bigvee_{q\tran{a?}q'} X_{q'}$, we obtain that $\ins(q)\subseteq\ins(p)$.

Next, let us consider some $a?\in\ins(q)$. Since $(\sigma_R,p)\models \uuniv{a?}\bigvee_{q\tran{a?}q'} X_{q'}$, we have that $(\sigma_R,p')\models \bigvee_{q\tran{a?}q'} X_{q'}$ for every $p\tran{a?}p'$. Hence, for every $p\tran{a?}p'$, there exists some $q\tran{a?}q'$ such that $(\sigma_R,p')\models X_{q'}$, that is, for every $p\tran{a?}p'$, there exists some $q\tran{a?}q'$ such that $(p',q')\in R$.

Finally, let us consider $a!\in O$, since $(\sigma_R,p)\models \univ{a!}\bigvee_{q\tran{a!}q'} X_{q'}$, we also have that $(\sigma_R,p')\models\bigvee_{q\tran{a!}q'} X_{q'}$ for every $p\tran{a!}p'$. Hence, for each $p\tran{a!}p'$ there exists $q\tran{a!}q'$ such that $(\sigma_R,p')\models X_{q'}$, that is, such that $(p',q')\in R$. This completes the proof of the lemma.
\end{proof}

In the light of Lemma~\ref{lem:monotone}, we can apply Corollary 3.4 in~\cite{AILS12} to prove  Proposition~\ref{pro:characteristic_formula}, which gives the explicit definition of the 
characteristic formula for \iocos.

\section{The relation between \iocos and \ioco}
\label{sec:relation_ioco}

Input-output conformance (\ioco) was introduced by Tretmans
in~\cite{Tre96}. The intuition behind \ioco is that a process $i$ is a
correct implementation of a specification $s$ if, for each sequence of
actions $\sigma$ allowed by the specification, all the possible
outputs from $i$ after having performed $\sigma$ are allowed by the
specification. This is formalised below in a setting in which all
actions are observable.

\begin{definition}
Let $(S,I,O,\tran{})$ be an LTS with inputs and outputs. We define the
traces of a state $p\in S$ as $\traces(p) = \{ \sigma\mid\exists
p'.~p\tran{\sigma}p'\}$. Given a trace $\sigma$, we define
$p~\after~\sigma=\{p'\mid p'\in S, ~p\tran{\sigma}p'\}$. For each $T
\subseteq S$, we set $\Out(T)=\bigcup_{p\in T}\outs(p)$. Finally, the
relation $\ioco\subseteq S\times S$ is defined as:
$$i\ioco s \textrm{ iff } \Out(i\,\after\,
\sigma)\subseteq\Out(s\,\after\, \sigma), \textrm{ for all
}\sigma\in\traces(s).$$
\end{definition}
As shown in~\cite[Theorem~1]{GLM13}, {\iocos} is included in {\ioco}.

In the setting of Tretmans' standard {\ioco} theory~\cite{Tre96}, only input-enabled implementations are 
considered. We recall that a state $i$ in an LTS is input enabled if every state $i'$ that is reachable from $i$ 
is able to perform every input action, that is, $i'\tran{a?}$ holds for each $a?\in I$.

Theorem~$2$ in~\cite{LM13} states that if $i$ is input enabled, $i \ioco s$ implies $i\iocos s$. This means 
that, when restricted to input-enabled implementations, {\ioco} and {\iocos} coincide, and therefore the 
logics characterising {\iocos} presented in this paper also characterise {\ioco} over that class of LTSs. 
Unfortunately, however, Theorem~$2$ in~\cite{LM13} does {\em not} hold, as shown in the following example.

\begin{example}\label{ex:ioco}	
	Let $s$ and $i$ be defined as follows, where we assume that $I=\{a?,b?\}$. 
	
	{\centering
		\begin{tikzpicture}[->,>=stealth',shorten >=1pt,auto]
		\matrix [matrix of math nodes, column sep={.8cm,between origins},row sep={1cm,between origins}]
		{
			& \node (C0) {s}; &
			&[1.5cm] &  \node (D0) {i};
			\\
			\node (C11) {s_1}; & & \node (C12) {s_2}; 
			&[1.5cm] & \node (D1) {i'}; \\[1cm]
			& \node (C21) {\cdot}; & &[1.5cm]
			& \node (D2) {\cdot};  \\
		};
		\begin{scope}[every node/.style={font=\small\itshape}]
		\path   
		(C0)   edge        		   node [left,above]  {$a?$} (C11)
		(C0)   edge        		   node [right,above]  {$a?$} (C12)
		(C0)   edge[loop above]    node [right,above] {$\delta!$} (C0)
		(C11)  edge                node [left]        {$b!$} (C21)
		(C12)  edge                node [right]        {$a!$} (C21)
		(C21)  edge[loop below]    node [right,below] {$\delta!$} (C21)
		(D0)   edge[loop above]    node [right,above] {$\delta!$} (D0)
		(D0)   edge[loop right]    node [right,above] {$b?$} (D0)
		(D0)   edge        		   node [right]        {$a?$} (D1)
		(D1)   edge[loop below]    node [below]        {$a?,b?$} (D1)
		(D1)   edge[bend right]    node [left]        {$b!$} (D2)
		(D1)   edge[bend left]     node [right]        {$a!$} (D2)
		(D2)   edge[loop below]    node [below]        {$\delta!,a?,b?$} (D2);
		\end{scope}
		\end{tikzpicture}
		
	}
	
	Note that $i$ is input-enabled, as required by the theory of \ioco. It
	is easy to see that $i\ioco s$. On the other hand, $i\niocos s$
	because each \iocos\ relation containing the pair $(i,s)$ would also
	have to contain the pair $(i',s_1)$ or the pair $(i',s_2)$. However,
	no relation including either of those pairs is an \iocos-relation
	because $i'\tran{a!}$ and $i'\tran{b!}$, but $s_1\notran{a!}$ and
	$s_2\notran{b!}$.
\end{example}

However, as one might expect, Theorem~$2$ in~\cite{LM13} becomes true
when the specification $s$ is deterministic. An LTS is deterministic
whenever there is only one possible successor for each action, that
is, if for all $p, p', p''\in S$ and $a\in L$, if $p\tran{a}p'$ and
$p\tran{a}p''$, then $p'=p''$.

\begin{proposition}
	Let $i$ be an input-enabled LTS and $s$ be a deterministic LTS. If $i \ioco s$ then $i \iocos s$.
\end{proposition}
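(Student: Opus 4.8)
The plan is to exhibit an explicit \iocos-relation containing $(i,s)$ and verify the three clauses of Definition~\ref{dfn:iocos}, using input-enabledness for the input-inclusion clause and determinism of $s$ together with the hypothesis $i\ioco s$ for the two matching clauses. Concretely, I would take
\[
R=\{(i',s')\mid \exists\sigma\in\traces(s).~ i\tran{\sigma}i' \text{ and } s\tran{\sigma}s'\},
\]
the pairs of states reachable from $i$ and $s$ by a common trace of $s$. Since $\epsilon\in\traces(s)$ and $i\tran{\epsilon}i$, $s\tran{\epsilon}s$, we have $(i,s)\in R$, so it suffices to prove that $R$ is an \iocos-relation.

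The key observation, and the place where determinism is essential, is that single-step determinism of $s$ lifts to traces, so $s\after\sigma$ is a singleton $\{s'\}$ for every $\sigma\in\traces(s)$; hence $\Out(s\after\sigma)=\outs(s')$. Fix $(i',s')\in R$ witnessed by a trace $\sigma$. For clause~\ref{dfn:iocos:1}, since $i'$ is reachable from $i$ and $i$ is input enabled, $\ins(i')=I$, and therefore $\ins(s')\subseteq\ins(i')$ trivially. For clause~\ref{dfn:iocos:3}, suppose $i'\tran{a!}i''$; then $a!\in\outs(i')\subseteq\Out(i\after\sigma)$, and $i\ioco s$ gives $a!\in\Out(s\after\sigma)=\outs(s')$, so $s'\tran{a!}$; letting $s''$ be the unique $a!$-successor of $s'$ by determinism, the pair $(i'',s'')$ lies in $R$ as witnessed by $\sigma a!\in\traces(s)$. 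For clause~\ref{dfn:iocos:2}, suppose $a?\in\ins(s')$ and $i'\tran{a?}i''$; then $s'\tran{a?}s''$ for the unique $a?$-successor $s''$, and $\sigma a?\in\traces(s)$ witnesses $(i'',s'')\in R$. In each case the matching successor of $s$ is uniquely determined, which is exactly what lets me promote membership in $R$ from the witness $\sigma$ to the extended witness.

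I expect the only real subtlety to be confirming that determinism converts the aggregate, trace-based output condition of \ioco\ into the pointwise, branching-time condition required by \iocos: the inclusion $\Out(i\after\sigma)\subseteq\Out(s\after\sigma)$ concerns unions over all $\sigma$-successors, and it is precisely the collapse of $s\after\sigma$ to a single state that forces those outputs to be available from the one state $s'$ that $R$ pairs with $i'$. Without this collapse the argument breaks down, as Example~\ref{ex:ioco} shows, where the outputs $a!$ and $b!$ demanded after $a?$ are split between the two distinct $s$-states $s_1$ and $s_2$. I would also note that the quiescence output $\delta!$ needs no special treatment, since it is an ordinary element of $O$ handled uniformly by clause~\ref{dfn:iocos:3}, while the coherence requirement on $\tran{\delta!}$ guarantees that the matching $\delta!$-transitions behave as expected.
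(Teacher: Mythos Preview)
Your proposal is correct and follows essentially the same approach as the paper: you build the identical relation $R$ of pairs reachable via a common trace of $s$, and verify the three \iocos\ clauses using input-enabledness for clause~\ref{dfn:iocos:1} and determinism of $s$ to collapse $\Out(s\after\sigma)$ to $\outs(s')$ for clause~\ref{dfn:iocos:3}. Your explicit remark that single-step determinism lifts to traces, and your discussion of why this collapse is exactly what is needed (contrasted with Example~\ref{ex:ioco}), are welcome elaborations but not departures from the paper's argument.
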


\begin{proof}
  Assume that $i \ioco s$. We will show that the relation
  $$R=\{(p,q) \mid \exists \sigma\in \traces(s)\text{ such that } p\in i\after\sigma \text{ and } q\in s\after\sigma\}$$
  is a \iocos-relation. First, we observe that $(i,s)\in R$ since $\varepsilon\in\traces(s)$. Now, we show that any pair $(p,q)\in R$ satisfies the conditions of Definition~\ref{dfn:iocos}.
	\begin{itemize}
		\item Trivially, $\ins(q)\subseteq \ins(p)$. Indeed, since $i$ is input-enabled, $\ins(q)\subseteq I=\ins(p)$.
		
		\item  Let $a?\in\ins(q)$ and assume that $p\tran{a?}p'$ for some $p'$. Since $a?\in\ins(q)$, by 
		definition there exists some $q'$ such that $q\tran{a?}q'$, so $\sigma a?\in \traces(s)$. Hence, $p'\in 
		i\after\sigma a?$ and $q'\in s\after\sigma a?$, that is, $(p',q')\in R$.
		
		\item  Let $b!\in O$ be such that $p\tran{b!}p'$, that is, $b!\in \outs(p)\subseteq \Out(i~{\after}~\sigma)$. 
		Now, since $i~{\ioco}~s$ and $\sigma\in \traces(s)$, we have that $\Out(i~{\after}~\sigma)\subseteq\Out(s~{\after}~\sigma)$. Hence, $b!\in\Out(s\after \sigma)$ and, since $s$ is deterministic, 
		$b!\in\outs(q)$ so there exists some $q'$ such that $q\tran{b!}q'$. Summing up, we have that $\sigma b!\in \traces(s)$, 
		$p'\in i~{\after}~\sigma b!$ and $q'\in s~{\after}~\sigma b!$, that is, $(p',q')\in R$. \qedhere
	\end{itemize}
	
\end{proof}

\begin{remark}
The construction of the relation $R$ used in the proof of the above
proposition is akin to the definition of `coinductive {\ioco}' employed in
the proof of Theorem 3 in~\cite{NorooziMW13}. In fact, the use of such
relations in the algorithmics of decorated  trace semantics, such as
{\ioco}, as well as failure and testing equivalences, can be traced at
least as far back as~\cite{CleavelandH93}.
\end{remark}

\subsection{The relation with a logic for \ioco}\label{subsec:rel_ioco}

In~\cite{BoharMousavi14} Beohar and Mousavi introduced an explicit
logical characterisation of \ioco. This characterisation uses a
non-standard modal operator reminiscent of our $\uuniv{\cdot}$,
denoted by $\uioco{\cdot}$\footnote{In fact, the symbol used to denote
  the operator $\uioco{\cdot}$ in~\cite{BoharMousavi14} is $\langle\![
    \cdot ]\!\rangle$, but we prefer to use an alternative notation in order
  to avoid confusion with our modal operator $\eexis{\cdot}$.}. However, output actions can also be used as labels of $\uioco{\cdot}$. 
This modality can be extended to traces $\sigma$ as follows: $p \models
\uioco{\sigma} \phi$ if, and only if, $p\tran{\sigma}$ and $p' \models
\phi$, for each $p'$ such that $p \tran{\sigma} p'$. (Note that, for the
particular case of input actions $a?$, the semantics of $\uioco{a?}$
coincides with that of $\uuniv{a?}$.)

The explicit logical characterisation of {\ioco} given
in~\cite{BoharMousavi14} is defined by means of two
different subclasses of logical formulae. The first subclass permits
only formulae of the form $\uioco{\sigma}\univ{b}\false$, where $\sigma$
is a trace and $b$ is an output action.

For the second subclass of formulae, Beohar and Mousavi consider the natural
extension of the operator $\univ{\cdot}$ to traces, defined as: $p
\models \univ{\sigma} \phi$ if, and only if, $p' \models \phi$ for each
$p'$ such that $p \tran{\sigma} p'$. This second subclass permits only
formulae of the form $\univ{\sigma}\univ{b}\false$, where $\sigma$ is a
trace and $b$ is an output action.

The formulae in each of these two subclasses characterise one defining
property of the \ioco-relation. This intuition is made precise in the
following lemma. 

\begin{lemma}[\cite{BoharMousavi14}]\label{lem:ioco2logic}
	For each sequence of actions $\sigma$, output action $b$ and
        process $p$ the following statements hold:
	\begin{enumerate}
		\item\label{key1} $\sigma\in\traces(p)$ and $b\notin\Out(p ~\after~\sigma)$ iff $p\models\uioco{\sigma}\univ{b}\false$.
		
		\item $b\notin\Out(p~\after~\sigma)$ iff $p\models\univ{\sigma}\univ{b}\false$.
	\end{enumerate}
\end{lemma}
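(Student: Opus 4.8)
The plan is to prove both statements by directly unfolding the semantics of the two trace-indexed modalities and reducing everything to a single observation about $\univ{b}\false$. First I would record the following fact: for an output action $b$ and any state $p'$, we have $p'\models\univ{b}\false$ if, and only if, $b\notin\outs(p')$. This holds because $\false$ is satisfied by no state, so the box modality $\univ{b}$ applied to $\false$ is satisfied exactly when $b$ is not enabled, i.e. $p'\notran{b}$; since $b$ is an output action this is the same as $b\notin\outs(p')$. This fact is the only place where the meaning of $\univ{b}\false$ enters, and it is what connects the logical side to the set $\outs$ used in the definition of $\Out$.

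For statement~(\ref{key1}), I would unfold the semantics of the forcing trace modality $\uioco{\sigma}$: by definition $p\models\uioco{\sigma}\univ{b}\false$ holds precisely when $p\tran{\sigma}$, that is $\sigma\in\traces(p)$, and moreover $p'\models\univ{b}\false$ for every $p'$ with $p\tran{\sigma}p'$. Using the observation above, this second conjunct says that $b\notin\outs(p')$ for every $p'\in p\after\sigma$, which by the definition $\Out(p\after\sigma)=\bigcup_{p\tran{\sigma}p'}\outs(p')$ is exactly $b\notin\Out(p\after\sigma)$. Conjoining this with $\sigma\in\traces(p)$ yields the left-hand side of~(\ref{key1}), so the two sides are equivalent.

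For the second statement the argument is the same, except that the non-forcing modality $\univ{\sigma}$ drops the existence requirement: $p\models\univ{\sigma}\univ{b}\false$ holds iff every $p'$ with $p\tran{\sigma}p'$ satisfies $\univ{b}\false$, i.e. iff $b\notin\outs(p')$ for all $p'\in p\after\sigma$, i.e. iff $b\notin\Out(p\after\sigma)$. Here I would be careful about the case $\sigma\notin\traces(p)$: then the universal quantification is vacuously true, but also $p\after\sigma=\emptyset$, so $\Out(p\after\sigma)=\emptyset$ and $b\notin\Out(p\after\sigma)$ holds as well, and the equivalence is maintained uniformly.

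Neither statement presents a genuine obstacle; the computations are routine unfoldings of the definitions. The only point that warrants care is the contrast between the two modalities: $\uioco{\sigma}$ is forcing and therefore contributes the extra conjunct $\sigma\in\traces(p)$ appearing in~(\ref{key1}), whereas $\univ{\sigma}$ is vacuously satisfiable along absent traces and so yields the cleaner equivalence of the second statement. Keeping this distinction straight, together with the single observation about $\univ{b}\false$, is all that the proof requires.
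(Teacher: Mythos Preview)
Your proof is correct; it is a straightforward unfolding of the semantics of $\uioco{\sigma}$, $\univ{\sigma}$ and $\univ{b}\false$ together with the definition of $\Out$. Note that the paper does not give its own proof of this lemma at all: the result is quoted from~\cite{BoharMousavi14} and stated without proof, so there is no argument in the paper to compare yours against.
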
 
The resulting logical characterisation theorem by Beohar and Mousavi for \ioco\ is as follows.

\begin{theorem}[\cite{BoharMousavi14}]\label{thm:BMioco}
	$i\ioco s$ iff, for all $\sigma\in L^*$, $b\in O$, if $s\models\uioco{\sigma}\univ{b}\false$, then $i\models\univ{\sigma}\univ{b}\false$.
\end{theorem}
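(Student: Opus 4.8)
The plan is to reduce both sides of the claimed equivalence to purely trace-theoretic statements about outputs, using the translation supplied by Lemma~\ref{lem:ioco2logic}, and then to observe that the two resulting conditions are literally the same (modulo a contrapositive). No inductive or fixed-point machinery is needed: the entire argument is an unwinding of definitions with careful bookkeeping of the quantifiers. Concretely, I would first rewrite the right-hand side of the theorem. By the two items of Lemma~\ref{lem:ioco2logic}, for every $\sigma\in L^*$ and $b\in O$ we have that $s\models\uioco{\sigma}\univ{b}\false$ iff $\sigma\in\traces(s)$ and $b\notin\Out(s\after\sigma)$, and that $i\models\univ{\sigma}\univ{b}\false$ iff $b\notin\Out(i\after\sigma)$. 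Hence the condition ``for all $\sigma$ and $b$, if $s\models\uioco{\sigma}\univ{b}\false$ then $i\models\univ{\sigma}\univ{b}\false$'' is equivalent to the statement, call it $(\ast)$, that for all $\sigma\in L^*$ and $b\in O$, if $\sigma\in\traces(s)$ and $b\notin\Out(s\after\sigma)$, then $b\notin\Out(i\after\sigma)$.

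Next I would reformulate \ioco itself. By definition $i\ioco s$ means $\Out(i\after\sigma)\subseteq\Out(s\after\sigma)$ for every $\sigma\in\traces(s)$. Reading this set inclusion elementwise over $b\in O$ and taking the contrapositive of each membership implication, $i\ioco s$ becomes exactly the assertion that for every $\sigma\in\traces(s)$ and every $b\in O$, $b\notin\Out(s\after\sigma)$ implies $b\notin\Out(i\after\sigma)$. This is precisely $(\ast)$: the only apparent difference is that $(\ast)$ quantifies $\sigma$ over all of $L^*$ rather than over $\traces(s)$, but when $\sigma\notin\traces(s)$ the antecedent of $(\ast)$ is false, so those cases are vacuously true and contribute nothing. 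Chaining the two reformulations yields $i\ioco s$ iff the right-hand side of the theorem, as required.

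The proof has essentially no hard part, since Lemma~\ref{lem:ioco2logic} does the heavy lifting of connecting syntax to outputs, and what remains is a contrapositive rewriting of a subset relation together with a vacuity observation on the quantifier over $\sigma$. The one point that warrants a little care is precisely this mismatch between ``$\sigma$ ranges over $L^*$'' on the logical side and ``$\sigma\in\traces(s)$'' in the definition of \ioco; I would record explicitly that it is the forcing modality $\uioco{\cdot}$ in the hypothesis that injects the side condition $\sigma\in\traces(s)$ (via the first item of the lemma), so that relaxing to arbitrary $\sigma\in L^*$ is harmless. I would also note that no special treatment of the quiescence symbol $\delta!\in O$ is required, since both $\Out$ and the lemma already range uniformly over all output actions.
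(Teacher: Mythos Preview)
Your argument is correct. The paper does not provide its own proof of this statement: Theorem~\ref{thm:BMioco} is quoted from~\cite{BoharMousavi14} and used as a black box, so there is nothing to compare against. Your derivation is exactly the intended one---Lemma~\ref{lem:ioco2logic} translates both modal conditions into statements about $\Out(\cdot\after\sigma)$, after which the equivalence with the definition of \ioco\ is the contrapositive of a set inclusion, together with the vacuity observation for $\sigma\notin\traces(s)$ that you flagged.
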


Theorem~\ref{theo:tlogic_char} in this paper is the counterpart of the above result for \iocos. Note, however, 
that Theorem~\ref{thm:BMioco} is not a classic modal characterisation result (as it is the case of, for 
example, Theorem~\ref{theo:tlogic_char}) where if the implementation $i$ is correct with respect to the 
specification $s$ and $s$ satisfies a formula, then also $i$ satisfies it. Here the implementation does not 
need to satisfy the properties that hold for the specification. By way of example, implementations need not 
exhibit all the traces of a specification they correctly implement.

As we will now argue, the logics for \ioco\ and \iocos\ are incomparable in terms of their expressive power. 
First of all, note that, if we consider only input-enabled implementations, the formulae of the form 
$\univ{\sigma}\univ{b}\false$, with $\sigma$ a trace, can be expressed in $\tLiocos$ since in an 
input-enabled scenario $\uuniv{a?}$ has the same semantics as $\univ{a?}$. On the other hand, it is not 
possible to define a formula $\phi\in\tLiocos$ that captures Lemma~\ref{lem:ioco2logic}(\ref{key1}). Indeed, 
by way of example, consider $\phi=\uioco{x!}\univ{b!}\false$. Any specification $s$ would have to satisfy 
$\phi$ iff $s\tran{x!}$ and $s'\models\univ{b!}\false$, for all $s\tran{x!}s'$. Now, assume that we have in 
\tLiocos\ a formula $\psi$ whose semantics coincides with that of $\uioco{x!}\univ{b!}\false$. Let

{\centering
	\begin{tikzpicture}[->,>=stealth',shorten >=1pt,auto]
	\matrix [matrix of math nodes, column sep={4cm,between origins},row sep={1cm,between origins}]
	{
		\node (C0) {s}; & \node (D0) {i};\\
	};
	\begin{scope}[every node/.style={font=\small\itshape}]
	\path   
	(C0)   edge[loop right]	   node  {$x!$} (C0)
	(C0)   edge[loop left]	   node  {$a!$} (C0)
	(D0)   edge[loop left]     node  {$a!$} (D0);
	\end{scope}
	\end{tikzpicture}
	
}

\noindent 
It is easy to see that $i~\iocos~s$, but $s\models\psi$ and $i\not\models\psi$. In other words, $\psi$ is a formula that distinguishes processes related by \iocos. Hence, such a formula $\psi$ cannot be expressed in any logic that characterises \iocos.

On the other hand, let us consider the two processes of Example~\ref{ex:ioco} and the formula $\phi=\uuniv{a?}(\univ{a!}\false \vee \univ{b!}\false)\in\tLiocos$. As we already stated in Example~\ref{ex:ioco}, $i\ioco s$, but $s\models\phi$ and $i\not\models\phi$. Hence, $\phi$ can distinguish processes that are \ioco-related. 

Now, since \iocos implies \ioco~\cite{GLM13}, we can use $\nLiocos$ (see Section~\ref{sec:fix-points}) to express distinguishing formulae whenever two processes are not \ioco-related.

\begin{proposition}\label{p:distinguishig_formula}
  If $i\nioco s$, then there exists a formula $\psi\in\nLiocos$ such that $s\models\psi$ but $i\not\models\psi$. 
\end{proposition}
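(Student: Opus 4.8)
The plan is to reduce the statement to the already-established logical characterisation of \iocos, exploiting the fact that \iocos refines \ioco. The key observation is that, by Theorem~1 in~\cite{GLM13}, $i\iocos s$ implies $i\ioco s$; read contrapositively, the hypothesis $i\nioco s$ immediately yields $i\niocos s$. Hence it suffices to produce a formula $\psi\in\nLiocos$ with $s\models\psi$ and $i\not\models\psi$ that witnesses $i\niocos s$, and the entire \ioco side of the statement evaporates once this reduction is made.

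First I would treat the general (image-finite) case directly through the modal characterisation. By Theorem~\ref{theo:tlogic_char}, $i\iocos s$ is equivalent to $s\tleqiocos i$, that is, to the condition that every formula of \tLiocos satisfied by $s$ is also satisfied by $i$. Negating this, $i\niocos s$ delivers at once some $\phi\in\tLiocos$ with $s\models\phi$ but $i\not\models\phi$. Since \tLiocos is a sublogic of \nLiocos, the formula $\psi=\phi$ already lies in \nLiocos and is exactly the required witness, in the correct orientation.

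Alternatively, and more in keeping with the fixed-point machinery of Sections~\ref{sec:fix-points} and~\ref{sec:charac_formula}, when the LTS is finite I would take $\psi$ to be the characteristic formula $\chi_s$ of the specification (Proposition~\ref{pro:characteristic_formula}), which belongs to \nLiocos by construction. By Definition~\ref{df:characteristic_formula} we always have $s\models\chi_s$, while $i\models\chi_s$ holds if, and only if, $i\iocos s$; since $i\niocos s$, this forces $i\not\models\chi_s$, so $\psi=\chi_s$ witnesses the claim. This route has the pleasant feature of producing a single canonical distinguishing formula attached to $s$ itself, and it explains transparently why \nLiocos, rather than the fixed-point-free \tLiocos, is the natural logic in which to phrase the result.

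There is essentially no hard computational step here; the only point requiring a little care is the choice of route according to the class of LTSs under consideration. The characteristic-formula argument is available only for finite systems, since Proposition~\ref{pro:characteristic_formula} is stated for finite LTSs, so for the general image-finite setting I would rely on Theorem~\ref{theo:tlogic_char} to extract the \tLiocos witness. Both arguments hinge on the same two ingredients, namely the inclusion of \iocos in \ioco and a logical characterisation of \iocos, so the main thing to get right is simply that the distinguishing formula comes out in the direction ``$s\models\psi$ and $i\not\models\psi$'' demanded by the statement, which is precisely the orientation delivered by the negation of $s\tleqiocos i$.
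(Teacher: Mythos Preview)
Your proposal is correct and follows essentially the same route as the paper: the paper's proof is simply ``Straightforward from the fact that $i\iocos s$ implies $i\ioco s$, the logical characterisations of \Liocos\ in Theorem~\ref{theo:logic_char} and \tLiocos\ in Theorem~\ref{theo:tlogic_char},'' which is exactly your contrapositive-plus-Theorem~\ref{theo:tlogic_char} argument. Your additional remark about using the characteristic formula $\chi_s$ in the finite case is also the observation the paper makes immediately after the proposition.
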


\begin{proof}
	Straightforward from the fact that $i\iocos s$ implies $i\ioco s$, the logical characterisations of \Liocos\ in Theorem~\ref{theo:logic_char} and \tLiocos\ in Theorem~\ref{theo:tlogic_char}.
\end{proof}

\begin{remark}
  As we commented on Remark~\ref{rem:numu}, a dual logic to $\nLiocos$ is
  $\mLiocos$. Therefore, analogously, if $i\nioco\ s$, then there exists a
  formula $\phi\in\mLiocos$ such that $i\models\phi$ but $s\not\models\phi$.
\end{remark}

Proposition~\ref{p:distinguishig_formula} can be applied, in particular, for $\chi_s$, the characteristic formula 
of $s$ (Definition~\ref{df:characteristic_formula}). Hence, if $i\nioco\ s$, we obtain that $s\models\chi_s$ 
and $i\not\models\chi_s$. That is, the characteristic formula of $s$ can be used as a distinguishing formula. 
This way we have an alternative, albeit incomplete, criterion for checking if an implementation does not 
conform to a specification with respect to \ioco:
\begin{inparaenum}[$(i)$]
	\item first, build $\chi_s$, the characteristic formula for $s$;
	\item check if $i\models\chi_s$ (\iocos satisfaction); 
	\item now, if $i\not\models\chi_s$ then, we have that $i\nioco s$.
\end{inparaenum}

\section{Rule formats for \iocos}\label{Sect:ruleformat}

In this section we focus on the study of structural properties of
\iocos, with emphasis on its compositionality and on 
compositional proof systems for its characterising logic {\Liocos}.
We start, in Section~\ref{ssec:congruence}, by defining a
precongruence rule format for \iocos. That rule format is shown to
ensure compositionality with respect to \iocos\ for a subset of the
operators defined using rules in the GSOS format proposed by Bloom,
Istrail and Meyer~\cite{BIM95}. The rule format provides a sufficient
condition for compositionality; we show with counterexamples how the
rules we propose cannot be easily relaxed without jeopardizing the
compositionality result for \iocos.

Next, in Section~\ref{ssec:decomposition}, we use the logical characterisation of Section~\ref{sec:basic} and 
the general \emph{modal decomposition} methodology of Fokkink and van Glabbeek~\cite{FokkinkGW06}. 
For the operators in the rule format of Section~\ref{ssec:congruence}, we show how to check the satisfaction 
of a logical formula in $\Liocos$ in a compositional way.

Finally, quiescent behaviour is an important issue in the theory of \ioco/\iocos. In 
Section~\ref{ssec:rule-quiescent} we develop a rule format to ensure coherent quiescent behaviour in the 
sense of Definition~\ref{dfn:lts}.

The restriction in this section to GSOS rules is partly justified by our wish to have a purely syntactic rule format and by the undecidability results presented in~\cite{KlinNachyla16}. In what follows, we assume that the reader is familiar with the standard notions of signature and terms over a signature.

\subsection{Congruence rule format}\label{ssec:congruence}
We recall that a deduction rule for an operator $f$ of arity $n$ in
some signature $\Sigma$ is in the {\em GSOS format} if, and only if, it
has the following form:
\begin{equation}\label{eqn:gsos-rule}
\sosrule{\{x_{i} \tran{a_{ij}} y_{ij}  \mid 1 \leq i \leq n, 1 \leq j \leq m_i\} \cup \{x_{i} \notran{b_{ik}}  \mid 1 \leq i \leq n, 1 \leq k \leq \ell_i \}}
{f(\arr{x}) \tran{a} C[\vec{x}, \vec{y}]}
\end{equation}
where the $x_i$'s and the $y_{ij}$'s $(1 \leq i \leq n$ and $1 \leq j
\leq m_i)$ are all distinct variables, $m_{i}$ and $\ell_{i}$ are natural
numbers, $C[\vec{x}, \vec{y}]$ is a term over $\Sigma$ with variables
including at most the $x_{i}$'s and $y_{ij}$'s, and the $a_{ij}$'s,
$b_{ik}$'s and $a$ are actions from $L$. The above rule is said to be
{\em $f$-defining} and {\em $a$-emitting}. Its {\em positive trigger
  for variable $x_i$} is the set $\{a_{ij} \mid 1 \leq j \leq m_i\}$ and its {\em negative trigger for
  variable $x_i$} is the set $\{b_{ik} \mid 1 \leq k \leq
\ell_i\}$. The {\em source in the conclusion} of the rule is
$f(\arr{x})$.

A GSOS language is a triple $(\Sigma, L, D)$ where $\Sigma$ is a
finite signature, $L$ is a finite set of labels and $D$ is a finite set of
deduction rules in the GSOS format. In what follows, we assume,
without loss of generality, that all $f$-defining rules have the same
source of their conclusions.

A GSOS language naturally defines a set of transitions over the
variable-free terms over $\Sigma$ by structural induction: for vectors
of such terms $\arr{p}$ (with typical entry $p_i$) and $\arr{q}$ (with
entries $q_{ij}$), there is a transition ${f(\arr{p}) \tran{a}
  C[\vec{p}, \vec{q}]}$ if, and only if, there is an $f$-defining rule
of the form (\ref{eqn:gsos-rule}) such that
\begin{itemize} 
\item $p_{i} \tran{a_{ij}} q_{ij}$ for each $1 \leq i \leq n$ and $1 \leq j \leq m_i$ and 
\item $p_{i} \notran{b_{ik}}$ for each $1 \leq i \leq
  n$ and $1 \leq k \leq \ell_i$. 
\end{itemize}
Note that GSOS rules define operations over states in an arbitrary LTS
with inputs and outputs.  In what follows, we apply derived operations
built over the signature of a GSOS language to states in the
collection of LTSs with input and output actions.

\begin{definition}\label{Def:ruleformat}
An operation $f$ in a GSOS language is in {\em \iocos-format} if the collection of $f$-defining rules satisfies the
following conditions:
\begin{enumerate}
\item \label{IC1} Each $a?$-emitting rule, where $a?$ is an input
  action, has only output actions as labels of negative premises and
  input actions as labels of positive premises.

\item \label{IC2} For each input action $a?$ and each pair of rules $r
  = \frac{H}{f(x_1,...,x_n) \tran{a?} t}$ and $r' =
    \frac{H'}{f(x_1,...,x_n) \tran{a?} t'}$, there is a rule $r'' =
      \frac{H''}{f(x_1,...,x_n) \tran{a?} t'}$ such that
        \begin{enumerate}
        \item \label{postrig} for each $1\leq i\leq n$, the positive
          trigger for variable $x_i$ in $r''$ is included in the
          positive trigger for variable $x_i$ in $r$; 
        \item \label{negtrig} for each $1\leq i\leq n$, the negative
          trigger for variable $x_i$ in $r''$ is included in the
          negative trigger for variable $x_i$ in $r$;   
        \item \label{posprem} if $x_i \tran{b?} z$ is contained in
          $H''$ and $z$ occurs in $t'$, then $x_i \tran{b?} z$ is also
          contained in $H'$.
        \end{enumerate}

\item \label{IC3} Each $a!$-emitting rule, where $a!$ is an output
  action, has only input actions as labels of negative premises and
  output actions as labels of positive premises.
\end{enumerate}
A GSOS language is in {\em \iocos-format} if so is each of
its operations.
\end{definition}

Next, we state the main result of this section.

\begin{theorem}\label{thm:congruence}
$\iocos$ is a precongruence for each GSOS language in {\iocos\ format}. 
\begin{proof}
The proof of this result may be found in \ref{App:thmcongruence}.  
\end{proof}
\end{theorem}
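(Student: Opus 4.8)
The plan is to exhibit a single \iocos-relation that witnesses the precongruence property for all contexts at once. Concretely, I would define $R$ to be the \emph{substitution closure} of \iocos, namely
$$R=\{(t[\vec r],t[\vec s])\mid t\text{ an open term over }\Sigma,\ r_j\iocos s_j\text{ for all }j\},$$
and prove that $R$ is an \iocos-relation; since $\iocos$ is the largest such relation, this gives $R\subseteq\iocos$, and instantiating $t=f(x_1,\dots,x_n)$ yields $f(\vec p)\iocos f(\vec q)$ whenever $p_i\iocos q_i$. A useful preliminary observation is that $R$ is itself closed under substitution into a common context, so that successor pairs assembled from $R$-related components again lie in $R$. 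To check the three clauses of Definition~\ref{dfn:iocos} for a pair $(t[\vec r],t[\vec s])$ I would induct on the structure of $t$: the base case $t=z_j$ is immediate, since $(r_j,s_j)\in\iocos$ already satisfies all clauses with witnesses in $\iocos\subseteq R$. The inductive step reduces to a single top operator $t=f(t_1,\dots,t_n)$; writing $p_i=t_i[\vec r]$ and $q_i=t_i[\vec s]$, the induction hypothesis provides that each pair $(p_i,q_i)$ satisfies clauses \ref{dfn:iocos:1}--\ref{dfn:iocos:3} with witnesses in $R$, and the task is to lift this to $f(\vec p)$ and $f(\vec q)$ through the GSOS rules.

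Two of the three clauses are comparatively routine. For clause~\ref{dfn:iocos:1}, the induction hypothesis gives $\ins(q_i)\subseteq\ins(p_i)$ and $\outs(p_i)\subseteq\outs(q_i)$ for each $i$; so whenever $a?\in\ins(f(\vec q))$ is produced by an $a?$-emitting rule $r_q$ fired by $\vec q$, condition~\ref{IC1} guarantees that $r_q$ tests the $q_i$ only for input transitions (positive premises) and for the absence of output transitions (negative premises), and the two inclusions show that $\vec p$ satisfies exactly those premises; hence $f(\vec p)\tran{a?}$ and $a?\in\ins(f(\vec p))$. For clause~\ref{dfn:iocos:3}, a transition $f(\vec p)\tran{a!}P$ arises from an $a!$-emitting rule $r$, whose positive premises carry output labels and whose negative premises carry input labels, by condition~\ref{IC3}. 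The induction hypothesis (\ref{dfn:iocos:3}) matches each output premise $p_i\tran{o!}p'$ by some $q_i\tran{o!}z^{*}$ with $(p',z^{*})\in R$, while $\ins(q_i)\subseteq\ins(p_i)$ turns each satisfied negative premise $p_i\notran{c?}$ into $q_i\notran{c?}$. Thus the \emph{same} rule $r$ fires for $\vec q$, with the successors chosen as the $z^{*}$, and the resulting pair lies in $R$ because it fills a common target context with $R$-related arguments.

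The delicate case, where the global condition~\ref{IC2} is indispensable, is clause~\ref{dfn:iocos:2}. Here I must match an input move $f(\vec p)\tran{a?}P$, coming from a rule $r_p$ fired by $\vec p$ with target $t'$, by a move of $f(\vec q)$, under the hypothesis $a?\in\ins(f(\vec q))$, which (as above) supplies a rule $r_q$ fired by $\vec q$. The naive attempt of firing $r_p$ on $\vec q$ fails, since $p_i\tran{c?}$ need not imply $q_i\tran{c?}$ (we only have $\ins(q_i)\subseteq\ins(p_i)$). Instead I would apply condition~\ref{IC2} to the \emph{pair} $(r_q,r_p)$ to obtain a rule $r''$ with target $t'$, whose triggers are included in those of $r_q$ (items~\ref{postrig},~\ref{negtrig}) and whose target-relevant positive premises occur among the premises of $r_p$ (item~\ref{posprem}). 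Since $r''$'s triggers are bounded by those of $r_q$ and $\vec q$ fires $r_q$, the term $\vec q$ can fire $r''$ too. The crux is the choice of successors: for each relevant positive premise $x_i\tran{b?}z$ of $r''$—which by~\ref{posprem} is also a premise of $r_p$, there instantiated by $p_i\tran{b?}p'$—I use $b?\in\ins(q_i)$ together with the induction hypothesis (\ref{dfn:iocos:2}) to select $q_i\tran{b?}z^{*}$ with $(p',z^{*})\in R$, and I fire $r''$ on $\vec q$ using precisely these $z^{*}$. Because $r''$ and $r_p$ share the target $t'$ and, by~\ref{posprem}, agree on which premise fills each target position, the matched pair $(t'[\vec p,\dots],t'[\vec q,\dots])$ once more fills $t'$ with $R$-related arguments and so lies in $R$.

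I expect the main obstacle to be exactly the bookkeeping in this last case: one must keep three distinct rules straight (the rule $r_p$ realised by the implementation, the rule $r_q$ realising the specification's input capability, and the witness $r''$ from condition~\ref{IC2}), verify that \emph{every} premise of $r''$ is genuinely satisfiable by $\vec q$—distinguishing the target-relevant premises handled via~\ref{posprem} from the purely ``testing'' premises handled via the trigger inclusions—and confirm that the freedom in choosing the successors of $r''$ suffices to place the matched targets inside $R$. Conditions~\ref{IC1} and~\ref{IC3} are what make the input/output polarity analysis go through, and the alignment clause~\ref{posprem} is what forces the two matched targets to be instances of one and the same context; without it the resulting states would not be \iocos-related. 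Because of this rule-level case analysis, I would relegate the full argument to an appendix, as the authors do.
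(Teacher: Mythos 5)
Your proposal is correct and follows essentially the same route as the paper's proof in \ref{App:thmcongruence}: the paper's relation $\mathcal{S}$ (the least relation containing $\iocos$ and closed under operators in \iocos-format, together with Lemma~\ref{lem:subst} on substitution into common contexts) is exactly your substitution closure $R$, and the three clause-by-clause arguments --- including the use of condition~\ref{IC1} plus $\outs(p_i)\subseteq\outs(q_i)$ for clause~\ref{dfn:iocos:1}, the triple $(r_q,r_p,r'')$ with the trigger inclusions and requirement~\ref{posprem} for clause~\ref{dfn:iocos:2}, and condition~\ref{IC3} for clause~\ref{dfn:iocos:3} --- coincide with those in the appendix.
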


As an example of application of the above result, we show that the
merge operator from~\cite{BenesDHKN15} can be expressed in our
rule format.

\begin{example}\label{Ex:merge}
Merge, or conjunction, is a composition operator from the theory of
\ioco. It acts as a logical conjunction of requirements, that is, it
describes systems by a conjunction of sub-systems, or
sub-specifications. We denote by $\bigwedge_{i=1}^{n}s_i$ the result
of the merge of the states $s_i$, with $1\leq i\leq
n$. In~\cite{BenesDHKN15} it is noted that, in general, the merge of two
systems can lead to invalid states (for example the merge of a
quiescent state with another with some output). The solution is to add
a pruning algorithm after calculating the merge. Here we just show the
merge operator and not that pruning algorithm.

The merge operator can be formalised using the following GSOS rules
(one such rule for each $a\in L$):
 \begin{mathpar}
\inferrule*[right={\normalsize .}]
{\{x_i\tran{a}y_i\mid 1\leq i\leq n\}}
{\bigwedge_{i=1}^{n}x_i\tran{a}\bigwedge_{i=1}^{n}y_i}
\end{mathpar}

It is immediate to check that the above rules are in
\iocos-format. Therefore the above theorem yields that the merge
operator preserves {\iocos}.
\end{example}
In the following examples, we discuss whether the rules specifying
(variations on) some classic process-algebraic operations and those
considered in~\cite{BijlRT03} meet the constraints of our rule format.

\begin{example}[Nondeterministic choice]\label{Ex:choice}
The following rules describe the behaviour of the nondeterministic
choice operation considered in~\cite[Figure~1]{GLM18}, which is a minor
variation on the classic CCS choice~\cite{Mil89CC}:

 \begin{mathpar}
\inferrule*[]
{x_1\tran{a}y_1}
{x_1+x_2\tran{a}y_1} \qquad
\inferrule*[]
{x_2\tran{a}y_2}
{x_1+x_2\tran{a}y_2} \qquad
\inferrule*[right={\normalsize ,}]
{x_1\tran{\delta!}y_1,~x_2\tran{\delta!}y_2}
{x_1+x_2\tran{\delta!}y_1 + y_2}  
 \end{mathpar}
 where $a\in L\setminus \{\delta!\}$. (Note that the third rule above
 is an equivalent reformulation of rule VII given
 in~\cite[Figure~1]{GLM18}, which is not in GSOS format.)

 These rules are not in {\iocos-format}. To see this, consider the
 $a?$-emitting rules for some input action $a?$, namely
  \begin{mathpar}
r = \frac
{x_1\tran{a?}y_1}
{x_1+x_2\tran{a?}y_1}
\quad\text{ and }\quad
r' = \frac
{x_2\tran{a?}y_2}
{x_1+x_2\tran{a?}y_2} . 
  \end{mathpar}
Those rules do not satisfy condition~\ref{IC2} in
Definition~\ref{Def:ruleformat}. Indeed, the only possible choice for
rule $r''$ is $r'$, which satisfies neither requirement~\ref{postrig}
nor requirement~\ref{posprem}.

It turns out that nondeterministic choice does not preserve {\iocos}
and the reader will find it easy to construct an example witnessing
the fact that {\iocos} is not a precongruence with respect to $+$ by
considering processes that do not satisfy the proviso
of~\cite[Proposition~6.2]{GLM18}. 
\end{example}

\begin{example}[Interleaving]\label{Ex:par}
The binary version of the merge operator from Example~\ref{Ex:merge}
is essentially the synchronous parallel composition of its
arguments. Another standard notion of parallel composition operator is
one that interleaves the computational steps of its
arguments~\cite{Hoa85CSP}. Ignoring quiescence, the following rules
describe the behaviour of the interleaving operator:

 \begin{mathpar}
\inferrule*[]
{x_1\tran{a}y_1}
{x_1 ~|~ x_2\tran{a}y_1 ~|~ x_2} \qquad
\inferrule*[]
{x_2\tran{a}y_2}
{x_1 ~|~ x_2\tran{a}x_1 ~|~ y_2} 
 \end{mathpar}
 where $a\in L\setminus \{\delta!\}$. 

 These rules are not in {\iocos-format}. To see this, consider the
 $a?$-emitting rules for some input action $a?$, namely
  \begin{mathpar}
r = \frac
{x_1\tran{a?}y_1}
{x_1 ~|~ x_2\tran{a?}y_1 ~|~ x_2}
\quad\text{ and }\quad
r' = \frac
{x_2\tran{a?}y_2}
{x_1 ~|~ x_2\tran{a?}x_1 ~|~ y_2} . 
  \end{mathpar}
Those rules do not satisfy condition~\ref{IC2} in
Definition~\ref{Def:ruleformat}. Indeed, the only possible choice for
rule $r''$ is $r'$, which satisfies neither requirement~\ref{postrig}
nor requirement~\ref{posprem}.

Again, it turns out that the interleaving operation does not preserve
{\iocos} and the reader will find it easy to construct an example
witnessing the fact that {\iocos} is not a precongruence with respect
to $|$. The same applies to the unrestricted version of the parallel
composition operator considered in~\cite[Definition~2.3]{BijlRT03} in
the setting of {\ioco}. Note that the notion of parallel composition
considered in that reference is a partial operation, as it is only
defined for arguments whose sets of input actions and whose sets of
output actions are disjoint. 
\end{example}

\begin{example}[Relabelling]\label{Ex:relabelling}
In this example, we consider a variation on the CCS relabelling
operator~\cite{Mil89CC} that is appropriate in the setting of
labelled transition systems with inputs and outputs. A
\emph{relabelling} is a function $f: L \rightarrow L$ that maps input
actions to input actions, output actions to output actions and such
that $f(\delta!) = \delta!$. The relabelling operator $\_[f]$ associated with
a relabelling $f$ is specified using the following rules:

 \begin{mathpar}
\frac
{x\tran{a}y}
{x[f]\tran{f(a)}y[f]}  \qquad \text{where $a\in L$.}
 \end{mathpar}
If $f$ is injective over the set of input actions $I$,
then the above rules are in {\iocos} format and
Theorem~\ref{thm:congruence} yields that the operator $\_[f]$ preserves
{\iocos}.

On the other hand, assume that $f(a?) = f(b?) = a?$ for two distinct
input actions $a?$ and $b?$. Then, the collection of rules for $\_[f]$
includes the rules
  \begin{mathpar}
r = \frac
{x\tran{a?}y}
{x[f]\tran{a?}y[f]}
\quad\text{ and }\quad
r' = \frac
{x\tran{b?}y}
{x[f]\tran{a?}y[f]} . 
  \end{mathpar}
Those rules do not satisfy condition~\ref{IC2} in
Definition~\ref{Def:ruleformat}. Indeed, choosing $r'$ as $r''$
violates requirement~\ref{postrig} and choosing $r$ as $r''$ violates
requirement~\ref{posprem}. Therefore the rules for $\_[f]$ are not in
{\iocos-format}.  The reader will have no trouble in constructing an example
witnessing the fact that, for such a relabelling $f$, the operator
$\_[f]$ does not preserve {\iocos}.
\end{example}

\begin{example}[Restriction]\label{Ex:restriction}
The restriction 
operator $\_\setminus A$, where $A$ is included in $L\setminus \{\delta!\}$,  is specified using the following rules:

 \begin{mathpar}
\frac
{x\tran{a}y}
{x\setminus A\tran{a}y\setminus A}  \qquad \text{where $a\not\in A$.}
 \end{mathpar}
It is easy to see that the above rules are in {\iocos}-format and
Theorem~\ref{thm:congruence} yields that the restriction operator
$\_\setminus A$ preserves {\iocos}.
\end{example}
The main lesson one can draw from the examples we have presented is
that, similarly to {\ioco}, the {\iocos} relation is not algebraically
well behaved. Indeed, a variety of operators only preserve {\iocos} if
one makes some assumptions on the sets of input and output actions of
their arguments.

The operational specification of the hiding operator
$\mathbf{hide}~V~\mathbf{in}~\_$, where $V\subseteq O$, presented
in~\cite[Definition~2.3]{BijlRT03} requires an extension of the theory
of {\iocos} with the internal action $\tau\not\in L$. We briefly
discuss how this extension can be carried out and how to extend the
{\iocos}-format to cover the hiding operator in
Section~\ref{sec:conclusion}.

\subsubsection{The rule format cannot be relaxed easily}\label{ssec:counterexamples}

Here we present some examples showing that the restrictions of the rule format from Definition~\ref{Def:ruleformat} cannot be relaxed easily.

\begin{example}\label{ex:counter1}
	This example indicates that the use of input actions in negative
	premises of input-emitting rules would invalidate
	Theorem~\ref{thm:congruence}. Let $f$ be defined by the following
	rules:
	\begin{mathpar}
		\inferrule*[]
		{ }
		{f(x)\tran{\delta!}f(x)}
		\and
		\inferrule*[right={\normalsize ,}]
		{x\notran{a?}}
		{f(x)\tran{a?}f(x)}
	\end{mathpar}
	
	\noindent where $a?\in I$. Now, let us consider the following processes from Example~\ref{Ex:iocos-procs}:
	
	{\centering \begin{tikzpicture}[->,>=stealth',shorten
		>=1pt,auto] \matrix [matrix of math nodes, column
		sep={2cm,between origins},row sep={1cm,between origins}] {
			\node (C0) {i}; & \node (D0) {s};\\
		};
		\begin{scope}[every node/.style={font=\small\itshape}]
		\path   
		(C0)   edge[loop right]	   node  {$a?$} (C0)
		(C0)   edge[loop left]	   node  {$\delta!$} (C0)
		(D0)   edge[loop right]     node  {$\delta!$} (D0);
		\end{scope}
		\end{tikzpicture}
		
	}
	
	As remarked in Example~\ref{Ex:iocos-procs}, $i~{\iocos}~s$. On the other
	hand, $f(i)~{\niocos}~f(s)$ because $a?\in\ins(f(s))$ but
	$a?\notin\ins(f(i))$.
\end{example}

\begin{example}
	This example indicates that the use of output actions in positive
	premises of input-emitting rules would invalidate
	Theorem~\ref{thm:congruence}. Let $f$ be defined by the following
	rules:
	\begin{mathpar}
		\inferrule*[]
		{ }
		{f(x)\tran{\delta!}f(x)}
		\and
		\inferrule*[right={\normalsize ,}]
		{x\tran{b!}y}
		{f(x)\tran{a?}f(x)}
	\end{mathpar}
	
	\noindent where $b!\in O$. Now, let us consider the following processes:
	
	{\centering
		\begin{tikzpicture}[->,>=stealth',shorten >=1pt,auto]
		\matrix [matrix of math nodes, column sep={2cm,between origins},row sep={1cm,between origins}]
		{
			\node (C0) {p}; & \node (D0) {q};\\
		};
		\begin{scope}[every node/.style={font=\small\itshape}]
		\path   
		(C0)   edge[loop left]	   node  {$c!$} (C0)
		(D0)   edge[loop left]	   node  {$b!$} (D0)
		(D0)   edge[loop right]     node  {$c!$} (D0);
		\end{scope}
		\end{tikzpicture}
		
	}
	
	\noindent where $c!\in O$. We have that $p\iocos q$ but $f(p)\niocos f(q)$. Indeed, $a?\in\ins(f(q))$ but $a?\notin\ins(f(p))$.
\end{example}

\begin{example}\label{ex:counter-example}
	This example indicates that not meeting requirement~\ref{postrig} in Definition~\ref{Def:ruleformat} 	
	would invalidate Theorem~\ref{thm:congruence}. Let $f$ be defined by the following rules:
	\begin{mathpar}
		\inferrule*[]
		{ }
		{f(x)\tran{\delta!}f(x)}
		\and
		\inferrule*[]
		{ }
		{0\tran{\delta!}0}
		\and
		\inferrule*[]
		{x\tran{a?}y}
		{f(x)\tran{a?}0}
		\and
		\inferrule*[right={\normalsize .}]
		{ }
		{f(x)\tran{a?}f(x)}
	\end{mathpar}
	
	\noindent where $a?\in I$. The above set of rules does not meet requirement~\ref{postrig} in Definition~\ref{Def:ruleformat}. To see this, take 
	\begin{mathpar} 
		\inferrule*[right={\normalsize }] { }
		{f(x)\tran{a?}f(x)} 
	\end{mathpar}
	as rule $r$ and 
	\begin{mathpar} 
		\inferrule*[] {x\tran{a?}y}
		{f(x)\tran{a?}0} \end{mathpar} 
	as rule $r'$. Note that the only possible choice for
	rule $r''$ is $r'$ itself. However, with this choice, the positive
	trigger for variable $x$ in $r''$ is $\{a?\}$, which is not included
	in the positive trigger for variable $x$ in $r$, which is the empty
	set.
	
	Now, let us consider again the two processes of Example~\ref{ex:counter1}:
	
	{\centering
		\begin{tikzpicture}[->,>=stealth',shorten >=1pt,auto]
		\matrix [matrix of math nodes, column sep={2cm,between origins},row sep={1cm,between origins}]
		{
			\node (C0) {i}; & \node (D0) {s};\\
		};
		\begin{scope}[every node/.style={font=\small\itshape}]
		\path   
		(C0)   edge[loop left]	   node  {$\delta!$} (C0)
		(C0)   edge[loop right]	   node  {$a?$} (C0)
		(D0)   edge[loop right]     node  {$\delta!$} (D0);
		\end{scope}
		\end{tikzpicture}
		
	}
	
	\noindent We know that $i~\iocos~s$. On the other hand, we claim that $f(i)~\niocos~f(s)$. Indeed, $f(i)\tran{a?}0$ and the only way that $f(s)$ can match an $a?$-transition is by $f(s)\tran{a?}f(s)$. Since $0~\niocos~f(s)$ (because $a?\notin\ins(0)$), this implies that $f(i)\niocos f(s)$. 
	
	Note that requirements~\ref{negtrig} and~\ref{posprem} in
	Definition~\ref{Def:ruleformat} are instead met by taking $r'=r''$.
\end{example}

\begin{example}
	This example indicates that not meeting requirement~\ref{negtrig} in Definition~\ref{Def:ruleformat} 
	would invalidate Theorem~\ref{thm:congruence}.
	Let $f$ be defined by the
	following rules: \begin{mathpar} \inferrule*[] { }
		{f(x)\tran{\delta!}f(x)} \and \inferrule*[] { }
		{0\tran{\delta!}0} \and \inferrule*[] {x\notran{b!}}
		{f(x)\tran{a?}0} \and \inferrule*[right={\normalsize ,}] { }
		{f(x)\tran{a?}f(x)} \end{mathpar}
	\noindent where $a?\in I$ and $b!\in O$. The above set of rules does not meet requirement~\ref{negtrig} in Definition~\ref{Def:ruleformat}. To see this, take 
	\begin{mathpar} 
		\inferrule*[right={\normalsize }] { }
		{f(x)\tran{a?}f(x)} 
	\end{mathpar}
	as rule $r$ and 
	\begin{mathpar} 
		\inferrule*[] {x\notran{b!}}
		{f(x)\tran{a?}0} 
	\end{mathpar} 
	as rule $r'$. Note that the only possible choice for
	rule $r''$ is $r'$ itself. However, with this choice, the negative
	trigger for variable $x$ in $r''$ is $\{b!\}$, which is not included
	in the negative trigger for variable $x$ in $r$, which is the empty
	set.
	
	Now, let us consider the following processes:
	
	{\centering
		\begin{tikzpicture}[->,>=stealth',shorten >=1pt,auto]
		\matrix [matrix of math nodes, column sep={4cm,between origins},row sep={1cm,between origins}]
		{
			\node (C0) {p}; & \node (D0) {q};\\
		};
		\begin{scope}[every node/.style={font=\small\itshape}]
		\path   
		(C0)   edge[loop left]	   node  {$c!$} (C0)
		(C0)   edge[loop right]	   node  {$a?$} (C0)
		(D0)   edge[loop right]     node  {$c!$} (D0)
		(D0)   edge[loop left]     node  {$b!$} (D0);
		\end{scope}
		\end{tikzpicture}
		
	}
	
	\noindent where $c!\in O$. Again, $p\iocos q$, but $f(p)\niocos f(q)$. Indeed, $f(p)\tran{a?}0$ and the only way $f(q)$ can match an $a?$-transition is by $f(q)\tran{a?}f(q)$. However, as in the previous case, $0\niocos f(q)$.
	
	Note that requirements~\ref{postrig} and~\ref{posprem} in
	Definition~\ref{Def:ruleformat} are instead met by taking $r'=r''$.
\end{example}

\begin{example}
	This example indicates that not meeting requirement~\ref{posprem} in
	Definition~\ref{Def:ruleformat} would invalidate
	Theorem~\ref{thm:congruence}. Let $f$ be defined by the following rules:
	\begin{mathpar}
		\inferrule*[]
		{ }
		{f(x)\tran{\delta!}f(x)}
		\and
		\inferrule*[]
		{x \tran{a?} y}
		{f(x)\tran{a?} y}
		\and
		\inferrule*[]
		{x\tran{b?}y}
		{f(x)\tran{a?}y}
		\and
		\inferrule*[right={\normalsize ,}]
		{x \tran{a?} y }
		{f(x)\tran{a?}f(x)}
	\end{mathpar}
	
	\noindent where $a?,b?\in I$. The above set of rules does not meet requirement~\ref{posprem} in Definition~\ref{Def:ruleformat}. To see this, take 
	\begin{mathpar} 
		\inferrule*[right={\normalsize }] {x \tran{a?} y }
		{f(x)\tran{a?}f(x)} 
	\end{mathpar}
	as rule $r$ and 
	\begin{mathpar} 
		\inferrule*[] {x\tran{b?}y}
		{f(x)\tran{a?}y} \end{mathpar} as rule $r'$. Note that the only
	possible choice for rule $r''$ meeting requirement~\ref{postrig} and
	having $y$ as target of its conclusion is 
	\begin{mathpar}
		\inferrule*[right={\normalsize .}] {x\tran{a?}y}
		{f(x)\tran{a?}y} \end{mathpar}
	However, with this choice, the positive premise $x\tran{a?}y$ of $r''$ is not a positive premise of $r'$. 
	
	Now, let us consider the following two processes $p$ and $q$:
	
	{\centering
		\begin{tikzpicture}[->,>=stealth',shorten >=1pt,auto]
		\matrix [matrix of math nodes, column sep={1.5cm,between origins},row sep={1cm,between origins}]
		{
			\node (C0) {p}; & \node (C1) {0}; &[.5cm] \node (D0) {q};\\
		};
		\begin{scope}[every node/.style={font=\small\itshape}]
		\path   
		(C0)   edge[loop left]	  node  {$a?$,$\delta!$} (C0)
		(C0)   edge         	  node[above]  {$b?$} (C1)
		(C1)   edge[loop right]    node  {$\delta!$} (C1)
		(D0)   edge[loop right]    node  {$a?$,$\delta!$} (D0);
		\end{scope}
		\end{tikzpicture}
		
	}
	
	\noindent It is easy to see that $p~\iocos~q$. However, $f(p)~\niocos~f(q)$. To see this, observe that $f(q)$ can perform the input action $a?$ and $f(p) \tran{a?}0$ by rule $r'$. The only two possible matching transitions from $f(q)$ are  $f(q)\tran{a?}f(q)$ (using rule $r$) and $f(q)\tran{a?}q$ (using rule $r''$). Since $0~\niocos~f(q)$ and $0~\niocos~q$ (because $a?\notin\ins(0)$), this implies that $f(p)\niocos f(q)$. 
\end{example}

\begin{example}
	This example indicates that the use of output actions in negative premises of output-emitting rules would invalidate Theorem~\ref{thm:congruence}. Let $f$ be defined by the following rules:
	\begin{mathpar}
		\inferrule*[]
		{x\tran{a!}y}
		{f(x)\tran{\delta!}f(x)}
		\and
		\inferrule*[right={\normalsize ,}]
		{x\notran{a!}}
		{f(x)\tran{a!}f(x)}
	\end{mathpar}
	
	\noindent where $a!\in O$. Now, let us consider the following processes:
	
	{\centering
		\begin{tikzpicture}[->,>=stealth',shorten >=1pt,auto]
		\matrix [matrix of math nodes, column sep={4cm,between origins},row sep={1cm,between origins}]
		{
			\node (C0) {p}; & \node (D0) {q};\\
		};
		\begin{scope}[every node/.style={font=\small\itshape}]
		\path   
		(C0)   edge[loop left]	   node  {$b!$} (C0)
		(D0)   edge[loop right]    node  {$a!$} (D0)
		(D0)   edge[loop left]     node  {$b!$} (D0);
		\end{scope}
		\end{tikzpicture}
		
	}
	
	\noindent where $b!\in O$. Again, $p\iocos q$ but $f(p)\niocos f(q)$, because $f(p)\tran{a!}f(p)$ but $f(q)\notran{a!}$.
\end{example}

\begin{example}
	This example indicates that the use of input actions in positive premises of output-emitting rules would invalidate Theorem~\ref{thm:congruence}. Let $f$ be defined by the following rules:
	\begin{mathpar}
		\inferrule*[]
		{x\notran{b?}}
		{f(x)\tran{\delta!}f(x)}
		\and
		\inferrule*[right={\normalsize ,}]
		{x\tran{b?}y}
		{f(x)\tran{a!}f(x)}
	\end{mathpar}
	
	\noindent where $a!\in O$ and $b?\in I$. Now, let us consider the following processes:
	
	{\centering
		\begin{tikzpicture}[->,>=stealth',shorten >=1pt,auto]
		\matrix [matrix of math nodes, column sep={4cm,between origins},row sep={1cm,between origins}]
		{
			\node (C0) {p}; & \node (D0) {q};\\
		};
		\begin{scope}[every node/.style={font=\small\itshape}]
		\path   
		(C0)   edge[loop left]	   node  {$\delta!$} (C0)
		(C0)   edge[loop right]    node  {$b?$} (C0)
		(D0)   edge[loop right]    node  {$\delta!$} (D0);
		\end{scope}
		\end{tikzpicture}
		
	}
	
	\noindent Again, $p\iocos q$ but $f(p)\niocos f(q)$, because $f(p)\tran{a!}f(p)$ but $f(q)\notran{a!}$.
\end{example}

\subsection{Applying modal decomposition}\label{ssec:decomposition}

Assume that we want to know whether $f(p_1,\ldots,
p_n)\models\varphi$, with $f$ specified by rules in \iocos-format and
$\varphi\in\Liocos$. One could construct the LTS for $f(p_1,\ldots,
p_n)$ from those for $p_1,\ldots, p_n$ and then use the rules defining
the satisfaction relation $\models$ to check whether $f(p_1,\ldots,
p_n)$ satisfies $\varphi$. However, as is well known, this approach
suffers from the so-called state-explosion problem. Alternatively, one
can apply a compositional approach: to check whether $f(p_1,\ldots,
p_n) \models \varphi$, one first constructs, from $\varphi$ and the
rules defining the operation $f$, a collection of properties
$\varphi_1, \ldots, \varphi_n$ such that $f(p_1,\ldots,
p_n)\models\varphi$ if, and only if, $p_i\models \varphi_i$ for
$i\in\{1, \ldots, n\}$; after that, one checks whether each statement
$p_i\models \varphi_i$ holds. (Even though compositional model
checking suffers from the `formula-explosion problem' in the worst
case and is therefore no panacea in general, variations on that
approach have been applied successfully in the literature---see, for
instance,~\cite{Andersen95,LaroussinieL98}.)

Here we follow this compositional approach by applying the so-called
modal decomposition method of~\cite{BFG04,FokkinkGW06}. In those
papers, ruloids play an important role. For a GSOS languages, a ruloid
\begin{equation}\label{eqn:gsos-ruloid}
\sosrule{\{x_{i} \tran{a_{ij}} y_{ij}  \mid 1 \leq i \leq n, 1 \leq j \leq m_i\} \cup \{x_{i} \notran{b_{ik}}  \mid 1 \leq i 
\leq n, 1 \leq k \leq \ell_i \}}
{D[\arr{x}] \tran{a} C[\vec{x}, \vec{y}]}
\end{equation}
where the $x_i$'s and the $y_{ij}$'s $(1 \leq i \leq n$ and $1 \leq j \leq m_i)$ are all distinct variables, $m_{i}$ 
and $l_{i}$ are natural numbers, $D[\arr{x}]$ is a term with variables including at most the $x_{i}$'s, 
$C[\vec{x}, \vec{y}]$ is a term with variables including at most the $x_{i}$'s and $y_{ij}$'s, and the $a_{ij}$'s, 
$b_{ik}$'s and $a$ are actions from $L$. 

In what follows,
we will limit ourselves to considering only operations in
\iocos-format specified by rules whose target is either a variable or
a term of the form $g(z_1,\ldots, z_n)$, and the metavariables $t$ and
$u$ will range over terms of those forms.  The extension of our
results to arbitrary rules in \iocos-format can be carried out along
the lines in~\cite{BFG04,FokkinkGW06}.

In the light of our simplified setting, \iocos\ rules, i.e., GSOS
rules in \iocos-format, play the role of ruloids
in~\cite{BFG04,FokkinkGW06}; the only other ruloids we need are those
for variables, viz.
\begin{mathpar}
	\inferrule*[right={\normalsize $a\in L$.}]
	{x\tran{a}x'}
	{x\tran{a}x'}
\end{mathpar}
In what follows, we often refer to ruloids as rules.

A crucial result from~\cite{BloomThesis,BIM95} is that a transition
$\sigma(t)\tran{a}p'$ is provable in a GSOS language if, and only if,
there exist a substitution $\sigma'$ and a GSOS ruloid $H/t\tran{a}u$
such that $\sigma'$ satisfies all premises in $H$ (denoted by
$\sigma'\models H$), $\sigma'(t)=\sigma(t)$ and $\sigma'(u)=p'$.

Before defining the modal decomposition for $\Liocos$ in general, let
us describe how to define it for the modal operator $\eexis{a?}$, with
$a?\in I$. First, recall that $p\models\eexis{a?}\varphi$ iff either
$p\notran{a?}$ or there exists some $p'$ such that $p\tran{a?}p'$ with
$p'\models\varphi$. Hence, the modal decomposition must consider both
cases: when is possible to apply a rule that gives rise to an
$a?$-transition, and when it is not. This first case is dealt with
following the definition of the decomposition of formulae in HML given
in~\cite{FokkinkGW06}. For the second case, we first define the
following auxiliary notations and concepts.

\begin{definition}\label{def:decomp_no_tran}
Let $P$ be a GSOS language, $t$ a term, and $a?\in I$ an input
action. We write $R(t,a?)$ for the set of rules for $t$ that emit
$a?$. That is,
\[R(t,a?)= \{r\in P\mid\exists H,u.~ r=H/t\tran{a?}u\}. 
\]  

We write $H_r$ for the set of premises of rule $r\in R(t,a?)$.  Given a premise $\gamma\in H_r$ and a 
variable $x$, we define the formula $\nega(\gamma,x)\in \Liocos$ in the following way:
\begin{itemize}
	\item $\nega(x\notran{b!}, x)= \exis{b!}\true$, with $b!\in O$.
	
	\item $\nega(x\tran{b?}x', x)= \eexis{b?}\false$, with $b?\in I$.
	
	\item $\nega(y\notran{b!}, x)= \nega(y\tran{b?}y', x)= \true$, with $y\neq x$.
\end{itemize}
Finally, we write $\chi(t,a?)$ for the set of all the functions that pick a premise in $H_r$ for each $r \in R(t,a?)$---that is,

\[\chi(t,a?)= \{\eta\mid \eta:R(t,a?)\tran{}\bigcup_{r\in R(t,a?)}{H_r},\,\text{such that $\eta(r)\in H_r,\forall r\in 
R(t,a?)$}\}.
\]
\end{definition}

\begin{remark}\label{rem:notran}
Intuitively, $\nega(\gamma,x)$ gives us a \iocos\ formula that
captures that the premise $\gamma$ is not satisfied. For example, if
$\gamma= x\notran{b!}$, then $\nega(x\notran{b!}, x)= \exis{b!}\true$
and given a closed substitution $\sigma$, if $\sigma(x)\models
\exis{b!}\true$, then $\sigma(x)$ does not satisfy the premise
$\gamma$.
\end{remark}

The following example illustrates the discussion in Remark~\ref{rem:notran}.

\begin{example}\label{ex:decomp_no_tran}
Let us consider $t= f(x,y)$, with rules

\begin{mathpar}
	\inferrule*[left=$r_1$]
	{x\notran{a!} }
	{f(x,y)\tran{a?}x}
	\and
	\inferrule*[right={\normalsize .}, left=$r_2$]
	{x\tran{a?}x'\quad y\tran{b?}y'}
	{f(x,y)\tran{a?}x'}
\end{mathpar}

We want to characterise when $\sigma(f(x,y))\notran{a?}$, for each closed substitution $\sigma$. In order to 
do so, first let us define $\chi(f(x,y),a?)$. Since there are two premises in $H_{r_2}$, $\chi(f(x,y),a?)=\{\eta_1, 
\eta_2\}$, where:
\begin{displaymath}
\begin{array}{rcl}
\eta_1 & = & \{r_1\mapsto x\notran{a!}, r_2\mapsto x\tran{a?}x'\}\, ,\\
\eta_2 & = & \{r_1\mapsto x\notran{a!}, r_2\mapsto y\tran{b?}y'\}\, .
\end{array}
\end{displaymath}

Now, let us define $\psi_{\eta_i}(z)=\bigwedge_{r\in R(t,a?)} \nega(\eta_i(r),z)$, for $z\in \{x,y\}$. That is, 
working up to logical equivalence, 
\begin{displaymath}
\begin{array}{rclcrcl}
\psi_{\eta_1}(x) & = & \exis{a!}\true \wedge \eexis{a?}\false\, , &
\psi_{\eta_1}(y) & = & \true\, ,\\
\psi_{\eta_2}(x) & = & \exis{a!}\true\, ,&
\psi_{\eta_2}(y) & = & \eexis{b?}\false\, .\\
\end{array}
\end{displaymath}

We observe that if $\sigma(x)\models \psi_{\eta_i}(x)$ and $\sigma(y)\models \psi_{\eta_i}(y)$ for some $i\in 
\{1,2\}$, then $\sigma(f(x,y))\notran{a?}$, and that the converse implication also holds. (See 
Proposition~\ref{lem:no_rules} in~\ref{App:decomposition} for the formal result.)
\end{example}

We are now ready to define modal decomposition \`a la Fokkink and van
Glabbeek for the logic $\Liocos$. We will give the definition for an
alternative syntax of the logic $\Liocos$ where, as noted in
Definition~\ref{Def:HML}, we use finitary conjunctions and
disjunctions, and follow the convention that an empty conjunction
stands for $\true$ and an empty disjunction stands for $\false$.

\begin{definition}[Modal decomposition]\label{def:decomposition}
  The decomposition function
  \[
  \cdot^{-1}:\mathbb{T}(\Sigma)\rightarrow 
  (\Liocos\rightarrow\mathcal{P(\mathsf{Var}\rightarrow\Liocos)})
  \]
  is defined in the following way:
\begin{itemize}
 \item $\psi\in t^{-1}(\eexis{a?}\varphi)$ iff $\psi\in t^{-1}_\chi(\eexis{a?}\varphi) \cup 
 t^{-1}_R(\eexis{a?}\varphi)$ where $t^{-1}_\chi(\eexis{a?}\varphi)$ and $t^{-1}_R(\eexis{a?}\varphi)$ are 
 defined as follows: 
 \begin{itemize}
  \item $\psi\in t^{-1}_\chi(\eexis{a?}\varphi)$ iff exists a function $\eta\in\chi(t,a?)$ such that 
  $\psi=\psi_\eta$, where 	for each $x\in \mathsf{Var}$,
	\[ \psi_\eta(x)=\bigwedge_{r\in R(t,a?)} \nega(\eta(r),x).\]
			
  \item $\psi\in t^{-1}_R(\eexis{a?}\varphi)$ iff there are a rule $r=H/t\tran{a?}t'\in R(t, a?)$ and a 
  decomposition mapping $\psi'\in u^{-1}(\varphi)$ such that, for each $x\in \mathsf{Var}$,
	\begin{displaymath}
	  \psi(x)=\left\{
	  \begin{array}{ll}
	  \bigwedge_{\substack{x\tran{b?}y\in H\\ y\in var(u)}}\eexis{b?}\psi'(y)\wedge \psi'(x) & \text{if $x\in 
	  var(u)$}\\
	  \bigwedge_{\substack{x\tran{b?}y\in H\\ y\in var(u)}}\eexis{b?}\psi'(y) & \text{if $x\notin var(u)$}\\
	  \end{array}
	  \right.
	\end{displaymath}
\end{itemize}
		
\item $\psi\in t^{-1}(\exis{a!}\varphi)$ iff there exist some rule ${H}/{t\tran{a!}u}$ and some decomposition mapping 
$\psi'\in u^{-1}(\varphi)$ such that, for each $x\in \mathsf{Var}$,
\[\psi(x)=\left\{
	\begin{array}{l@{\hspace{0.2cm}}l}
	\bigwedge_{\substack{x\tran{b!}y\in H\\ y\in var(u)}} \exis{b!}\psi'(y)\wedge \bigwedge_{x\notran{c?}\in 
	H}\eexis{c?}\false\wedge   
	\psi'(x) & \text{if $x\in var(u)$}\\
	\bigwedge_{\substack{x\tran{b!}y\in H\\ y\in var(u)} }\exis{b!}\psi'(y)\wedge \bigwedge_{x\notran{c?}\in 
	H}\eexis{c?}\false 
	& \text{if 
	$x\notin var(u)$}\\
    \end{array}
 	\right.
\]
		
 \item $\psi\in t^{-1}(\bigvee_{i\in I}\varphi_i)$ iff $\psi\in\bigcup_{i\in I}t^{-1}(\varphi_i)$ (where $I$ is a finite 
 index set).
		
 \item $\psi\in t^{-1}(\bigwedge_{i\in I}\varphi_i)$ iff there are $\psi_i\in t^{-1}(\varphi_i)$ for $i\in I$, such that 
 for all $x\in \mathsf{Var}$, $\psi(x)=\bigwedge_{i\in I}\psi_i(x)$, where $I$ is finite.
\end{itemize}
\end{definition}

\begin{remark}\label{rem:decomposition}
The only positive premises that are considered when defining $\psi(x)$ are those for which $x\tran{b?}y$ 
with $y\in var(u)$. Indeed, by the definition of GSOS rules, since all the variables are distinct, if 
$x\tran{b?}y\in H$, $\psi(y)=\true$ if $y\notin var(u)$. Also, since $\eexis{a?}\true\equiv\true$, the 
last sentence means that the only positive premises $x\tran{b?}y$ that give a non-trivial $\psi(y)$ are those 
such that $y\in var(u)$. 
\end{remark}

\begin{remark}
Notice that $t^{-1}(\false)=\emptyset$ (case $\bigvee_{i\in I}\varphi_i$ for $I=\emptyset$) and $t^{-1}(\true)=\{\psi\}$, 
where $\psi(x)=\true$ for all $x$ (case $\bigwedge_{i\in I}\varphi_i$ for $I=\emptyset$).
\end{remark}

Our reader may wonder why the definition of $\psi\in
t_R^{-1}(\eexis{a?}\varphi)$ is sufficient to ensure that
$\sigma(t)\models \eexis{a?}\varphi$ whenever $\sigma(x)\models
\psi(x)$ for all $x\in var(t)$. Indeed, the definition in question
does not explicitly require that $\sigma(x)$ satisfy the negative
premises in $H_r$. (Observe that this cannot be done because there is
no formula in $\Liocos$ that is satisfied by all the processes that
cannot perform a $b!$-transition, for some output action $b!$.)
However, as the proof of Theorem~\ref{thm:decomposition} will make
clear (see~\ref{App:decomposition}), since the rules are in {\iocos}
format, it isn't necessary to express the negative premises in the
definition of $\psi\in t_R^{-1}(\eexis{a?}\varphi)$. In fact, if
$\sigma$ does not satisfy any of the mappings $\psi'\in
t_\chi^{-1}(\eexis{a?}\varphi)$, then $\sigma$ satisfies all the
premises of at least one rule $r$ in $R(t,a?)$ and some mapping $\psi$
associated to that rule contains all the information that is needed to
determine whether $\sigma(t)\models \eexis{a?}\varphi$.

All this is formalised in Theorem~\ref{thm:decomposition}, but first we will show some examples of the use of 
the modal decomposition.

\begin{example}
Let us consider the following rules in the \iocos-format (omitting the rules for the quiescence action):

\begin{mathpar}
	\inferrule*[left=$r_1$]
	{ }
	{g(y)\tran{b!}g(y)}
	\and
	\inferrule*[right={\normalsize ,}, left=$r_2$]
	{x\tran{a?}y\quad x\notran{b!}}
	{f(x)\tran{a?}g(y)}
\end{mathpar}
where $a?\in I$ and $b!\in O$. Let us calculate $f(x)^{-1}(\eexis{a?}\exis{b!}\true)$. First, we have that 
$R(f(x),a?)=\{r_2\}$ and, since $H_{r_2}$ has two premises, there are two functions $\chi(t,a?)$, namely $\eta_1(r_2) = 
x\tran{a?}y$ and $\eta_2 (r_2)= x\notran{b!}$. Second, trivially, $g(y)^{-1}(\exis{b!}\true)=\{\phi\}$, where 
$\phi(x)=\true$ for all $x$. 

Now, let us write both $f(x)^{-1}_\chi(\eexis{a?}\exis{b!}\true)$ and $f(x)^{-1}_R(\eexis{a?}\exis{b!}\true)$.
\begin{itemize}
	\item $\psi_{\eta_1}, \psi_{\eta_2}\in f(x)^{-1}_\chi(\eexis{a?}\exis{b!}\true)$, where $\psi_{\eta_1}(x)= 
	\eexis{a?}\false$, $\psi_{\eta_2}(x)= \exis{b!}\true$ and $\psi_{\eta_1}(z)=\psi_{\eta_2}(z)=\true$ for all 
	$z\neq x$.
	
	\item $\psi\in f(x)^{-1}_R(\eexis{a?}\exis{b!}\true)$ iff $\psi(x)= \eexis{a?}\true\equiv \true$ and 
	$\psi(z)=\true$ for all $z\neq x$.
\end{itemize}

This means that any process $p$ is such that $f(p)\models \eexis{a?}\exis{b!}\true$. Indeed, let us assume that $p\tran{a?}$ and $p\notran{b!}$, in this case $f(p)\tran{a?}\sigma(g(y))$ and we are done. On the other hand, if  either $p\notran{a?}$ or $p\tran{b!}$, then $f(p)\notran{a?}$ and therefore $f(p)$ trivially satisfies the formula.
\end{example}

\begin{theorem}
\label{thm:decomposition}
Let $P$ be a GSOS language in \iocos-format. For each term $t$, formula $\varphi$ and closed substitution 
$\sigma$, we have $\sigma(t)\models\varphi$ iff there exists $\psi\in t^{-1}(\varphi)$ such that  $\sigma(x)\models\psi(x)$, for all $x\in 
var(t)$.
\begin{proof}
The proof of this result can be found in \ref{App:decomposition}.
\end{proof}
\end{theorem}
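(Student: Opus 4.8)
The plan is to prove the statement by structural induction on the formula $\varphi\in\Liocos$, using the characterisation of provable transitions in terms of ruloids recalled before the theorem and, crucially, the conditions of the \iocos-format from Definition~\ref{Def:ruleformat}. The base cases $\varphi=\true$ and $\varphi=\false$ are immediate from the observations that $t^{-1}(\true)=\{\psi\}$ with $\psi(x)=\true$ and $t^{-1}(\false)=\emptyset$. The cases $\varphi=\bigwedge_{i\in I}\varphi_i$ and $\varphi=\bigvee_{i\in I}\varphi_i$ follow routinely by combining the induction hypotheses with the corresponding clauses of Definition~\ref{def:decomposition}. The real work lies in the two modal cases, and I would treat the output modality first since it is the gentler of the two.

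For $\varphi=\exis{a!}\varphi'$ I would argue directly in both directions. In the \emph{only if} direction, $\sigma(t)\models\exis{a!}\varphi'$ gives a transition $\sigma(t)\tran{a!}p'$ with $p'\models\varphi'$; by the ruloid theorem this is witnessed by a rule $H/t\tran{a!}u$ and a substitution $\sigma'$ agreeing with $\sigma$ on the source variables, with $\sigma'(u)=p'$ and $\sigma'\models H$. Applying the induction hypothesis to $\varphi'$ and $u$ yields $\psi'\in u^{-1}(\varphi')$ with $\sigma'(y)\models\psi'(y)$ for all $y\in var(u)$, and the mapping $\psi$ prescribed by Definition~\ref{def:decomposition} is then satisfied by $\sigma$: each positive premise $x\tran{b!}y$ (an output, by condition~\ref{IC3}) gives $\sigma(x)\models\exis{b!}\psi'(y)$, while each negative premise $x\notran{c?}$ (an input, again by condition~\ref{IC3}) gives $\sigma(x)\models\eexis{c?}\false$ by Remark~\ref{remark:equivaleceofuuniv}. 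The \emph{if} direction reverses this reasoning: the forcing diamonds $\exis{b!}\psi'(y)$ supply successors for the positive premises and the formulae $\eexis{c?}\false$ guarantee the negative premises, so the rule fires and the chosen successors make $\sigma'(u)\models\varphi'$ by the induction hypothesis. The point is that condition~\ref{IC3} makes \emph{every} premise of an output-emitting rule directly expressible in $\Liocos$.

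The delicate case is $\varphi=\eexis{a?}\varphi'$, whose decomposition splits into $t^{-1}_\chi$ and $t^{-1}_R$. The $t^{-1}_\chi$ part encodes the situation $\sigma(t)\notran{a?}$: here $\nega$ is tailored so that $\sigma(x)\models\nega(\eta(r),x)$ holds exactly when the premise $\eta(r)$ fails under $\sigma$, using condition~\ref{IC1} (negative premises are outputs, positive premises are inputs) together with Remark~\ref{remark:equivaleceofuuniv}. Thus $\sigma$ satisfies some $\psi_\eta$ iff every rule in $R(t,a?)$ has a premise that fails, that is, iff $\sigma(t)\notran{a?}$. The $t^{-1}_R$ part instead encodes the existence of a firing rule $r=H/t\tran{a?}t'$ whose successor satisfies $\varphi'$; its \emph{only if} direction again follows from the ruloid theorem and the induction hypothesis, noting that the non-forcing diamonds $\eexis{b?}\psi'(y)$ faithfully record the satisfied positive input premises whose targets occur in $t'$ (cf.\ Remark~\ref{rem:decomposition}).

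The main obstacle is the \emph{if} direction for $t^{-1}_R$, and it is here that the global condition~\ref{IC2} is indispensable. A mapping $\psi\in t^{-1}_R(\eexis{a?}\varphi')$ built from a rule $r$ deliberately does \emph{not} encode the negative premises of $r$ (these are outputs, and no $\Liocos$ formula expresses the inability to emit a given output), and the non-forcing $\eexis{b?}$ does not force the positive premises to hold either. The plan is therefore not to show that $r$ itself fires, but to argue as follows. If $\sigma(t)\notran{a?}$ then $\sigma(t)\models\eexis{a?}\varphi'$ trivially. Otherwise some rule $\tilde r$ fires under $\sigma$; applying condition~\ref{IC2} to the pair $(\tilde r,r)$ yields a rule $r''$ with the same conclusion $t'$ as $r$ whose positive and negative triggers are contained in those of $\tilde r$ (requirements~\ref{postrig} and~\ref{negtrig}), so that $r''$ also fires under $\sigma$---in particular its negative premises, being outputs bounded by those of $\tilde r$, hold automatically, which is exactly why they need not be expressed in $\psi$. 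Requirement~\ref{posprem} ensures that every positive premise $x\tran{b?}z$ of $r''$ with $z\in var(t')$ is already a premise of $r$, so that $\psi(x)$ contains the conjunct $\eexis{b?}\psi'(z)$; since $r''$ fires we have $\sigma(x)\tran{b?}$, whence this conjunct provides a $b?$-successor satisfying $\psi'(z)$, which I choose as the value of the target variable when firing $r''$. For source variables $x\in var(t')$ the conjunct $\psi'(x)$ of $\psi(x)$ gives the required $\sigma(x)\models\psi'(x)$ directly. In this way every variable of $t'$ is assigned a value satisfying the corresponding $\psi'$-component, and the induction hypothesis applied to $\varphi'$ and $t'$ delivers a successor of $\sigma(t)$ satisfying $\varphi'$, as desired. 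The subtle interplay of requirements~\ref{postrig},~\ref{negtrig} and~\ref{posprem} in transferring a firing of an \emph{arbitrary} rule to a firing of a rule compatible with $r$ is precisely what the \iocos-format was designed to support, and handling it carefully is the crux of the whole argument.
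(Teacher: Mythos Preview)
Your proposal is correct and follows essentially the same route as the paper's proof in Appendix~\ref{App:decomposition}: structural induction on $\varphi$, with the nontrivial case $\eexis{a?}\varphi'$ split into the $t^{-1}_\chi$ part (handled via the observation that $\sigma$ satisfies some $\psi_\eta$ iff $\sigma(t)\notran{a?}$, which the paper isolates as Lemma~\ref{lem:no_rules}) and the $t^{-1}_R$ part, whose \emph{if} direction uses condition~\ref{IC2} exactly as you describe---pairing the firing rule with the rule that built $\psi$ to obtain an intermediate rule $r''$ whose triggers are controlled by~\ref{postrig}/\ref{negtrig} and whose relevant premises are tied back to $\psi$ by~\ref{posprem}. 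The only cosmetic difference is that the paper packages the step ``the conjuncts of $\psi$ already cover the premises of $r''$ that matter'' as a separate Lemma~\ref{lem:same_decomposition}, whereas you inline it.
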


The next counter-example illustrates why the \iocos-format is needed in Theorem~\ref{thm:decomposition}.

\begin{example}
Let $f$ be defined by the following rules, which are not in \iocos-format and are taken from 
Example~\ref{ex:counter-example} on page~\pageref{ex:counter-example}:
\begin{mathpar}
	\inferrule*[left={\normalsize $r_1$}]
	{x\tran{a?}y}
	{f(x)\tran{a?}0}
	\and
	\inferrule*[right={\normalsize .}, left={\normalsize $r_2$}]
	{ }
	{f(x)\tran{a?}f(x)}
\end{mathpar}

Applying the modal decomposition method to obtain $f(x)^{-1}(\eexis{a?}\eexis{a?}\false)$, it is not hard to see that $\psi_{r_1}\in f(x)^{-1}_R(\eexis{a?}\eexis{a?}\false)$, where $\psi_{r_1}(x)=\true$.
The process $s$, with $s\tran{\delta!}s$, satisfies $\true=
\psi_{r_1}(x)$. On the other hand,
$f(s)\not\models\eexis{a?}\eexis{a?}\false$, which means that the
decomposition method is incorrect for this operation.

The rules for $f$ do not meet clause~\ref{postrig} in
Definition~\ref{Def:ruleformat} because $a?$ belongs to the positive
trigger for $x$ in $r_1$ but not to that for $x$ in $r_2$. This clause
is essential in the proof of the ``if implication'' in
Theorem~\ref{thm:decomposition}.
\end{example}


\subsection{A rule format for coherent quiescent behaviour}\label{ssec:rule-quiescent}

Operators for constructing LTSs with inputs and outputs should ensure `coherent quiescent behaviour' in the 
sense of Definition~\ref{dfn:lts}. This means that each operator $f$, when applied to a vector of states 
$\arr{p}$ in an LTS, should satisfy the following property:
\begin{equation}\label{eq:CBQ}
f(\arr{p})\tran{\delta!}p'\text{ iff } p'=f(\arr{p})\text{ and, for each  } a!\in O\setminus\{\delta!\},~ f(\arr{p})\notran{a!} . 
\end{equation}
In what follows, we will isolate sufficient conditions on the GSOS rules defining $f$ that guarantee the 
above-mentioned property.

\begin{definition}\label{def:contradicts}
We say that the following sets of formulae contradict each other:
\begin{itemize}
\item $\{x\tran{a}y\}$ and $\{x\notran{a}\}$ for $a\in L$, 
	
\item $\{x\tran{b!}y\}$ and $\{x\tran{\delta!}z\}$ for $b!\in O\setminus \{\delta!\}$, and 

\item $H$ and $H'$ when $H$ and $H'$ are non-empty and $H \cup H' = \{x\notran{b!}\mid b!\in O\}$.

\end{itemize}
Formulae $x\tran{a}y$ and $x\notran{a}$ are said to negate each other. 

We say that two sets of formulae $H_1$ and $H_2$ are {\em contradictory} if there are $H_1' \subseteq H_1$ and $H_2' 
\subseteq H_2$ such that $H_1'$ and $H_2'$ contradict each other.
\end{definition}

Intuitively, two sets of contradictory formulae cannot be both satisfied by states in an LTS. For example, in the light of the
requirement on quiescent behaviour in Definition~\ref{dfn:lts}, there is no state $p$ in an LTS such that $p\notran{b!}$ for 
each $b!\in O$. This observation motivates the third requirement in Definition~\ref{def:contradicts}.

\begin{definition}\label{def:quiescent_consistent}
We say that an operation $f$ is \emph{quiescent consistent} if the set of rules for $f$ satisfies the following two 
constraints:
\begin{description}
\item[{[$\delta_1$]}] If $H/f(\arr{x})\tran{\delta!}t$ is a rule for $f$ then 
\begin{enumerate}
\item for each $f$-defining rule $H'/f(\arr{x})\tran{b!}t'$ with $b!\in O\setminus \{\delta!\}$, the sets $H$ and 
$H'$ are contradictory, and 

\item $t=f(\arr{y})$ for some vector of variables $\arr{y}$ such that, for each index $i$, either $y_i=x_i$ or 
$x_i \tran{\delta!} y_i\in H$.
\end{enumerate}

\item[{[$\delta_2$]}] Let $\{r_1,\ldots r_n\}$ be the set of output-emitting rules for $f$ not having 
$\delta!$ as label of their conclusions. 

  Then the set of rules for $f$ contains all rules of the form
  \[
  \sosrule{\{l_1,\ldots , l_n\}}{f(\arr{x})\tran{\delta!}f(\arr{x})} ,
  \]
where $l_i$ negates some premise of $r_i$ and no two sets of formulae included in $\{l_1,\ldots l_n\}$ 
contradict each other.
\end{description}

A GSOS language is \emph{quiescent consistent} if so is each operation in it.
\end{definition}

\begin{theorem}\label{thm:quiescent}
If $f$ is quiescent consistent then Property~(\ref{eq:CBQ}) holds for $f$. 
\begin{proof}
The proof of this result may be found in \ref{App:thmquiescent}.  
\end{proof}
\end{theorem}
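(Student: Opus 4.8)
The plan is to establish the biconditional in Property~(\ref{eq:CBQ}) by proving its two implications separately, using constraint~$[\delta_1]$ of Definition~\ref{def:quiescent_consistent} for the left-to-right direction and constraint~$[\delta_2]$ for the converse. Throughout I rely on the GSOS provability characterisation recalled in Section~\ref{ssec:decomposition}: a transition $f(\arr{p})\tran{a}p'$ holds iff there are an $f$-defining rule $H/f(\arr{x})\tran{a}t$ and a closed substitution $\sigma$ with $\sigma(x_i)=p_i$, $\sigma\models H$ and $\sigma(t)=p'$. The first thing I would prove is an auxiliary observation: \emph{no closed substitution into states of an LTS can simultaneously satisfy two contradictory sets of premises} (in the sense of Definition~\ref{def:contradicts}). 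This is verified against the three clauses of that definition---the first is immediate from the meaning of positive and negative premises, while the second and third both exploit the coherent-quiescence requirement of Definition~\ref{dfn:lts}, namely that a state emits $\delta!$ exactly when it offers no other output; hence no state offers both some $b!$ and $\delta!$, and no state refuses every output in $O$.

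For the only-if implication, assume $f(\arr{p})\tran{\delta!}p'$, witnessed by a rule $H/f(\arr{x})\tran{\delta!}t$ and a substitution $\sigma$ as above. Clause~(2) of $[\delta_1]$ gives $t=f(\arr{y})$ with, for each $i$, either $y_i=x_i$ or $x_i\tran{\delta!}y_i\in H$; in the latter case $p_i\tran{\delta!}\sigma(y_i)$, so coherent quiescence of the argument state $p_i$ forces $\sigma(y_i)=p_i$. Either way $\sigma(y_i)=p_i$, whence $p'=\sigma(t)=f(\arr{p})$. To rule out other outputs, suppose towards a contradiction that $f(\arr{p})\tran{a!}$ for some $a!\in O\setminus\{\delta!\}$; this is witnessed by some rule $H'/f(\arr{x})\tran{a!}t'$ with the same argument values satisfying $H'$. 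Clause~(1) of $[\delta_1]$ makes $H$ and $H'$ contradictory, contradicting the auxiliary observation since the arguments $\arr{p}$ satisfy both.

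For the if implication, assume $p'=f(\arr{p})$ and $f(\arr{p})\notran{a!}$ for every $a!\in O\setminus\{\delta!\}$, and put $\sigma(x_i)=p_i$; the goal is a $\delta!$-emitting rule with conclusion-target $f(\arr{x})$ that fires at $\arr{p}$, and $[\delta_2]$ is designed precisely to supply it. Let $r_1,\dots,r_n$ be the non-$\delta!$ output-emitting rules of $f$. Since $f(\arr{p})$ has no non-$\delta!$ output, none of the $r_i$ fires at $\arr{p}$; in particular no $r_i$ is premiseless, so for each $i$ the substitution $\sigma$ fails some premise of $r_i$. I would select one such premise and let $l_i$ be its negation---a negative premise if the failing one is positive, and a positive premise, witnessed by extending $\sigma$ on a fresh target variable, if the failing one is negative. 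By construction $\sigma\models\{l_1,\dots,l_n\}$, so every subset of this set is satisfiable by $\sigma$; by the auxiliary observation no two of its subsets contradict each other, and therefore $[\delta_2]$ guarantees that $\{l_1,\dots,l_n\}/f(\arr{x})\tran{\delta!}f(\arr{x})$ is a rule for $f$. This rule fires under $\sigma$, yielding $f(\arr{p})\tran{\delta!}\sigma(f(\arr{x}))=f(\arr{p})=p'$.

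I expect the main obstacle to be in the backward direction, specifically the bookkeeping that turns ``each $r_i$ fails'' into a single, globally consistent premise set of the exact shape quantified over in $[\delta_2]$. The delicate points are: handling a failing \emph{negative} premise, whose negation is a positive premise requiring a fresh target variable and a witnessing transition of $\sigma(x)$, and checking that these choices across different $r_i$ do not clash (they do not, since distinct positive premises never contradict each other); and, most importantly, arguing that satisfiability of $\{l_1,\dots,l_n\}$ by $\sigma$ yields the non-contradiction side condition of $[\delta_2]$---this is exactly where the auxiliary observation does the real work. By contrast, the only-if implication is routine once that observation and the GSOS provability lemma are available.
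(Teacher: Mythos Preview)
Your proposal is correct and follows essentially the same approach as the paper: an auxiliary lemma that contradictory premise sets cannot be jointly satisfied, then $[\delta_1]$ for the forward implication and $[\delta_2]$ for the converse. Your justification for the non-contradiction side condition in the backward direction---that $\sigma$ satisfies the whole set $\{l_1,\dots,l_n\}$, hence any two of its subsets, so by the auxiliary observation they cannot contradict each other---is in fact crisper than the paper's own one-line appeal to ``the way that set of formulae is constructed.''
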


We end the section showing a negative result for the (binary) merge
operator of Example~\ref{Ex:merge}. In that example, we mentioned
that, as noted in~\cite{BenesDHKN15}, the merge of two systems need
not satisfy Property~(\ref{eq:CBQ}). Here we prove that there is no
extension of the $\delta!$-emitting rules for the merge operator with
rules in \iocos-format that leads to an operation affording
Property~(\ref{eq:CBQ}). The interested reader can find examples of
operators that are both quiescent and \iocos\ conforming in the
paper~\cite{GLM18}.

\begin{proposition}\label{prop:merge}
There is no extension of the rules for the (binary) merge operator with a collection of $\delta!$-emitting rules 
in \iocos-format that guarantees consistent quiescent behaviour.
\end{proposition}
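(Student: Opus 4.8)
The plan is to derive a contradiction from the assumption that such an extension exists, by exhibiting a small family of input-free, coherent states on which any collection of \iocos-format $\delta!$-emitting rules must behave inconsistently. I would fix an output alphabet containing, besides $\delta!$, two distinct non-quiescent output actions $a!,b!\in O$, and restrict attention to states with no inputs. This is the key simplification: by condition~\ref{IC3} of Definition~\ref{Def:ruleformat}, a $\delta!$-emitting rule in \iocos-format may carry only input labels in its negative premises and only output labels in its positive premises; hence on input-free states all negative premises are vacuously satisfied, and a rule $r$ fires on a pair $(p_1,p_2)$ exactly when $\mathrm{pos}_1(r)\subseteq\outs(p_1)$ and $\mathrm{pos}_2(r)\subseteq\outs(p_2)$, where $\mathrm{pos}_i(r)\subseteq O$ collects the output labels of the positive premises on the variable $x_i$. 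The essential point is that this firing condition is monotone in the output sets, whereas coherent quiescence of $x_1\wedge x_2$ is an antitone requirement, so the two cannot be reconciled.

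First I would record that, for the binary merge, the rules of Example~\ref{Ex:merge} make $x_1\wedge x_2$ emit $c!\in O\setminus\{\delta!\}$ iff both arguments do, so $x_1\wedge x_2$ has no non-$\delta!$ output iff $\outs(p_1)\cap\outs(p_2)\cap(O\setminus\{\delta!\})=\emptyset$. Moreover the original $\delta!$-rule of Example~\ref{Ex:merge} fires only when both arguments are quiescent; thus any $\delta!$-transition of a merge whose arguments are not both quiescent must be produced by one of the newly added rules, all of which are in \iocos-format. I would also note that the precise target of a firing $\delta!$-rule is irrelevant here: by Property~(\ref{eq:CBQ}), if $x_1\wedge x_2$ has any non-$\delta!$ output then it must have \emph{no} $\delta!$-transition at all, so exhibiting a single such transition already yields a contradiction.

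The construction then uses three input-free, coherent states $s_{\{a!\}}$, $s_{\{b!\}}$ and $s_{\{a!,b!\}}$ whose output sets are $\{a!\}$, $\{b!\}$ and $\{a!,b!\}$ respectively, realised by output self-loops. Since $\outs(s_{\{a!\}})\cap\outs(s_{\{b!\}})=\emptyset$, the merge $s_{\{a!\}}\wedge s_{\{b!\}}$ has no non-$\delta!$ output, so Property~(\ref{eq:CBQ}) forces $s_{\{a!\}}\wedge s_{\{b!\}}\tran{\delta!}$. As observed above, this transition comes from some added rule $r^{\ast}$, and its firing on $(s_{\{a!\}},s_{\{b!\}})$ gives $\mathrm{pos}_1(r^{\ast})\subseteq\{a!\}$ and $\mathrm{pos}_2(r^{\ast})\subseteq\{b!\}$. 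But then $\mathrm{pos}_1(r^{\ast})\subseteq\{a!,b!\}=\outs(s_{\{a!,b!\}})$ and likewise $\mathrm{pos}_2(r^{\ast})\subseteq\{a!,b!\}$, so $r^{\ast}$ also fires on $(s_{\{a!,b!\}},s_{\{a!,b!\}})$ and produces a $\delta!$-transition there. Since $s_{\{a!,b!\}}\wedge s_{\{a!,b!\}}$ emits $a!$ and is therefore not quiescent, this contradicts Property~(\ref{eq:CBQ}), completing the argument.

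The hard part is conceptual rather than computational: the crux is recognising that the \iocos-format deprives the $\delta!$-rules of any means to test the \emph{absence} of an output, since negative premises of output-emitting rules may only mention inputs, while coherent quiescence of the merge is precisely such an absence requirement; the three-state witness packages this monotonicity mismatch. The only care needed is to check that the chosen states are coherent in the sense of Definition~\ref{dfn:lts}, and to observe that the witness genuinely relies on having two distinct non-quiescent outputs available, which is exactly the setting in which the absence-testing deficiency of the format becomes visible.
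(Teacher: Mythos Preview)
Your proposal is correct and follows essentially the same route as the paper's proof: both exploit that an \iocos-format $\delta!$-emitting rule can only test positively for outputs (and negatively for inputs, which is vacuous on input-free states), so whatever rule produces the required $\delta!$-transition at the pair with disjoint output sets $\{a!\},\{b!\}$ must also fire on the pair with output set $\{a!,b!\}$, where the merge is not quiescent. Your argument is in fact slightly slicker than the paper's: the paper spends a paragraph analysing the target $t$ of the chosen rule (ruling out $a!\wedge y$, $x\wedge b!$, $a!\wedge b!$ and concluding $t=x\wedge y$), whereas you correctly observe that for the final contradiction the target is irrelevant---any $\delta!$-transition out of a state that also has a non-$\delta!$ output already violates Property~(\ref{eq:CBQ}), so there is no need to pin down $t$.
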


\begin{proof}
Our goal is to add $\delta!$-emitting rules in \iocos-format to those for the merge operator so that, for all $p, 
q$ 
\begin{equation}\label{eq:merge}
{p \wedge q \tran{\delta!} r} \Leftrightarrow {r=p\wedge q \text{ and } p\wedge q \notran{a!} \forall a!\in 
O\setminus \{\delta!\}\;. }
\end{equation}

We will show that this is impossible. Let us recall the output emitting rules for the binary merge operator: 	
\begin{mathpar}
	\inferrule*[right={\normalsize $a!\in O$.}, left={$r_{a!}$}]
	{x\tran{a!}x'\quad y\tran{a!}y'}
	{x\wedge y\tran{a!}x'\wedge y'}
\end{mathpar}	
In particular, $r_{\delta}$ ensures that $p \wedge q \tran{\delta!}p \wedge q$ for all quiescent $p$ and $q$.

Now, assume that there is a set of $\delta!$-emitting rules $R$ in \iocos-format such that $R\cup \{r_{a!}\mid 
a!\in O\}$ ensures Property~(\ref{eq:merge}). Let us consider $a!\tran{a!}0$ and $b!\tran{b!}0$, with $a\neq 
b$. Since $a!\wedge b!\notran{c!}$ for each $c!\in O\setminus\{\delta!\}$ there are some rule $r= H/x\wedge 
y\tran{\delta!}t$ and closed substitution $\sigma$ such that:
\begin{inparaenum}[(i)]
	\item $\sigma(x)=a!$, $\sigma(y)=b!$;
	\item $\sigma\models H$; and
	\item $\sigma(t)=a!\wedge b!$.
\end{inparaenum}

Since $r$ is in \iocos-format and $\sigma\models H$, its positive
premises can only have the form $x\tran{a!}x'$ for some $x'$ and
$y\tran{b!}y'$ for some $y'$. The negative premises of $H$, if any,
have the form $x\notran{c?}$ or $y\notran{c?}$ for some $c?\in
I$. Moreover, as $a!\tran{a!}0$, $b!\tran{b!}0$ and
$\sigma(t)=a!\wedge b!$, none of the variables $x'$ and $y'$ can occur
in $t$. So $t$ can only have the form $x\wedge y$, $a!\wedge y$,
$x\wedge b!$ or $a!\wedge b!$. We claim that it must be $x\wedge y$.
Indeed, assume, by way of example, that $t=a!\wedge y$. Consider now
the substitution $\sigma'(x)=a!+b!$ and $\sigma'(y)=b!$, where
$a!+b!\tran{a!}0$ and $a!+b!\tran{b!}0$. It is straightforward to
check that $\sigma'\models H$. Hence, $r$ yields $(a!+b!)\wedge
b!\tran{\delta!}a!\wedge b!$, which contradicts
Property~(\ref{eq:merge}). Similar examples can be constructed if
$t=x\wedge b!$ and $t=a!\wedge b!$.

Thus, we have $r= H/x\wedge y\tran{\delta!}x\wedge y$. Consider now the substitution $\rho$ such that 
$\rho(x)=\rho(y)=a!+b!$ and $\rho(z)=0$ for $z\notin\{x,y\}$. This substitution 
satisfies all the positive premises in $H$ and, since all the negative
premises in $H$ are of the form $y\notran{c?}$ for some $c?\in I$,
$\rho$ also satisfies those. Hence $\rho\models H$, and $\rho$
together with $r$ proves $(a!+b!)\wedge
(a!+b!)\tran{\delta!}(a!+b!)\wedge (a!+b!)$.

On the other hand, using $r_a$, we also have that $(a!+b!)\wedge (a!+b!)\tran{a!}0\wedge 0$, which 
contradicts Property~(\ref{eq:merge}).

It follows that no extension of the rules for the (binary) merge
operator with $\delta!$-emitting rules in \iocos-format satisfies
Property~(\ref{eq:merge}).
\end{proof}

\section{Conclusion}\label{sec:conclusion}

In this paper, we have further developed the theory of
{\iocos}~\cite{GLM13,GLM14,GLM15} by studying logical
characterisations of this relation and its compositionality. We have
also compared the proposed logical characterisation of {\iocos} with
an existing logical characterisation for {\ioco} proposed by Beohar
and Mousavi. The article also offers a precongruence rule format for
{\iocos} and a rule format ensuring that operations take quiescence
properly into account. Both rule formats are based on the GSOS format
by Bloom, Istrail and Meyer.

We have provided a connection between the precongruence rule format
for {\iocos} and logical characterisations for that relation by
establishing a modal decomposition result, which yields a
compositional model-checking procedure for the studied logic with
respect to systems described using operations in {\iocos}-format. We
have also studied fixed-point extensions of the logics introduced in
this paper and offered a characteristic-formula construction for
processes modulo {\iocos}.

In future research, it would be interesting to study whether the
precongruence rule format we provide in this paper can be made more general and
whether it can be derived from a modal decomposition result in the
style of Fokkink and van Glabbeek. Our modal decomposition result
relies heavily on properties of the {\iocos} format; we believe
therefore that generalizing the precongruence rule format might be
difficult.

In this paper, we have considered the theory of {\iocos} as presented
in \cite{GLM13,GLM14,GLM15,GLM18}. That theory does not consider the
internal action $\tau$, which is the source of many of the
complications in the classic {\ioco} testing theory. Extending the
theory of {\iocos} to a setting with the internal action $\tau$ is an
interesting avenue for future research. As a first step, one could
develop a version of {\iocos} with the action $\tau\not\in L$, but
without abstracting from internal steps in system executions. This can
be done by requiring that condition~\ref{dfn:iocos:3} in the
definition of an \iocos-relation (Definition~\ref{dfn:iocos}) hold
also with respect to $\tau$-transitions---that is, that internal
transitions performed by implementations should be matched by
specifications as in the classic simulation preorder~\cite{Mil71}. In
this framework, we could replay the theory developed in this paper. By
way of example, the rule format given in
Definition~\ref{Def:ruleformat} can be extended to the resulting
version of {\iocos} by requiring that $\tau$-emitting rules have only
input actions as labels of negative premises and output actions as
labels of positive premises. The resulting rule format would be
powerful enough to express the hiding operation considered
in~\cite{BijlRT03}. Of course, the ultimate goal of extending {\iocos}
with internal actions would be to develop a theory that abstracts from
$\tau$ like {\ioco} and that has some of the pleasing properties of
that classic notion of conformance, including a test generation
procedure. We consider the study of such a theory and of its potential
applications a worthy research goal for the future.

%

\newpage 
\appendix 

\section{Proof of Theorem~\ref{thm:congruence}}\label{App:thmcongruence}

\begin{remark}
All the substitutions we consider from now on map variables either to
variable-free terms or to states in some LTS.
\end{remark}

Let $\mathcal{S}$ be the least binary relation that includes
$\iocos$ and such that: if $p_i \mathbin{\mathcal{S}} q_i$, for each $1 \leq i \leq n$, and $f$ is an $n$-ary operation in \iocos-format then 
$f(p_1,\ldots, p_n) \mathbin{\mathcal{S}} f(q_1,\ldots, q_n)$.

The following property of $\mathcal{S}$ will be useful in what follows.
\begin{lemma}\label{lem:subst}
Let $t$ be a term. Assume that $\sigma$ and $\rho$ are two
substitutions such that $\sigma(x) \mathbin{\mathcal{S}} \rho(x)$ for
each variable $x$ occurring in $t$. Then
$\sigma(t) \mathbin{\mathcal{S}} \rho(t)$.
\end{lemma}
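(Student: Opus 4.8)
The plan is to prove the statement by a straightforward structural induction on the term $t$, reading off the two cases directly from the inductive definition of the relation $\mathcal{S}$.

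For the base case, $t$ is a single variable $x$. Then $\sigma(t)=\sigma(x)$ and $\rho(t)=\rho(x)$, so the desired conclusion $\sigma(x)\mathbin{\mathcal{S}}\rho(x)$ is precisely the hypothesis of the lemma instantiated at the unique variable occurring in $t$. For the inductive step, suppose $t=f(t_1,\ldots,t_n)$, where $f$ is an $n$-ary operation of the signature and the $t_j$ are terms to which the induction hypothesis applies. Every variable occurring in some $t_j$ also occurs in $t$, so the assumption $\sigma(x)\mathbin{\mathcal{S}}\rho(x)$ holds in particular for each variable $x$ occurring in $t_j$; the induction hypothesis then gives $\sigma(t_j)\mathbin{\mathcal{S}}\rho(t_j)$ for each $1\leq j\leq n$. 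Since substitution is applied homomorphically, $\sigma(t)=f(\sigma(t_1),\ldots,\sigma(t_n))$ and $\rho(t)=f(\rho(t_1),\ldots,\rho(t_n))$, and because the GSOS language is assumed to be in \iocos-format the operation $f$ is itself in \iocos-format. Hence the closure clause in the definition of $\mathcal{S}$ applies and yields $f(\sigma(t_1),\ldots,\sigma(t_n))\mathbin{\mathcal{S}} f(\rho(t_1),\ldots,\rho(t_n))$, that is, $\sigma(t)\mathbin{\mathcal{S}}\rho(t)$, as required.

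The two points that require a moment's care, rather than any genuine obstacle, are the following. First, one must observe that the hypothesis transfers from $t$ to each subterm $t_j$, which is immediate because $\mathrm{var}(t_j)\subseteq\mathrm{var}(t)$; this is what legitimises invoking the induction hypothesis on the $t_j$. Second, one must confirm that the closure clause of $\mathcal{S}$ is actually applicable in the inductive step, and this is exactly where the global assumption that \emph{every} operation of the language is in \iocos-format is used. Beyond these bookkeeping remarks, the lemma is the standard ``the congruence closure of a relation is preserved under substitution'' fact, and its only purpose is to serve as a technical stepping stone towards the precongruence result in Theorem~\ref{thm:congruence}.
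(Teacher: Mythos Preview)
Your proof is correct and is exactly the routine structural induction one would expect; the paper itself states this lemma without proof, treating it as an immediate consequence of the definition of $\mathcal{S}$. Your argument fills in precisely the details that the authors chose to omit.
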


\begin{proof}[\textbf{Proof of Theorem~\ref{thm:congruence}}]
We show that $\mathcal{S}$ is an \iocos\ simulation by induction on the definition of $\mathcal{S}$. This is clear if $p
\mathbin{\mathcal{S}} q$ because $p \iocos q$. Assume
therefore that
\[
p = f(p_1,\ldots, p_n) \mathbin{\mathcal{S}}  f(q_1,\ldots, q_n) = q 
\]
because $p_i \mathbin{\mathcal{S}} q_i$ for each $1 \leq i \leq n$. We
proceed to prove that each of the defining conditions for \iocos relations holds for $p$ and $q$.  In doing so, we shall use, as our
inductive hypothesis, the fact that those conditions hold for $p_i
\mathbin{\mathcal{S}} q_i$ ($1 \leq i \leq n$). 
In particular, we have that $\ins(q_i)\subseteq \ins(p_i)$ ($1 \leq i \leq n$).

\begin{itemize}
\item We prove, first of all, that the set of initial input actions of $q$ is included in the set of initial input actions of $p$. To this end, assume that $a?$ is an input action and $q = f(q_1,\ldots, q_n) \tran{a?}{}$. We shall prove that 
$p = f(p_1,\ldots,p_n)\tran{a?}{}$ also holds.

Since $q \tran{a?}{}$, there are a rule $r$ of the form (\ref{eqn:gsos-rule}) on page~\pageref{eqn:gsos-rule} and substitution $\sigma$ such that
\begin{itemize}
\item $\sigma(x_i) = q_i$ for each $1 \leq i
  \leq n$, 
\item $q_i \tran{a_{ij}} \sigma(y_{ij})$ for each $1 \leq i
  \leq n$ and $1 \leq j \leq m_i$ and 
\item $q_i \notran{b_{ik}}{}$ for each  $1 \leq i \leq n$ and  $1 \leq k \leq \ell_i$. 
\end{itemize}
As $a?$ is an input action, by requirement~\ref{IC1} in
Definition~\ref{Def:ruleformat}, we have that each $a_{ij}$ is an
input action and each $b_{ik}$ is an output action. Since $p_i
\mathbin{\mathcal{S}} q_i$ ($1 \leq i \leq n$), by the inductive
hypothesis, we therefore have that:
\begin{itemize}
\item for each $1 \leq i
  \leq n$ and $1 \leq j \leq m_i$, as
  $a_{ij}\in\ins(q_i)\subseteq \ins(p_i)$, there is some state
  $p_{ij}$ such that $p_i \tran{a_{ij}}  p_{ij}$ and
\item $p_i \notran{b_{ik}}{}$, for each  $1 \leq i \leq n$ and  $1 \leq k \leq \ell_i$. 
\end{itemize}
Therefore rule $r$ can be used to infer that $p \tran{a?}{}$, and we are done. 

\item Assume now that $p = f(p_1,\ldots, p_n) \tran{a?} p'$ and $a?$ is an
  initial input action of $q = f(q_1,\ldots, q_n)$. As $a?$ is an
  initial input action of $q = f(q_1,\ldots, q_n)$, there are a rule
  $r = \frac{H}{f(x_1,\ldots,x_n) \tran{a?} t}$ of the form
  (\ref{eqn:gsos-rule}) on page~\pageref{eqn:gsos-rule} and a
  substitution $\sigma$ such that
\begin{itemize}
\item $\sigma(x_i) = q_i$ for each $1 \leq i
  \leq n$, and 
\item $\sigma$ satisfies all the premises in $H$. 
\end{itemize} 
Moreover, there are a
  rule $r' =
    \frac{H'}{f(x_1,...,x_n) \tran{a?} t'}$ of the form (\ref{eqn:gsos-rule}) on
  page~\pageref{eqn:gsos-rule} and a substitution $\sigma'$ such that
\begin{itemize}
\item $\sigma'(x_i) = p_i$ for each $1 \leq i
  \leq n$, 
\item $\sigma'$ satisfies all the premises in $H'$, and 
\item $\sigma'(t') = p'$. 
\end{itemize}
Our goal is to prove that $q = f(q_1,\ldots, q_n) \tran{a?} q'$ for
some $q'$ such that $p' \mathbin{\mathcal{S}} q'$. 
To this end, observe, first of all, that requirement~\ref{IC2} in Definition~\ref{Def:ruleformat} tells us that there is an $f$-defining  rule $r'' =
      \frac{H''}{f(x_1,...,x_n) \tran{a?} t'}$ such that
        \begin{itemize}
        \item for each $1\leq i\leq n$, the positive
          trigger for variable $x_i$ in $r''$ is included in the
          positive trigger for variable $x_i$ in $r$; 
        \item for each $1\leq i\leq n$, the negative
          trigger for variable $x_i$ in $r''$ is included in the
          negative trigger for variable $x_i$ in $r$;   
        \item if $x_i \tran{b?} z$ is contained in
          $H''$ and $z$ occurs in $t'$, then $x_i \tran{b?} z$ is also
          contained in $H'$.
        \end{itemize}
To complete the proof for this case, we will now show how to use the
rule $r''$ to prove the required transition $q = f(q_1,\ldots,
q_n) \tran{a?} q'$. To this end, we will construct a substitution
$\rho$ with the following properties:
\begin{enumerate}
\item $\rho(x_i) = q_i$ for each $1\leq i\leq n$, 
\item $\sigma'(z) \mathbin{\mathcal{S}} \rho(z)$ for each variable $z$ occurring in $t'$ and 
\item $\rho$ satisfies all the premises in $H''$. 
\end{enumerate}
The first condition above gives us that $\rho(f(x_1,\ldots,x_n)) =
q$. The second yields that $p'
= \sigma'(t') \mathbin{\mathcal{S}} \rho(t')$ by
Lemma~\ref{lem:subst}. From the third, we obtain that
\[
q = f(q_1,\ldots, q_n) \tran{a?} \rho(t') . 
\]
Therefore, the substitution $\rho$ and the rule $r''$ prove the
existence of the required transition $q = f(q_1,\ldots,
q_n) \tran{a?} \rho(t') =q'$ with $p' \mathbin{\mathcal{S}} q'$.

To meet the first condition above, we start by setting $\rho(x_i) =
q_i$ for each $1\leq i\leq n$.  Note, next, that, since the negative
trigger for variable $x_i$ in $r''$ is included in the negative
trigger for variable $x_i$ in $r$ ($1\leq i\leq n$), and $\sigma$
satisfies $H$, any substitution $\rho$ such that $\rho(x_i) = q_i$
($1\leq i\leq n$) meets all the negative premises in $H''$. Our aim
now is to extend the definition of $\rho$ to all the other variables
occurring in rule $r''$ in such a way that the other two
above-mentioned requirements are met. To this end, consider a positive
premise $x_i \tran{b?}  z\in H''$. We know that the positive trigger
for variable $x_i$ in $r''$ is included in the positive trigger for
variable $x_i$ in $r$. Therefore $H$ contains a positive premise
$x_i \tran{b?}  w$ for some variable $w$. As $\sigma$ satisfies $H$,
we have that $\sigma(x_i)=q_i \tran{b?}  \sigma(w)$ and therefore
$b?\in\ins(q_i)$. If $z$ does not occur in $t'$, setting $\rho(z)
= \sigma(w)$ will satisfy the premise $x_i \tran{b?}  z\in
H''$. Assume now that $z$ does occur in $t'$. In this case, by
requirement~\ref{posprem} in the definition of the rule format
(Definition~\ref{Def:ruleformat}), we know that $x_i \tran{b?}  z$ is
also contained in $H'$. Since $\sigma'$ satisfies $H'$, we have that
\[
\sigma'(x_i) = p_i \tran{b?} \sigma'(z) . 
\]
As $p_i \mathbin{\mathcal{S}} q_i$, $b?\in\ins(q_i)$ and $\sigma'(x_i) = p_i \tran{b?} \sigma'(z)$, the inductive hypothesis yields that 
\[
\rho(x_i) = q_i  \tran{b?} q_i' \text{ for some $q'_i$ such that } 
\sigma'(z) \mathbin{\mathcal{S}} q_i' . 
\]
We can therefore set $\rho(z) = q'_i$ in order to keep meeting the last
two requirements on $\rho$.

Continuing in this fashion until we have exhausted all the positive premises in $H''$ completes the proof for this case. 

\item Assume that $p = f(p_1,\ldots, p_n) \tran{a!} p'$ for some
  output action $a!$ and state $p'$. Then there are a rule $r$ of the
  form (\ref{eqn:gsos-rule}) on page~\pageref{eqn:gsos-rule} and a
  substitution $\sigma$ such that
\begin{itemize}
\item $\sigma(x_i) = p_i$ for each $1 \leq i
  \leq n$, 
\item $p_i \tran{a_{ij}} \sigma(y_{ij})$ for each $1 \leq i
  \leq n$ and $1 \leq j \leq m_i$, 
\item $p_i \notran{b_{ik}}{}$ for each  $1 \leq i \leq n$ and  $1 \leq k \leq \ell_i$, and 
\item $\sigma(C[\vec{x}, \vec{y}]) = p'$. 
\end{itemize}
Our goal is to use rule $r$ to prove that $q = f(q_1,\ldots,
q_n) \tran{a!} q'$ for some $q'$ such that $p' \mathbin{\mathcal{S}}
q'$.

Since $a!$ is an output action, condition~\ref{IC3} in
Definition~\ref{Def:ruleformat} tell us that each $a_{ij}$ is an output
action, and each $b_{ik}$ is an input action. As each $a_{ij}$ is an
output action, $p_i \tran{a_{ij}!} \sigma(y_{ij})$ and $p_i
\mathbin{\mathcal{S}} q_i$ ($1 \leq i \leq n$), the inductive
hypothesis yields that for each $i$ and $j$ there is some $q_{ij}$
such that $q_i \tran{a_{ij}!} q_{ij}$ and $\sigma(y_{ij})
\mathbin{\mathcal{S}} q_{ij}$.  Note, moreover, that the $q_i$'s
satisfy the negative premises of rule $r$. Indeed,
$p_i \notran{b_{ik}}$ ($1 \leq i \leq n$ and $1 \leq k \leq \ell_i$)
and, as $p_i
\mathbin{\mathcal{S}} q_i$ ($1 \leq i \leq n$), the inductive
hypothesis gives us that the set of input actions of each $q_i$ is
included in that of $p_i$. Therefore, we have that rule $r$
instantiated with a substitution $\rho$ such that
\begin{eqnarray*}
\rho(x_i) & = & q_i\quad (1\leq i \leq n),\, \text{and}\\
\rho(y_{ij}) & = & q_{ij}\quad (1\leq i \leq n, 1\leq j \leq m_i) 
\end{eqnarray*}
yields the transition 
\[
q = f(q_1,\ldots, q_n)\tran{a!} \rho(C[\vec{x}, \vec{y}]) . 
\]
By construction, $\sigma(z) \mathbin{\mathcal{S}} \rho(z)$ for each
variable $z$ occurring in $C[\vec{x}, \vec{y}]$. 
Therefore
Lemma~\ref{lem:subst} now yields that
\[
p' = \sigma(C[\vec{x}, \vec{y}]) \mathbin{\mathcal{S}} 
\rho(C[\vec{x}, \vec{y}]), 
\]
and we are done. \qedhere
\end{itemize}
\end{proof}

\section{Proof of Theorem~\ref{thm:decomposition}}\label{App:decomposition}

Before proving Theorem~\ref{thm:decomposition}, we will show two
auxiliary lemmas that will be use in its proof.

\begin{lemma}\label{lem:no_rules}
Let $P$ be a GSOS language in \iocos-format. For every term $t$, input
$a?\in I$ and closed substitution $\sigma$, we have that
$\sigma(t)\notran{a?}$ if, and only if, there exists some
$\eta\in \chi(t,a?)$ such that $\sigma(x)\models \psi_\eta(x)$ for
each $x\in var(t)$.
\end{lemma}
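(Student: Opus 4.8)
The plan is to reduce the statement to the standard characterisation of provable GSOS transitions and then to verify that the formula $\psi_\eta$ records, variable by variable, which premise of each $a?$-emitting rule is violated by $\sigma$. Recall from~\cite{BloomThesis,BIM95} that $\sigma(t)\tran{a?}$ holds if, and only if, there is a rule $r=H_r/t\tran{a?}u$ in $R(t,a?)$ all of whose premises are satisfied by $\sigma$ (the targets of the positive premises being fresh, their satisfaction amounts merely to the existence of the corresponding transitions from the states $\sigma(x)$). Negating this, $\sigma(t)\notran{a?}$ holds precisely when every rule $r\in R(t,a?)$ has at least one premise in $H_r$ that $\sigma$ fails to satisfy. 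This reformulation is the backbone of both implications.

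First I would record the role of the format. Since $a?\in I$ and every rule in $R(t,a?)$ is $a?$-emitting, requirement~\ref{IC1} in Definition~\ref{Def:ruleformat} guarantees that each positive premise of such a rule carries an input label and each negative premise an output label. Consequently $\nega(\gamma,x)$ of Definition~\ref{def:decomp_no_tran} is defined for every premise $\gamma$ occurring in a rule of $R(t,a?)$ and every variable $x$; this is the only place where the {\iocos}-format is used, and without it $\nega$ would be undefined on the missing premise shapes. Next I would establish the semantic correctness of $\nega$: writing $z$ for the source variable of a premise $\gamma$, we have, for every closed $\sigma$, that $\sigma(x)\models\nega(\gamma,x)$ for all $x\in var(t)$ if, and only if, $\sigma$ does not satisfy $\gamma$. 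Indeed $\nega(\gamma,x)=\true$ whenever $x\neq z$, so the conjunction over $x$ collapses to the single clause at $x=z$; and there $\sigma(z)\models\eexis{b?}\false$ iff $\sigma(z)\notran{b?}$ (by Remark~\ref{remark:equivaleceofuuniv}) for a positive premise $z\tran{b?}z'$, while $\sigma(z)\models\exis{b!}\true$ iff $\sigma(z)\tran{b!}$ for a negative premise $z\notran{b!}$ --- in each case exactly the failure of $\gamma$.

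With these two observations in place the equivalence is a matter of interchanging quantifiers. Unfolding $\psi_\eta(x)=\bigwedge_{r\in R(t,a?)}\nega(\eta(r),x)$, the condition ``$\sigma(x)\models\psi_\eta(x)$ for all $x\in var(t)$'' reads ``for every $r$ and every $x$, $\sigma(x)\models\nega(\eta(r),x)$'', which by the previous paragraph means ``$\sigma$ violates $\eta(r)$ for every $r\in R(t,a?)$''. Hence the existence of some $\eta\in\chi(t,a?)$ with $\sigma(x)\models\psi_\eta(x)$ for all $x$ amounts, via the reading of $\chi(t,a?)$ as the set of choice functions selecting one premise from each $H_r$, to ``for every $r\in R(t,a?)$ there is a premise of $H_r$ violated by $\sigma$''. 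As the signature and each $H_r$ are finite, finite choice makes this in turn equivalent to ``no rule in $R(t,a?)$ has all its premises satisfied by $\sigma$'', i.e.\ to $\sigma(t)\notran{a?}$ by the transition characterisation above.

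Finally I would dispose of the two degenerate cases, which is where the bookkeeping is most delicate and hence the main thing to get right. If $R(t,a?)=\emptyset$, then $\sigma(t)\notran{a?}$ is vacuously true and $\chi(t,a?)$ contains exactly the empty choice function, for which $\psi_\eta(x)$ is the empty conjunction $\true$, so both sides hold. If instead some $r_0\in R(t,a?)$ is an axiom ($H_{r_0}=\emptyset$), then $\sigma(t)\tran{a?}$ always holds via $r_0$, so the left-hand side fails, and simultaneously $\chi(t,a?)=\emptyset$ because no premise can be chosen from $H_{r_0}$, so the right-hand side fails as well. I would also note, to justify the choice function used in the main argument, that whenever $\sigma(t)\notran{a?}$ no rule of $R(t,a?)$ can be an axiom, so every $H_r$ is nonempty and a violated premise can genuinely be selected from each.
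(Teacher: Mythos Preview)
Your proof is correct and follows essentially the same approach as the paper's: both reduce the claim to the observation that $\nega(\gamma,x)$ holds of $\sigma(x)$ (for $x$ the source variable of $\gamma$) precisely when $\sigma$ violates $\gamma$, and then unfold the choice-function definition of $\chi(t,a?)$. Your version is more thorough---you make explicit why the {\iocos}-format is needed (so that every premise in an $a?$-emitting rule has a shape on which $\nega$ is defined), and you handle the degenerate cases $R(t,a?)=\emptyset$ and $H_{r_0}=\emptyset$ that the paper's brief sketch leaves implicit---but the underlying argument is the same.
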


\begin{proof}
The result follows straightforwardly. First, recall that, by
definition, $\psi_\eta(x)=\bigwedge_{r\in R(t,a?)}
\nega(\eta(r),x)$. Second, note that given a premise $\eta(r)\in H_r$, the formula $\nega(\eta(r),x)$ is  
such that $\sigma(x)\models \nega(\eta(r),x)$ iff $\sigma(x)$ does not
satisfy premise $\eta(r)$. So, if there is some $\eta\in \chi(t,a?)$
such that $\sigma(x)\models \psi_\eta(x)$, for each $x\in var(t)$,
then no rule in $R(t,a?)$ can be used to derive a transition from
$\sigma(t)$. Conversely, if $\sigma(t)\notran{a?}$, then, for each
$r\in R(t,a?)$, there is some premise in $H_r$ that is not satisfied
by $\sigma(t)$. Let $\eta$ be the choice function that picks those
premises. Then, by the discussion above,
$\sigma(x)\models \psi_\eta(x)$ for each $x\in var(t)$.
\end{proof}

\begin{lemma}\label{lem:same_decomposition}
Let $P$ be a GSOS language in \iocos-format. Let $r= H/t\tran{a?}u$ and $r'= H'/t\tran{a?}u$ be such that if 
$x\tran{b?}y\in H'$ with $y\in var(u)$, then $x\tran{b?}y\in H$. Consider $\psi,\psi'\in t^{-1}_R(\eexis{a?}\varphi)$ 
and $\phi\in u^{-1}(\varphi)$ such that $\psi$ is obtained using ruloid $r$ and $\phi$, and $\psi'$ is obtained 
using ruloid $r'$ and $\phi$. Then $\psi(x)$ implies $\psi'(x)$ for all $x\in var(t)$. 
\end{lemma}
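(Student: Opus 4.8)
The plan is to unfold, for a fixed variable $x\in var(t)$, the definition of $t^{-1}_R(\eexis{a?}\varphi)$ for both $\psi$ and $\psi'$ and to compare the two resulting formulae conjunct by conjunct. The crucial point to exploit is that, by assumption, both $\psi$ and $\psi'$ are built from the \emph{same} decomposition mapping $\phi\in u^{-1}(\varphi)$ and share the same target $u$; the two ruloids $r$ and $r'$ differ only in their premise sets $H$ and $H'$. Consequently the sole possible discrepancy between $\psi(x)$ and $\psi'(x)$ lies in the finite conjunction that ranges over the positive premises emanating from $x$.

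First I would split into the two cases of the definition according to whether $x\in var(u)$ or $x\notin var(u)$. If $x\in var(u)$, then
\[
\psi(x)=\bigwedge_{\substack{x\tran{b?}y\in H\\ y\in var(u)}}\eexis{b?}\phi(y)\wedge \phi(x)
\quad\text{and}\quad
\psi'(x)=\bigwedge_{\substack{x\tran{b?}y\in H'\\ y\in var(u)}}\eexis{b?}\phi(y)\wedge \phi(x),
\]
while the analogous identities without the trailing $\phi(x)$ hold when $x\notin var(u)$. In either case $\psi(x)$ and $\psi'(x)$ are finite conjunctions of formulae of the shape $\eexis{b?}\phi(y)$ (plus the common conjunct $\phi(x)$ in the first case), indexed respectively by $\{\,x\tran{b?}y\in H \mid y\in var(u)\,\}$ and $\{\,x\tran{b?}y\in H' \mid y\in var(u)\,\}$.

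The key step is then to invoke the hypothesis relating $H$ and $H'$: whenever $x\tran{b?}y\in H'$ with $y\in var(u)$, the very same premise $x\tran{b?}y$ also belongs to $H$. Since the target variable $y$ is literally identical, the conjunct $\eexis{b?}\phi(y)$ contributed to $\psi'(x)$ coincides with a conjunct of $\psi(x)$. Hence the index set of the conjunction defining $\psi'(x)$ is contained in that defining $\psi(x)$, and in the case $x\in var(u)$ both additionally carry the shared conjunct $\phi(x)$. Because a finite conjunction entails every one of its sub-conjunctions, $\psi(x)$ implies $\psi'(x)$, which is exactly the claim.

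I do not expect any genuine obstacle here: the argument reduces to comparing the index sets of two finite conjunctions generated from the same decomposition $\phi$. The only point deserving a little care is that the GSOS convention that the target variables of positive premises are pairwise distinct guarantees that a premise $x\tran{b?}y$ shared by $H$ and $H'$ contributes the \emph{same} formula $\eexis{b?}\phi(y)$ to both decompositions, so that no renaming of variables is needed when matching conjuncts.
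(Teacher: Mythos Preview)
Your proposal is correct and follows essentially the same approach as the paper: unfold the definition of $t^{-1}_R(\eexis{a?}\varphi)$ for $\psi$ and $\psi'$, note that both use the same $\phi\in u^{-1}(\varphi)$ and target $u$, and then use the hypothesis on $H$ and $H'$ to conclude that every conjunct of $\psi'(x)$ already occurs as a conjunct of $\psi(x)$. The paper's version merely compresses your two cases into one by assuming $x\in var(u)$ without loss of generality.
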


\begin{proof}
We show that each conjunct of $\psi'(x)$ is implied by $\psi(x)$. Without loss of generality, let us assume 
$x\in var(u)$. By construction,
\begin{itemize}
\item $\psi(x)= \bigwedge_{\substack{x\tran{b?}y\in H\\ y\in var(u)}}\eexis{b?}\phi(y)\wedge \phi(x)$.

\item $\psi'(x)= \bigwedge_{\substack{x\tran{c?}z\in H'\\ z\in var(u)}}\eexis{c?}\phi(z)\wedge \phi(x)$.
\end{itemize}
By hypothesis, if $x\tran{c?}z\in H'$ with $z\in var(u)$, then $x\tran{c?}z\in H$. This implies that each 
conjunct $\eexis{c?}\phi(z)$ of $\psi'(x)$ is also a conjunct of $\psi(x)$, and the claim follows.
\end{proof}

\begin{proof}[\textbf{Proof of Theorem~\ref{thm:decomposition}}]
The proof is by structural induction on the formula $\phi$. We only show the case of $\phi=\eexis{a?}\varphi$ 
since the others are the same as in \cite{FokkinkGW06}.
	
\emph{Only if implication}: Let us suppose that $\sigma(t)\models\eexis{a?}\varphi$. Then, by definition of 
the satisfaction relation $\models$, either $\sigma(t)\notran{a?}$ or
there exists some $p$ such that $\sigma(t)\tran{a?}p$ and
$p\models\varphi$.

Let us first consider the case when $\sigma(t)\notran{a?}$. This means
that it is not possible to apply any of the rules with conclusion
$t\tran{a?}u$ for some $u$.
By Lemma~\ref{lem:no_rules}, there exist $\eta\in \chi(t,a?)$
and $\psi_{\eta}$ such that $\sigma(x)\models \psi_\eta(x)$ for each
$x\in var(t)$. That is, there exists some $\psi_\eta\in
t^{-1}_\chi(\eexis{a?}\varphi)$ such that
$\sigma(x)\models\psi_\eta(x)$, for all $x\in var(t)$, and we are done.
	
On the other hand, if there exists some $p$ such that $\sigma(t)\tran{a?}p$ and $p\models\varphi$, there 
are a rule $r=H/t\tran{a?}u$ and a closed substitution $\sigma'$ such that $\sigma'\models H$,  
$\sigma'(t)=\sigma(t)$ and $\sigma'(u)=p$. Since $\sigma'(u)\models\varphi$ by the induction hypothesis 
there exists some $\phi\in u^{-1}(\varphi)$ such that $\sigma'(z)\models\phi(z)$ for all $z\in var(u)$. Now, let 
us consider $\psi\in t^{-1}_R(\eexis{a?}\varphi)$ constructed as in Definition~\ref{def:decomposition} from the 
rule $r=H/t\tran{a?}u$ and $\phi\in u^{-1}(\varphi)$.
	
In this case we have that $\sigma(x)=\sigma'(x)\tran{b?}\sigma'(y)$
and $\sigma(x)=\sigma'(x)\notran{c!}$, for all $x\tran{b?}y\in H$ and
$x\notran{c!}\in H$. We show that $\sigma(x)=\sigma'(x)\models\psi(x)$ for all
$x\in var(t)$. We distinguish two cases depending on whether $x\in
var(u)$ or not.
	
\begin{itemize}
	\item If $x\in var(u)$, by definition $\psi(x)=\bigwedge_{\substack{x\tran{b?}y\in H, y\in 
	var(u)}}\eexis{b?}\phi(y)\wedge \phi(x)$, and by the inductive hypothesis $\sigma'(x)\models\phi(x)$ and 
	$\sigma'(y)\models\phi(y)$, for each $y\in var(u)$. Since $\sigma'(x)\tran{b?}\sigma'(y)$, because $\sigma'$ satisfies $H$, we 
	obtain the following: $\sigma'(x)\models\bigwedge_{\substack{x\tran{b?}y\in H, y\in var(u)}} 
	\eexis{b?}\phi(y)$. Hence, $\sigma(x)=\sigma'(x)\models\psi(x)$.
		
	\item If $x\notin var(u)$, by definition that $\psi(x)=\bigwedge_{\substack{x\tran{b?}y\in H, y\in var(u)}} 
	\eexis{b}\phi(y)$, and using the same reasoning as before, we conclude $\sigma(x)=\sigma'(x)\models\psi(x)$.
\end{itemize}
Finally, for all $x\in var(t)$, $\sigma(x)=\sigma'(x)\models\psi(x)$.\\
	
	
\emph{If implication}: Let us suppose that $t = f(x_1,\ldots,x_n)$ and that there is some $\psi\in t^{-1}(\eexis{a?}\varphi)$ such that 
$\sigma(x)\models\psi(x)$ for all $x\in var(t)$. We have to show that
$\sigma(t)\models\eexis{a?}\varphi$.

Since the claim holds if $\sigma(t)\notran{a?}$, we need only consider the case that $\sigma(t)\tran{a?}$. 
This means that there are a rule $r= H/t\tran{a?}u$ and a substitution $\sigma_r$ such that 
\begin{inparaenum}[(a)]
\item\label{item:A} $\sigma_r\models H$, and
\item\label{item:B} $\sigma(t) =\sigma_r(t)$, that is, $\sigma(x)=\sigma_r(x)$ for all $x\in var(t)$.
\end{inparaenum}
Since $\sigma(x)\models\psi(x)$ for all $x\in var(t)$, by Lemma~\ref{lem:no_rules}, item~(\ref{item:A}) above yields 
that $\psi\in t^{-1}_R(\eexis{a?}\varphi)$.

Suppose that $\psi$ is constructed using a rule $r'=H'/t\tran{a?}u'$ and $\phi\in u'(\varphi)$. Since we 
assume that the GSOS language is in \iocos-format, condition $(\ref{IC2})$ in Definition~\ref{Def:ruleformat}, 
tells us that there is a rule $r''= H''/t\tran{a?}u$ such that:
\begin{enumerate}[$(i)$]
\item\label{item:pos} for all $i\in\{1,\ldots, n\}$, the positive trigger for variable $x_i$ in $r''$ is included in 
the	positive trigger for variable $x_i$ in $r$;

\item\label{item:neg} for all $i\in\{1,\ldots, n\}$, the negative trigger for variable $x_i$ in $r''$ is included in 
the	negative trigger for variable $x_i$ in $r$; and

\item\label{item:u} if $x_i \tran{b?} z\in H''$ and $z\in var(u)$, then $x_i \tran{b?} z$ is also contained in 
$H'$. 		
\end{enumerate}

Let $\psi'\in t^{-1}_R(\eexis{a?}\varphi)$ be constructed from $r''$ and $\phi\in u'^{-1}(\varphi)$. By 
Lemma~\ref{lem:same_decomposition}, $\psi(x)\Rightarrow\psi'(x)$ for each $x\in var(t)$. Therefore, 
$\sigma(x)=\sigma_r(x)\models\psi'(x)$, for all $x\in var(t)$.

Our goal is to use rule $r''$ to show that $\sigma(t)\tran{a?}p$ for some $p$ such that $p\models\varphi$. To 
this end, we will construct a substitution $\sigma''$ with the following properties:
\begin{inparaenum}[(1)]
	\item\label{q1} $\sigma''(t)=\sigma(t)$,
	\item\label{q2} $\sigma''(x)\models H''$, 
	\item\label{q3} $\sigma''(x)\models\phi(x)$ for $x\in var(u')$.
\end{inparaenum}
We note that using $\sigma''$ and rule $r''$, we then have, by (\ref{q1}) and (\ref{q2}), that 
$\sigma(t)\tran{a?}\sigma''(u')$ and $\sigma''(u')\models\varphi$ by (\ref{q3}) and the inductive hypothesis. 
So to complete the proof we are left to construct $\sigma''$. 

We start by setting $\sigma''(x)=\sigma(x)$ for each $x\in var(t)$. This ensures (\ref{q1}). Moreover, since 
$\sigma_r(x)=\sigma(x)=\sigma''(x)$ for each $x\in var(t)$ and $\sigma_r$ satisfies all the negative premises 
in $H$ by item (\ref{item:A}), by condition $\ref{item:neg}$ $\sigma''$ satisfies all the negative premises in $H''$. 
We now show how to extend the definition of $\sigma''$ in order to satisfy also the positive premises in $H''$.
\begin{itemize}
\item If $x_i\tran{b?}y\in H''$ and $y\notin var(u')$, then by condition~(\ref{item:pos}) above there is a positive premise 
$x_i\tran{b?}y'\in H$. Since $\sigma_r(x_i)=\sigma(x_i)=\sigma''(x_i)$ and $\sigma_r\models H$, there is some 
$p'$ such that $\sigma_r(x_i)\tran{b?}p'$. We set $\sigma''(y)=p'$.

\item If $x_i\tran{b?}y\in H''$ and $y\in var(u')$, then $\sigma''(x_i)=\sigma_r(x_i)=\sigma(x_i)\models\psi'(x)$. 
Now, by construction, $\psi'(x)$ has a conjunct $\eexis{b?}\phi(y)$. Again, by condition~(\ref{item:pos}, there is some 
$p'$ such that $\sigma_r(x_i)\tran{b?}p'\models\phi(y)$. We can therefore set $\sigma''(y)=p'$ to guarantee item~(\ref{q3}) above for $y$.
\end{itemize}
Finally, note that if $x\in var(t)\cap var(u')$, $\psi'(x)$ has $\phi(x)$ as conjunct and therefore setting 
$\sigma''(x)=\sigma(x)$ guarantees (\ref{q3}) for those variables. 

This completes the definition of $\sigma''$ having properties (\ref{q1})--(\ref{q3}), and we are done.
\end{proof}

\section{Proof of Theorem~\ref{thm:quiescent}}\label{App:thmquiescent}

We start by establishing the following lemma, which states the correctness of the definition of contradictory 
sets of formulae in Definition~\ref{def:contradicts}.

\begin{lemma}\label{Lem:contracorr}
Assume that $H_1$ and $H_2$ are {contradictory} sets of transition formulae whose left-hand sides are over 
distinct variables $x_1,\ldots,x_n$. Let $\sigma$ be a substitution mapping variables to states in an LTS that 
satisfies $H_1$. Then there is no substitution $\rho$ such that $\rho(x_i) = \sigma(x_i)$ for $1\leq i\leq n$ and 
$\rho$ satisfies $H_2$.
\end{lemma}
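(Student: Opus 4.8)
The plan is to proceed by cases according to the definition of contradictory sets of formulae (Definition~\ref{def:contradicts}): since $H_1$ and $H_2$ are contradictory, there are subsets $H_1'\subseteq H_1$ and $H_2'\subseteq H_2$ that contradict each other in one of the three listed ways. Throughout, the key observation is that all left-hand sides of formulae in $H_1\cup H_2$ are among $x_1,\ldots,x_n$, and $\rho(x_i)=\sigma(x_i)$ for each $i$; hence the \emph{source} state of every premise is evaluated identically by $\sigma$ and $\rho$, and only the targets of positive premises may differ, which is irrelevant to whether a source can or cannot fire a given action. I would argue by contradiction, assuming that some $\rho$ with $\rho(x_i)=\sigma(x_i)$ satisfies $H_2$ (hence $H_2'$) while $\sigma$ satisfies $H_1$ (hence $H_1'$), and then derive an impossibility about the common state $\sigma(x)=\rho(x)$ attached to the variable $x$ shared by the two contradicting subsets.

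First I would treat the case in which $H_1'$ and $H_2'$ are $\{x\tran{a}y\}$ and $\{x\notran{a}\}$ in some order. One subset forces $\sigma(x)=\rho(x)\tran{a}$ and the other forces $\sigma(x)=\rho(x)\notran{a}$, an immediate contradiction; the two orderings are symmetric. Next, for the pair $\{x\tran{b!}y\}$ and $\{x\tran{\delta!}z\}$ with $b!\in O\setminus\{\delta!\}$, I would invoke the coherent-quiescence requirement of Definition~\ref{dfn:lts}: a state that performs $\delta!$ refuses every genuine output, and conversely a state performing a genuine output $b!$ cannot perform $\delta!$. Whichever way the two positive premises are distributed between $H_1'$ and $H_2'$, the common state $\sigma(x)=\rho(x)$ is forced simultaneously to perform $\delta!$ and to perform $b!$, which the quiescence axiom forbids.

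The main obstacle, and the only genuinely non-syntactic case, is the third one: $H_1'$ and $H_2'$ are non-empty with $H_1'\cup H_2'=\{x\notran{b!}\mid b!\in O\}$. Since $\sigma$ satisfies $H_1'$, $\rho$ satisfies $H_2'$, and $\sigma(x)=\rho(x)$, the common state $\sigma(x)$ would refuse \emph{every} output action in $O$, including $\delta!$. I would then appeal once more to Definition~\ref{dfn:lts}: whenever $\sigma(x)\notran{a!}$ for every $a!\in O\setminus\{\delta!\}$, the coherence condition forces $\sigma(x)\tran{\delta!}\sigma(x)$, so $\sigma(x)$ cannot also refuse $\delta!$. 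Thus no state can refuse all of $O$, the assumption on $\rho$ is untenable, and the lemma follows by collecting the three cases. This case is where the quiescence axiom is used essentially, rather than merely syntactically as in the first two.
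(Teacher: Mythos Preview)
Your proposal is correct and follows essentially the same approach as the paper: a case analysis on the three clauses of Definition~\ref{def:contradicts}, exploiting that $\sigma$ and $\rho$ agree on the source variables and invoking the coherent-quiescence requirement of Definition~\ref{dfn:lts} where needed. The paper phrases the third case slightly differently (``every state in an LTS performs at least one output-labelled transition'') but this is exactly the consequence of the quiescence axiom you spell out.
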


\begin{proof}
Since $H_1$ and $H_2$ are {contradictory}, there are $H_1' \subseteq H_1$ and $H_2' \subseteq H_2$ such 
that $H_1'$ and $H_2'$ contradict each other. We proceed by a case analysis on the possible form $H_1'$ 
and $H_2'$ may take, following Definition~\ref{def:contradicts}.
\begin{itemize}
\item Assume that $H_1'=\{x\tran{a}y\}$ and $H_2'=\{x\notran{a}\}$ for some $a\in L$. Since 
$\sigma(x)\tran{a}\sigma(y)$, there is no substitution $\rho$ such that $\rho(x) = \sigma(x) \notran{a}$.

\item Assume that $H_1'=\{x\tran{b!}y\}$ and $H_2'=\{x\tran{\delta!}z\}$ for some $b!\in O\setminus 
\{\delta!\}$.  Since $\sigma(x)\tran{b!}\sigma(y)$ and no state that can perform such a transition is quiescent, 
there is no substitution $\rho$ such that $\rho(x) = \sigma(x) \tran{\delta!}$. 

\item Assume that $H_1' \cup H_2' = \{x\notran{b!}\mid b!\in O\}$ for some variable $x$. Since $\sigma(x)$ satisfies $H_1'$ and every state in an LTS performs at least one output-labelled transition, we have that $\sigma(x)$ cannot satisfy $H_2'$ and we are done. 
\end{itemize}

The cases resulting by swapping the role of $H_1'$ and $H_2'$ in each item above are handled by symmetric arguments. 
\end{proof}

We are now ready to prove prove Theorem~\ref{thm:quiescent}. We will show the two implications separately.

\begin{proof}[\textbf{Proof of Theorem~\ref{thm:quiescent}}]
We assume that $f$ is quiescent consistent and prove Property~(\ref{eq:CBQ}):
\[
f(\arr{p})\tran{\delta!}p'\text{ iff } p'=f(\arr{p})\text{ and, for each  } a!\in O\setminus\{\delta!\},~ 
f(\arr{p})\notran{a!} . 
\]
	
\emph{Only if implication}: Assume that $f(\arr{p})\tran{\delta!}p'$, where $\arr{p}=p_1,\ldots,p_n$ is a 
sequence of states in some LTS. Then there are a rule
\begin{mathpar}
	\inferrule*[left={r}]
	{H}
	{f(\arr{x})\tran{\delta!}t}
\end{mathpar}
and a substitution $\sigma$ such that $\sigma(f(\arr{x}))=f(\arr{p})$, $\sigma(t)=p'$ and $\sigma$ satisfies 
the premises in $H$.  By condition~[$\delta_1$] in Definition~\ref{def:quiescent_consistent}, we have that 
$t=f(\arr{y})$ for some vector of variables $\arr{y}$ such that each $y_i$ is either $x_i$ or $x_i \tran{\delta!} 
y_i\in H$. By the requirement on $\tran{\delta!}$ in Definition~\ref{dfn:lts} and the fact that $\sigma$ 
satisfies the premises in $H$, we infer that $\sigma(y_i) = p_i$ for each $x_i \tran{\delta!} y_i\in H$. Thus $p' 
= \sigma(t)= \sigma(f(\arr{y})) = f(\arr{p})$. We are therefore left to show that $f(\arr{p})\notran{b!}$ for each 
$b!\in O\setminus\{\delta!\}$. To this end, it is enough to prove that if
\begin{mathpar}
	\inferrule*[left={r'}]
	{H'}
	{f(\arr{x})\tran{b!}t'}
\end{mathpar}
is a rule for $f$, then no substitution $\rho$ that agrees with $\sigma$ over $\arr{x}$ can satisfy $H'$. In 
order to see this, observe that, by condition~[$\delta_1$] in Definition~\ref{def:quiescent_consistent}, the 
sets $H$ and $H'$ are contradictory.  Since $\sigma$ satisfies $H$, by Lemma~\ref{Lem:contracorr}, no 
substitution $\rho$ that agrees with $\sigma$ over $\arr{x}$ can satisfy $H'$.\\  

\emph{If implication}: Assume now that $f(\arr{p})\notran{b!}$ for all $b!\in O\setminus\{\delta!\}$. We will 
argue that $f(\arr{p})\tran{\delta!}f(\arr{p})$. Let $\{r_1,\ldots r_n\}$ be the set of output-emitting rules for $f$ 
not having $\delta!$ as label of their conclusions. Because of our assumption above, for each $r_i$ there is 
some premise $l_i$ of $r_i$ for some variable $x_j$ in $\arr{x}$ that $p_j$ does not satisfy. For each $1\leq i 
\leq n$, we define $\overline{l}_i$ as follows: 
\[
\overline{l}_i = 
\begin{cases}
x_j \tran{a} y_i &\text{ if } l_i = x_j   \notran{a} \\ 
x_j   \notran{a}  &\text{ if } l_i = x_j \tran{a} y, \text{ for some $y$}. 
\end{cases}
\]

Here the variables $y_i$'s are all different and distinct from the variables in $\arr{x}$.  By 
condition~[$\delta_2$] in Definition~\ref{def:quiescent_consistent}, we have that the set of rules for $f$ 
includes the rule
\begin{mathpar}
	\inferrule*[left={r},right={\normalsize .}]
	{\{\overline{l}_1,\ldots \overline{l}_n\}}
	{f(\arr{x})\tran{\delta!}f(\arr{x})}
\end{mathpar}

Indeed, no two sets of formulae included in $\{\overline{l}_1,\ldots \overline{l}_n\}$ contradict each other 
because of the way that set of formulae is constructed.

It is now easy to see that there exists a substitution $\rho$ such that $\rho(f(\arr{x}))=f(\arr{p})$ and $\rho$ 
satisfies $\{\overline{l}_1,\ldots \overline{l}_n\}$. Hence, we conclude that $f(\arr{p})\tran{\delta!}f(\arr{p})$, 
and we are done.
\end{proof}

\end{document}